\documentclass[ecta,nameyear,final]{econsocart}
\RequirePackage[colorlinks,citecolor=blue,linkcolor=blue,urlcolor=blue,pagebackref]{hyperref}

\startlocaldefs

\usepackage{ragged2e}
\usepackage{graphicx}
\usepackage{float}
\usepackage{amsthm}
\usepackage{amsmath} 
\usepackage{amsfonts}
\usepackage{mathtools}
\usepackage{dsfont} 
\usepackage{amssymb}
\usepackage{bbm}
\usepackage{bm}
\mathtoolsset{showonlyrefs}
\usepackage{subcaption}
\usepackage{letltxmacro}
\usepackage{scalerel}
\usepackage{tabularx,booktabs}
\usepackage{multirow}
\usepackage{stackrel}
\usepackage{adjustbox}
\usepackage{tikz,lipsum,lmodern}
\usepackage[most]{tcolorbox}

\theoremstyle{plain}

\newtheorem{theorem}{Theorem}

\newtheorem{lemma}{Lemma}

\newtheorem{cor}{Corollary}
\newtheorem{assumption}{Assumption}
\theoremstyle{remark}

\newtheorem{remark}{Remark}

\theoremstyle{plain}

\DeclareRobustCommand{\Rb}{\mathbb{R}}
\DeclareRobustCommand{\bD}{\boldsymbol{D}}

\DeclareRobustCommand{\bA}{\boldsymbol{A}}
\DeclareRobustCommand{\bH}{\boldsymbol{H}}

\DeclareRobustCommand{\bU}{\boldsymbol{U}}
\DeclareRobustCommand{\bV}{\boldsymbol{V}}
\DeclareRobustCommand{\bS}{\boldsymbol{S}}

\DeclareRobustCommand{\bB}{\boldsymbol{B}}

\newcommand{\bI}{\boldsymbol{I}}

\newcommand{\bY}{\boldsymbol{Y}}

\newcommand{\ba}{\boldsymbol{a}}

\DeclareRobustCommand{\bSigma}{\boldsymbol{\Sigma}}

\DeclareRobustCommand{\bhV}{\widehat{\boldsymbol{V}}}

\DeclareRobustCommand{\Rb}{\mathbb R}

\DeclareRobustCommand{\Pb}{\mathbb P}




\DeclareMathOperator{\Cov}{Cov}
\DeclareMathOperator{\Var}{Var}

\DeclareMathOperator{\tr}{tr}

\DeclareMathOperator*{\argmin}{\arg\!\min}

\newcommand{\hM}{\widehat{M}}

\newcommand{\hY}{\widehat{Y}}
\newcommand{\halpha}{\widehat{\alpha}}
\newcommand{\talpha}{\tilde{\alpha}}

\newcommand{\distas}[1]{\mathbin{\overset{#1}{\kern\z@\sim}}}%
\newsavebox{\mybox}\newsavebox{\mysim}
\newcommand{\distras}[1]{%
  \savebox{\mybox}{\hbox{\kern3pt$\scriptstyle#1$\kern3pt}}%
  \savebox{\mysim}{\hbox{$\sim$}}%
  \mathbin{\overset{#1}{\kern\z@\resizebox{\wd\mybox}{\ht\mysim}{$\sim$}}}%
}

\newcommand{\htheta}{\widehat{\theta}}
\newcommand{\tbeta}{\tilde{\beta}}

\newcommand{\hbeta}{\widehat{\beta}}


\newcommand{\Ex}{\mathbb{E}}

\newcommand{\hv}{\widehat{v}} 



\newcommand{\sdid}{\text{sdid}}

\newcommand{\SDID}{\textsf{SDID}}
\newcommand{\DID}{\textsf{DID}}
\newcommand{\ESDID}{\emph{\textsf{SDID}}}

\newcommand{\asc}{\text{asc}}

\newcommand{\EASC}{\emph{\textsf{ASC}}}
\newcommand{\ASC}{\textsf{ASC}}

\newcommand{\vt}{\text{vt}} 
\newcommand{\hz}{\text{hz}} 
\newcommand{\evt}{\emph{vt}} 
\newcommand{\ehz}{\emph{hz}}
\newcommand{\mix}{\text{mix}} 
\newcommand{\emix}{\emph{mix}}

\newcommand{\ols}{\text{ols}}

\newcommand{\bone}{\boldsymbol{1}} 
\newcommand{\bzero}{\boldsymbol{0}} 
\newcommand{\hnu}{\widehat{\nu}} 

\newcommand{\EHZ}{\emph{\textsf{HZ}}}
\newcommand{\EVT}{\emph{\textsf{VT}}}

\newcommand{\HZ}{\textsf{HZ}}
\newcommand{\VT}{\textsf{VT}}

\newcommand{\hsigma}{\widehat{\sigma}} 
\newcommand{\hvarepsilon}{\widehat{\varepsilon}} 
\newcommand{\bgamma}{\boldsymbol{\gamma}} 
\newcommand{\bhgamma}{\widehat{\bgamma}} 
\newcommand{\hgamma}{\widehat{\gamma}} 
\newcommand{\by}{\boldsymbol{y}} 
\newcommand{\bb}{\boldsymbol{b}}

\newcommand{\bhSigma}{\widehat{\bSigma}} 

\newcommand{\balpha}{\boldsymbol{\alpha}} 
\newcommand{\bbeta}{\boldsymbol{\beta}} 
 
\newcommand{\bu}{\boldsymbol{u}} 
\newcommand{\bv}{\boldsymbol{v}} 
\newcommand{\bz}{\boldsymbol{z}} 
\newcommand{\bx}{\boldsymbol{x}} 
 
\newcommand{\bDelta}{\boldsymbol{\Delta}} 
\newcommand{\bGamma}{\boldsymbol{\Gamma}} 
\newcommand{\bvarepsilon}{\boldsymbol{\varepsilon}} 
\newcommand{\bhvarepsilon}{\widehat{\bvarepsilon}} 
\newcommand{\btheta}{\boldsymbol{\theta}} 
\newcommand{\bhalpha}{\widehat{\balpha}} 
\newcommand{\bhbeta}{\widehat{\bbeta}} 
\newcommand{\bhtheta}{\widehat{\btheta}} 
 
\newcommand{\btalpha}{\tilde{\balpha}} 
\newcommand{\btbeta}{\tilde{\bbeta}} 

\newcommand{\jack}{\text{jack}}
\newcommand{\ejack}{\emph{jack}}
\newcommand{\hrk}{\text{HRK}} 
\newcommand{\ehrk}{\emph{HRK}} 
\newcommand{\homo}{\text{homo}} 
\newcommand{\ehomo}{\emph{homo}}  
\newcommand{\hzeta}{\widehat{\zeta}} 
\newcommand{\bhzeta}{\widehat{\boldsymbol{\zeta}}}

\newcommand{\bOmega}{\boldsymbol{\Omega}}

\endlocaldefs

\begin{document}

\begin{frontmatter}

\title{
Same Root Different Leaves: Time Series and Cross-Sectional Methods in Panel Data
}
\runtitle{
Time Series and Cross-Sectional Methods in Panel Data
}

\begin{aug}
%
%
%
\author[id=au1,addressref={add1}]{\fnms{Dennis}~\snm{Shen}\ead[label=e1]{dshen24@berkeley.edu}}
\author[id=au2,addressref={add2}]{\fnms{Peng}~\snm{Ding}\ead[label=e2]{pengdingpku@berkeley.edu}}
\author[id=au3,addressref={add3}]{\fnms{Jasjeet}~\snm{Sekhon}\ead[label=e3]{jasjeet.sekhon@yale.edu}}
\author[id=au4,addressref={add4}]{\fnms{Bin}~\snm{Yu}\ead[label=e4]{binyu@berkeley.edu}}
\address[id=add1]{%
\orgdiv{Simons Institute for the Theory of Computing},
\orgname{University of California, Berkeley}}

\address[id=add2]{%
\orgdiv{Department of Statistics},
\orgname{University of California, Berkeley}}

\address[id=add3]{%
\orgdiv{Departments of Statistics \& Data Science and Political Science},
\orgname{Yale University}}

\address[id=add4]{%
\orgdiv{Departments of Statistics and EECS},
\orgname{University of California, Berkeley}}
\end{aug}

\support{
We sincerely thank Alberto Abadie, Avi Feller, Guido Imbens, and Devavrat Shah for their thoughtful comments and insightful feedback. 
We gratefully acknowledge support from NSF grants 1945136, 1953191, 2022448, 2023505 on Collaborative Research: Foundations of Data Science Institute (FODSI), and ONR grant N00014-17-1-2176.
The data and code to reproduce the results in this article are available at \href{https://github.com/deshen24/panel-data-regressions}{https://github.com/deshen24/panel-data-regressions}.}
%

\begin{abstract}
A central goal in social science is to evaluate the causal effect of a policy. 
One dominant approach is through panel data analysis in which the behaviors of multiple units are observed over time. 
The information across time and space motivates two general approaches: 
(i) horizontal regression (i.e., unconfoundedness), which exploits time series patterns, 
and (ii) vertical regression (e.g., synthetic controls), which exploits cross-sectional patterns. 
Conventional wisdom states that the two approaches are fundamentally different. 
We establish this position to be partly false for estimation but generally true for inference.
In particular, we prove that both approaches yield identical point estimates under several standard settings.
For the same point estimate, however, each approach quantifies uncertainty with respect to a distinct estimand. 
The confidence interval developed for one estimand may have incorrect coverage for another. 
This emphasizes that the source of randomness that researchers assume has direct implications for the accuracy of inference. 
\end{abstract} 

\begin{keyword}
\kwd{horizontal regression}
\kwd{vertical regression}  
\kwd{unconfoundedness}
\kwd{synthetic controls}
\kwd{causal inference} 
\kwd{minimum norm estimators}
\end{keyword}

\end{frontmatter}

\section{Introduction} \label{sec:intro} 
In a seminal paper, \cite{abadie1} set out to investigate the economic impact of terrorism in Basque Country. 
Prior to the outset of terrorist activity in the early 1970's, Basque Country was considered to be one of the wealthiest regions in Spain. 
After thirty years of turmoil, however, its economic activity dropped substantially relative to its neighboring regions. 
Although intuition affirms that Basque Country's economic downturn can be attributed, at least partially, to its political and civil unrest, it is difficult to quantitatively isolate the economic costs of conflict. 
In response to this challenge, \cite{abadie1} introduced the synthetic controls framework. 
At its core, synthetic controls constructs a synthetic Basque Country from a weighted composition of control regions that are largely unaffected by the instability to estimate Basque Country's economic evolution in the absence of terrorism. 
This novel concept has inspired an entire subliterature within econometrics 
 that is ``arguably the most important innovation in the policy evaluation literature in the last 15 years'' \citep{athey_imbens}. 

Researchers have historically tackled problems of this flavor using repeated observations of units across time, i.e., panel data, where a subset of units are exposed to a treatment during some time periods while the other units are unaffected.  
In the study above, the per capita gross domestic product (GDP) of $17$ Spanish regions are measured from 1955--1998. 
Basque Country is the sole treated unit and the remaining regions are the control units; the pre- and post-treatment periods are defined as the time horizons before and after the first wave of terrorist activity, respectively. 

Synthetic controls has become a cornerstone for panel studies in recent years and across numerous fields. 
Beforehand, the unconfoundedness approach \citep{rubin_rosenbaum, imbens_wooldridge} served as a common workhorse. 
Whereas synthetic controls posits a relation between treated and control units that is stable across time, unconfoundedness posits a relation between treated and pretreatment periods that is stable across units. 
Accordingly, synthetic controls exploits cross-sectional correlation patterns while unconfoundedness exploits time series correlation patterns.
Considering the panel data format, unconfoundedness and synthetic controls based methods are commonly referred to as horizontal (\HZ) and vertical (\VT) regressions, respectively. 
Given their conceptual and computational distinctions, the two approaches are considered to be fundamentally different \citep{mc_panel}. 
%
%

Yet, contrary to conventional wisdom, it turns out that \HZ~and \VT~regressions can yield identical point estimates. 
As Figure~\ref{fig:teaser_est.1} shows, when the regression models are learned via ordinary least squares (OLS) or principal component regression (PCR), then the two approaches produce the same economic evolution for Basque Country in the absence of terrorism. 
Figure~\ref{fig:teaser_est.2}, by contrast, shows that when the regression models are learned via lasso or lie within the simplex---as proposed by \cite{abadie1} for \VT~regression---then the two approaches output contrasting economic trajectories. 
%
\begin{figure} [!t]
	\centering 
	\begin{subfigure}[b]{0.32\textwidth}
		\centering 
		\includegraphics[width=\linewidth]
		{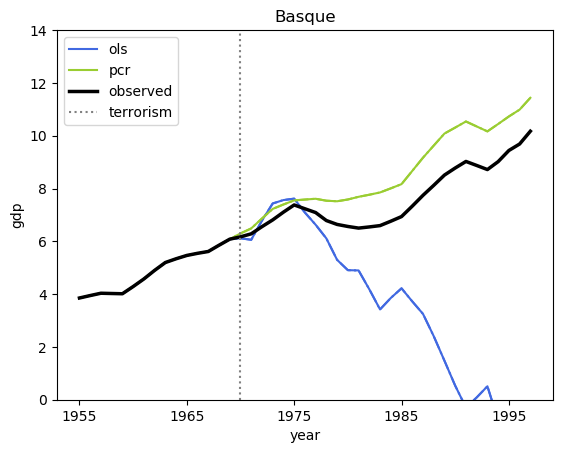}
		\caption{Symmetric regressions.} 
		\label{fig:teaser_est.1} 
	\end{subfigure} 
	\qquad \quad
	\begin{subfigure}[b]{0.32\textwidth}
		\centering 
		\includegraphics[width=\linewidth]
		{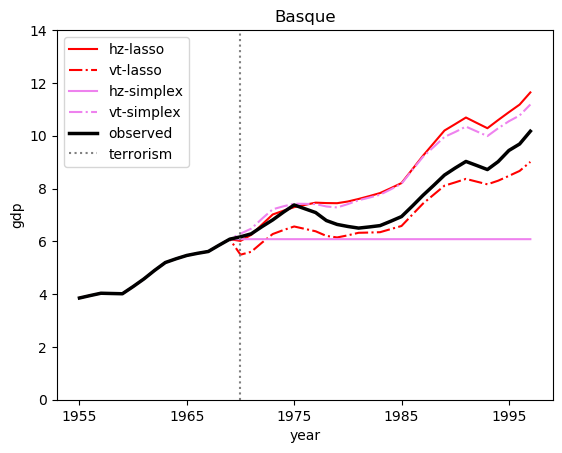}
		\caption{Asymmetric regressions.} 
		\label{fig:teaser_est.2} 
	\end{subfigure} 
	\caption{
	\ref{fig:teaser_est.1}: Estimates of OLS with minimum $\ell_2$-norm and PCR. 
	\ref{fig:teaser_est.2}: Estimates of lasso and simplex regression. 
	\HZ~and \VT~estimates correspond to colored solid and dashed-dotted lines, respectively. 
	The outset of terrorism is the vertical line 
	and Basque Country's observed GDP is in solid black. 
	}
	\label{fig:teaser_est} 
\end{figure}  
Curiously, Figure~\ref{fig:teaser_inf} indicates that even when the two regressions arrive at the same point estimate, the confidence intervals can be markedly different under different sources of randomness. 
\begin{figure} [!t]
	\centering 
	\begin{subfigure}[b]{0.32\textwidth}
		\includegraphics[width=\linewidth]
		{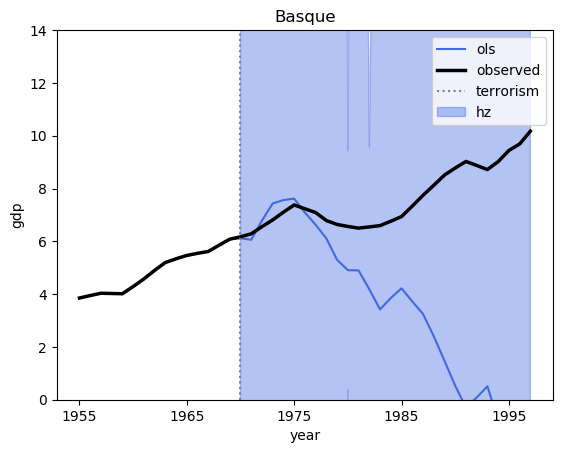}
		\caption{OLS under \HZ~model.} 
		\label{fig:basque_hz_jack}
	\end{subfigure} 
	\qquad \quad
	\begin{subfigure}[b]{0.32\textwidth}
		\includegraphics[width=\linewidth]
		{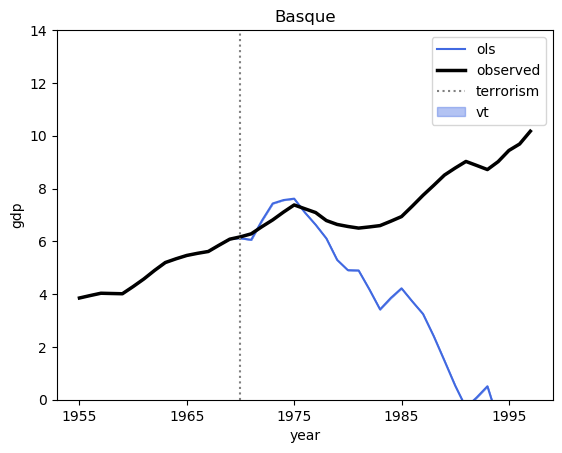}
		\caption{OLS under \VT~model.} 
		\label{fig:basque_vt_jack} 
	\end{subfigure} 
	\\
	\begin{subfigure}[b]{0.32\textwidth}
		\centering 
		\includegraphics[width=\linewidth]
		{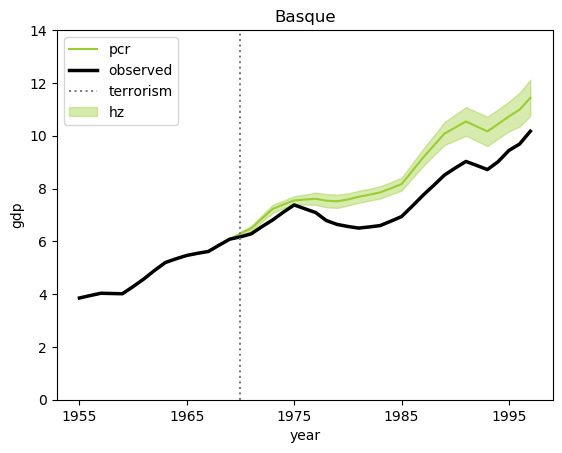}
		\caption{PCR under \HZ~model.} 
	\end{subfigure} 
	\qquad \quad
	\begin{subfigure}[b]{0.32\textwidth}
		\centering 
		\includegraphics[width=\linewidth]
		{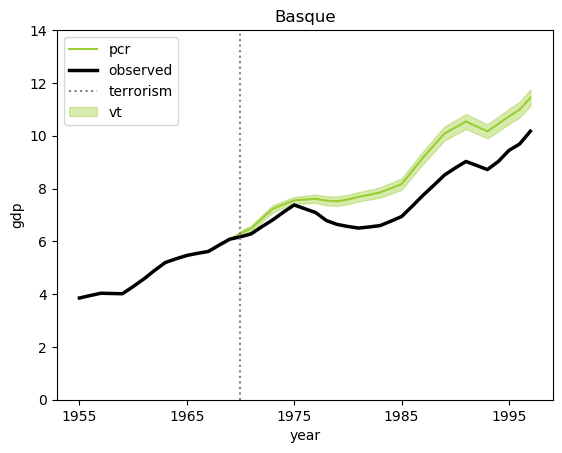}
		\caption{PCR under \VT~model.} 
	\end{subfigure} 
	\caption{
	Confidence intervals for OLS (top) and PCR (bottom) constructed from \HZ-based~(left) and \VT-based~(right) generative models, which are precisely defined in Section~\ref{sec:assumptions.mb}. 
	}
	\label{fig:teaser_inf} 
\end{figure}  
The juxtaposition of these figures beg two questions: 
\medskip 
\begin{tcolorbox}[colback=gray!10!white,colframe=black!75!black]
	\begin{center}
	{
		{\bf Q1:} ``When are \HZ~and \VT~point estimates identical?'' 
		\\ 
		{\bf Q2:} ``When the point estimates are identical, how does the source of randomness impact inference?''
	} 
	\end{center} 
\end{tcolorbox}

\noindent 
{\bf Contribution.} 
This article tackles Q1--Q2 from first principles. 
In this endeavor, we begin by classifying several widely studied regression formulations into 
(i) a symmetric class that yields identical point estimates and (ii) an asymmetric class that yields contrasting point estimates.  
Within the symmetric class, we study properties of the estimator with randomness stemming from 
(i) time series patterns,
(ii) cross-sectional patterns,
and (iii) both patterns. 
We conduct our analysis from a (i) model-based perspective, which attributes randomness to the potential outcomes, and a (ii) design-based perspective, which attributes randomness to the treatment assignment mechanism. 
In both frameworks, we find that the source of randomness has large implications for the estimand and inference. 
Under the model-based framework, we construct confidence intervals for each source of randomness. 
Through data-inspired simulations and empirical applications, we demonstrate that the confidence interval developed for one estimand often has incorrect coverage for another estimand. 
Taken together, our results emphasize that the source of randomness that researchers assume has direct implications for the accuracy of the inference that can be conducted.

\smallskip 
\noindent
{\bf Organization.} 
Section~\ref{sec:setup} overviews the panel data framework. 
Sections~\ref{sec:estimation}--\ref{sec:inference} provide one set of answers for Q1--Q2. 
Section~\ref{sec:illustrations} illustrates concepts developed in this article. 
Section~\ref{sec:conclusion} summarizes our findings. 
%
Details of simulations and empirical applications, select discussions, and mathematical proofs are relegated to the appendix.  

\smallskip 
\noindent
{\bf Notation.} 
Let $\bI$ be the identity matrix.
Let $\bone$ and $\bzero$ be the vectors of ones and zeros, respectively. 
The curled inequality denotes $\succeq$ the generalized inequality, i.e., componentwise inequality between vectors and matrix inequality between symmetric matrices.
Let $\circ$ denote the componentwise product. 
For vectors $\ba$ and $\bb$, let $\langle a, b \rangle = a' b$ denote the inner product. 
%
%
Let $\tr(\bA)$ denote the trace of $\bA$. 
We define $0/0 = 0$ when applicable.

\section{The Panel Data Framework} 
\label{sec:setup} 
We anchor on the Basque study to introduce the panel data framework. 
Panel data contains observations of $N$ units over $T$ time periods.  
The Basque study, for instance, consists of per capita GDP across $N=17$ Spanish regions over $T=43$ years. 
In each time period $t$, each unit $i$ is characterized by two potential outcomes, $Y_{it}(0)$ and $Y_{it}(1)$, which correspond to its outcome in the absence and presence of a binary treatment, respectively. 
The potential outcomes framework posits that each region possesses two possible levels of economic activity each year, one that is immune to terrorism and another that is affected by terrorism. 
In reality, however, 
we can only observe one economic state, $Y_{it}(0)$ or $Y_{it}(1)$---this is the fundamental challenge of causal inference. 

Let $Y_{it}$ be the observed outcome. 
Often, we observe all $N$ units without treatment (control) for $T_0$ time periods, i.e., $Y_{it} = Y_{it}(0)$ for all $i \le N$ and $t \le T_0$. 
For the remaining $T_1 = T-T_0$ time periods,  
$N_1$ units receive treatment while the remaining $N_0 = N-N_1$ units remain under control, i.e., if we arbitrarily label the first $N_0$ units as the control group, then 
$Y_{it} = Y_{it}(1)$ for all $i > N_0$ and $t > T_0$, and $Y_{it} = Y_{it}(0)$ for all $i \le N_0$ and $t > T_0$. 
In our study, Basque Country is the single treated unit, thus $N_1 = 1$ and $N_0=16$. 
The first wave of terrorist activity partitions the time horizon into pre- and post-treatment periods of lengths $T_0=15$ and $T_1=28$ years, respectively. 

For ease of exposition, this article considers a single treated unit and single treated period indexed by the $N$th unit and $T$th time period, respectively. 
However, our results hold for any $(i,t)$ pair where $i > N_0$ is a treated unit and $t > T_0$ is a treated period. 
We organize our observed control data into an $N \times T$ matrix, $\bY = [Y_{it}]$, as shown in Figure~\ref{fig:matrix}.   
In our example,  
$\by_N = [Y_{Nt}: t \le T_0] \in \Rb^{T_0}$ represents Basque Country's economic evolution prior to the outset of terrorism; 
$\bY_0 = [Y_{it}: i \le N_0, t \le T_0] \in \Rb^{N_0 \times T_0}$ represents the control regions' economic evolution prior to the outset of terrorism; 
and $\by_T = [Y_{iT}: i \le N_0] \in \Rb^{N_0}$ represents the control regions' economic evolution after the outset of terrorism. 
Our object of interest is Basque Country's counterfactual GDP in the absence of terrorism, $Y_{NT}(0)$. 

\begin{figure} [!t]
	\centering 
	\includegraphics[width=0.65\linewidth]
	{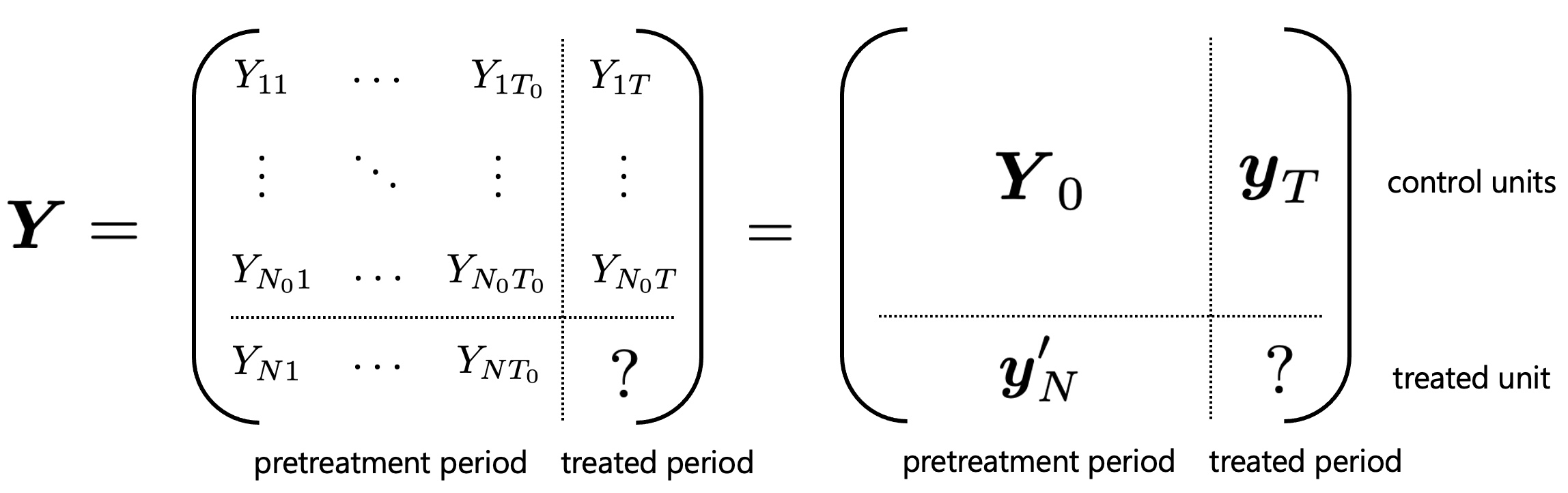}
	\caption{
	Panel data format with rows and columns indexed by units and time, respectively.  
	}
	\label{fig:matrix} 
\end{figure} 

\subsection{Time Series Versus Cross-Sectional Based Regressions} 
The information across time and space motivates two natural ways to impute 
the missing $(N,T)$th entry. 
These perspectives are explored in two large and mostly separate bodies of work \citep{mc_panel}. 

\subsubsection{Horizontal Regression and Unconfoundedness} 
The unconfoundedness literature operates on the concept that ``history is a guide to the future''.  
As such, unconfoundedness methods express outcomes in the treated period as a weighted composition of outcomes in the pretreatment periods.  
This is carried out by regressing the control units' treated period outcomes $\by_T$ on its lagged outcomes $\bY_0$ and applying the learned regression coefficients to the treated unit's lagged outcomes $\by_N$ to predict the missing $(N,T)$th outcome. 
%
%
Following \cite{mc_panel}, we refer to such methods as horizontal (\HZ) regression. 

\subsubsection{Vertical Regression and Synthetic Controls} 
The synthetic controls literature is built on the concept that ``similar units behave similarly''. 
Therefore, synthetic controls methods express the treated unit's outcomes as a weighted composition of control units' outcomes. 
This is carried out by regressing the treated unit's lagged outcomes $\by_N$ on the control units' lagged outcomes $\bY'_0$ and applying the learned regression coefficients to the control units' treated period outcomes $\by_T$ to predict the missing $(N,T)$th outcome. 
%
%
Following \cite{mc_panel}, we refer to such methods as vertical (\VT) regression. 

\subsubsection{Conventional Wisdom} 
The asymmetry between \HZ~and \VT~regressions has created the conception that they are fundamentally different approaches \citep{mc_panel}. 
In fact, the unregularized forms of \HZ~and \VT~regressions are cautioned against when $T > N$ and $N > T$, respectively \citep{abadie4, sc_enet, li_bell, mc_panel}. 
With regularization, however, \cite{mc_panel} argues the two approaches can be applied to the same setting. 
In turn, this allows the two approaches to be systematically compared through methods such as cross-validation. 

In parallel, the growth rates of the two literatures have also exhibited asymmetry. 
While the development of the unconfoundedness literature has seemingly plateaued, the synthetic controls literature continues to rapidly expand.
Across many domains, synthetic controls based methods are arguably the de facto approach for panel studies.

\section{Point Estimation} \label{sec:estimation} 
\smallskip 
\begin{tcolorbox}[colback=gray!10!white,colframe=black!75!black]
	\begin{center}
	{
		{\bf Q1:} ``When are \HZ~and \VT~point estimates identical?'' 
	} 
	\end{center} 
\end{tcolorbox}

We tackle Q1 by studying the finite-sample estimation properties of \HZ~and \VT~regressions. 
We denote the singular value decomposition of $\bY_0$ as 
$\bY_0 = \sum_{\ell =1}^{R} s_\ell \bu_\ell \bv'_\ell = \bU \bS \bV'$,
where $\bu_\ell \in \Rb^{N_0}$ and $\bv_\ell \in \Rb^{T_0}$ are the left and right singular vectors, respectively, $s_\ell \in \Rb$ are the ordered singular values, and $R = \text{rank}(\bY_0) \le \min\{N_0, T_0\}$. 
$\bU \in \Rb^{N_0 \times R}$ and $\bV \in \Rb^{T_0 \times R}$ denote the matrices formed by the left and right singular vectors, respectively, and $\bS \in \Rb^{R \times R}$ is the diagonal matrix of  singular values. 
The Moore-Penrose pseudoinverse of $\bY_0$ is 
$\bY_0^\dagger = \sum_{\ell=1}^R (1/s_\ell) \bv_\ell \bu'_\ell = \bV \bS^{-1} \bU'$. 
Critically, we do not place any assumptions on the relative magnitudes of $N$ and $T$. 
 
\subsection{Classifying Notable Regression Formulations} \label{sec:est.notable} 
We present several of the most widely studied regression formulations in the \HZ~and \VT~literatures. 
This list is far from exhaustive given the vastness of these literatures. 
%

\subsubsection{Description of Estimation Strategies}

\noindent 
{\bf Penalized regression.} 
A large class of penalized regressions are expressed as follows: 
\begin{itemize}
	\item [(a)] \HZ~regression:  for $\lambda_1, \lambda_2 \ge 0$, 
	\begin{align} 
		&\bhalpha = \argmin_{\balpha} ~\| \by_T - \bY_0 \balpha \|_2^2 + \lambda_1 \| \balpha \|_1 + \lambda_2 \|\balpha \|_2^2 \label{eq:hz.rr.1} 
		\\ &\hY_{NT}^\hz(0) = \langle \by_N, \bhalpha \rangle. \label{eq:hz.rr.2} 
	\end{align} 
	
	\item [(b)] \VT~regression: for $\lambda_1, \lambda_2 \ge 0$, 
	\begin{align} 
		&\bhbeta = \argmin_{\bbeta} ~\| \by_N - \bY'_0 \bbeta \|_2^2 + \lambda_1 \| \bbeta \|_1 + \lambda_2 \| \bbeta \|_2^2 \label{eq:vt.rr.1} 
		\\ &\hY_{NT}^\vt(0) = \langle \by_T, \bhbeta \rangle. \label{eq:vt.rr.2} 
	\end{align} 
\end{itemize} 
We overview common choices for $(\lambda_1, \lambda_2)$ and describe the corresponding strategy. 

\smallskip 
\noindent 
{\em I: Ordinary least squares (OLS).}
Arguably, the mother of all regressions is OLS, where $\lambda_1 = \lambda_2 = 0$. 
OLS is an unconstrained problem with possibly infinitely many solutions. 
OLS has been analyzed in numerous works on panel studies, including \cite{hcw}, \cite{li_bell}, and \cite{li2020}. 

\smallskip 
\noindent 
{\em II: Principal component regression (PCR).} 
To formalize PCR, let  
\begin{align} \label{eq:rank_k} 
	\bY_0^{(k)} = \sum_{\ell=1}^k s_\ell \bu_\ell \bv'_\ell
\end{align}
denote the rank $k < R$ approximation of $\bY_0$ that retains the top $k$ principal components. 
\HZ~and \VT~PCR corresponds to replacing $\bY_0$ with $\bY_0^{(k)}$ within \eqref{eq:hz.rr.1} and \eqref{eq:vt.rr.1}, respectively, with $\lambda_1 = \lambda_2 = 0$.  
In words, PCR first finds a $k$ dimensional representation of the covariate matrix via principal component analysis; 
then, PCR performs OLS with the compressed $k$ dimensional covariates.  
Within the synthetic controls literature, \cite{rsc, mrsc} and \cite{si} utilize PCR. 

\smallskip 
\noindent 
{\em III: Ridge regression.}  
Consider ridge regression, where $\lambda_1=0$ and $\lambda_2 > 0$. 
When $\bY_0$ is rank deficient, the gram matrix, i.e., $\bY'_0 \bY_0$ for \HZ~regression and $\bY_0 \bY'_0$ for \VT~regression, is ill-conditioned. 
This often discourages the usage of OLS. 
In these settings, ridge provides a remedy by adding a {\em ridge} on the diagonal of the gram matrix, which increases all eigenvalues by $\lambda_2$, thus removing the singularity problem. 
\cite{sc_aug} explores the properties of a doubly robust estimator that utilizes \HZ~ridge regression. 

\smallskip 
\noindent 
{\em IV: Lasso regression.} 
Consider lasso regression, where $\lambda_1 > 0$ and $\lambda_2 = 0$. 
Lasso has become a popular tool for estimating sparse linear coefficients in high-dimensional regimes.
Because the lasso criterion not strictly convex, there are possibly infinitely many solutions.
Thus, for our analysis of lasso only, 
we make the mild assumption that the entries of $\bY_0$ are drawn from a continuous distribution. 
As established in \cite{lasso_ryan}, this guarantees the lasso solution to be unique.
Several notable works in the synthetic controls literature, e.g., \cite{li_bell}, \cite{arco}, and \cite{sc_lasso1}, analyze the lasso. 

\smallskip 
\noindent 
{\em V: Elastic net regression.}  
Mixing both $\ell_1$ and $\ell_2$-penalties, i.e., $\lambda_1, \lambda_2 > 0$, is known as elastic net. 
At a high level, elastic net selects variables similar to the lasso, but deals with correlated variables more gracefully as with ridge.
When $\lambda_2 > 0$, the criterion is strictly convex so the solution is unique. 
\cite{sc_enet} propose an elastic net synthetic controls variant. 

\smallskip 
\noindent 
{\bf Constrained regression.} 

\noindent 
{\em VI: Simplex regression.} 
The next formulation constrains the regression weights to lie within the simplex, i.e., the weights are nonnegative and sum to one: 
\begin{itemize}
	\item [(a)] \HZ~regression: for $\lambda \ge 0$, 
	\begin{align} 
		&\bhalpha = \argmin_{\balpha} ~\| \by_T - \bY_0 \balpha \|_2^2 + \lambda \| \balpha \|_2^2 \, \,\text{~subject to~} \, \balpha' \bone = 1, \balpha \succeq \bzero \label{eq:hz.cvx.1} 
		\\ &\hY_{NT}^\hz(0) = \langle \by_N, \bhalpha \rangle. \label{eq:hz.cvx.2} 
	\end{align} 
	
	\item [(b)] \VT~regression: for $\lambda \ge 0$, 
	\begin{align} 
		&\bhbeta = \argmin_{\bbeta} ~\| \by_N - \bY_0' \bbeta \|_2^2 + \lambda \| \bbeta \|_2^2 \, \, \text{~subject to~} \, \bbeta' \bone = 1, \bbeta \succeq \bzero \label{eq:vt.cvx.1} 
		\\ &\hY_{NT}^\vt(0) = \langle \by_T, \bhbeta \rangle. \label{eq:vt.cvx.2} 
	\end{align} 
\end{itemize} 
We consider a vanishing $\ell_2$ penalty since $\lambda = 0$ (standard formulation) can induce multiple minima \citep{abadie_hour}.
When $\lambda > 0$, the criterion becomes strictly convex and the solution is unique. 
Simplex regression is the original formulation set forth in the pioneering works of \cite{abadie1, abadie2, abadie4}, and its properties continue to be actively studied today.  
Attractive aspects of simplex regression include interpretability, sparsity, and transparency \citep{abadie_survey}. 

\subsubsection{Classification Results} 
To answer Q1, we classify the regression formulations into    
(i) a symmetric class, where \HZ~and \VT~point estimates agree, and 
(ii) an asymmetric class, where \HZ~and \VT~point estimates disagree. 
We use the shorthand $\HZ = \VT$ if the two approaches produce identical point estimates and $\HZ \neq \VT$ otherwise. 

\smallskip 
\noindent 
{\bf I: Symmetric class.} 
We first state the symmetric formulations.  
\begin{theorem} \label{thm:equivalent} 
$\EHZ = \EVT$ for 
(i)  OLS with $(\bhalpha, \bhbeta)$ as the minimum $\ell_2$-norm solutions:  
\begin{align} \label{eq:ols} 
    \hY_{NT}^{\ehz}(0) &= \hY_{NT}^{\evt}(0) 
    = \langle \by_N, \bY_0^\dagger \by_T \rangle 
    = \sum_{\ell =1}^R (1/s_\ell) \langle \by_N, \bv_\ell \rangle \langle \bu_\ell, \by_T \rangle;
\end{align}
(ii) PCR with the same choice of $k < R$: 
\begin{align} \label{eq:pcr} 
    \hY_{NT}^{\ehz}(0) &= \hY_{NT}^{\evt}(0) 
    = \langle \by_N, (\bY_0^{(k)})^\dagger \by_T \rangle 
    = \sum_{\ell =1}^k (1/s_\ell) \langle \by_N, \bv_\ell \rangle \langle \bu_\ell, \by_T \rangle; 
\end{align}
(iii) ridge regression with the same choice of $\lambda_2>0$: 
\begin{align} \label{eq:ridge} 
    \hY_{NT}^{\ehz}(0) &= \hY_{NT}^{\evt}(0) 
    = \langle \by_N,  (\bY_0' \bY_0 + \lambda_2 \bI)^{-1} \bY_0' \by_T \rangle 
    = \sum_{\ell =1}^R \frac{s_\ell}{s_\ell^2 + \lambda_2} \langle \by_N, \bv_\ell \rangle \langle \bu_\ell, \by_T \rangle.
\end{align}
\end{theorem}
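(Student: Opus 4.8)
The plan is to treat all three cases by the same linear-algebraic mechanism: each estimator, whether learned via the \HZ~or \VT~program, reduces to applying a particular \emph{spectral filter} to $\bY_0$ before forming an inner product with $\by_N$ and $\by_T$. Concretely, write $\bY_0 = \bU\bS\bV'$. For a scalar function $g$, define the filtered inverse $g(\bY_0) = \sum_{\ell=1}^R g(s_\ell)\,\bv_\ell \bu'_\ell$. I claim that for all three formulations the \HZ~estimate equals $\langle \by_N,\, g(\bY_0)\, \by_T\rangle$ and the \VT~estimate equals $\langle \by_T,\, g(\bY_0)'\, \by_N\rangle$, with $g(s) = 1/s$ for OLS (truncated to the first $k$ terms for PCR, which is the same argument applied to $\bY_0^{(k)}$ in place of $\bY_0$) and $g(s) = s/(s^2+\lambda_2)$ for ridge. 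Since $g(\bY_0)' = \sum_\ell g(s_\ell)\, \bu_\ell \bv'_\ell$ and the inner product is symmetric, $\langle \by_N, g(\bY_0)\by_T\rangle = \langle \by_T, g(\bY_0)'\by_N\rangle$, giving $\HZ = \VT$ and the stated closed form $\sum_\ell g(s_\ell)\langle \by_N,\bv_\ell\rangle\langle \bu_\ell,\by_T\rangle$ at once.

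The work is therefore to verify the two halves of the claim for each estimator. For ridge (case iii), the \HZ~minimizer of $\|\by_T - \bY_0\balpha\|_2^2 + \lambda_2\|\balpha\|_2^2$ is the unique $\bhalpha = (\bY_0'\bY_0 + \lambda_2\bI)^{-1}\bY_0'\by_T$; substituting the SVD gives $\bhalpha = \bV(\bS^2+\lambda_2\bI_R)^{-1}\bS\,\bU'\by_T$ plus a component in the kernel of $\bY_0'$ that contributes nothing once we dot against $\by_N$ — actually, since $\bV$ spans only the row space, one should be slightly careful and note $(\bY_0'\bY_0+\lambda_2\bI)^{-1}$ acts as $1/\lambda_2$ on $\ker(\bY_0)$, but $\bY_0'\by_T$ lies in the row space, so $\bhalpha = \sum_\ell \frac{s_\ell}{s_\ell^2+\lambda_2}\langle\bu_\ell,\by_T\rangle\,\bv_\ell$ exactly. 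Then $\langle\by_N,\bhalpha\rangle$ is the asserted sum. The \VT~side is the identical computation with the roles of $\bU,\bV$ and of $\by_N,\by_T$ swapped, since $\bY_0'$ has SVD $\bV\bS\bU'$. For OLS (case i) the minimum-$\ell_2$-norm solution is by definition $\bhalpha = \bY_0^\dagger\by_T = \bV\bS^{-1}\bU'\by_T$ (the pseudoinverse \emph{is} the minimum-norm least-squares solution), and likewise $\bhbeta = (\bY_0')^\dagger\by_N = \bU\bS^{-1}\bV'\by_N$; both inner products collapse to $\sum_\ell (1/s_\ell)\langle\by_N,\bv_\ell\rangle\langle\bu_\ell,\by_T\rangle$. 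Case (ii) is case (i) verbatim with $\bY_0^{(k)}$, whose SVD is the truncation $\sum_{\ell\le k}s_\ell\bu_\ell\bv'_\ell$ and whose pseudoinverse is $\sum_{\ell\le k}(1/s_\ell)\bv_\ell\bu'_\ell$; one should observe that PCR-OLS on the compressed covariates returns exactly $(\bY_0^{(k)})^\dagger$ applied to the response, which follows because the compressed design has full column rank $k$ so its OLS solution is unique and equals the pseudoinverse solution.

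The one genuinely delicate point — the main obstacle — is the OLS case, because the \HZ~and \VT~programs have \emph{non-unique} solutions (the designs $\bY_0$ and $\bY_0'$ are rank-deficient in general), so the equivalence holds only under the stated min-norm selection rule; I would state clearly that any least-squares solution differs from the pseudoinverse solution by a vector in $\ker(\bY_0)$ (resp.\ $\ker(\bY_0')$), and that such a vector need \emph{not} be annihilated by $\langle\by_N,\cdot\rangle$ (resp.\ $\langle\by_T,\cdot\rangle$), which is exactly why a canonical selection is needed and why the theorem is stated with the min-norm qualifier. Once min-norm is fixed, uniqueness is restored and the spectral-filter identity closes the argument. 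Ridge, PCR, and elastic-net-type cases have strictly convex (or full-rank) criteria and so carry no such subtlety. I would present the ridge computation in full as the representative case and then remark that OLS is its $\lambda_2\downarrow 0$ limit restricted to the row space, and PCR is OLS on a truncated SVD, so no separate calculation is required.
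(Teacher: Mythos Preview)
Your proposal is correct and follows essentially the same route as the paper: both arguments show that $\hY^\hz_{NT}(0)=\langle \by_N, M\by_T\rangle$ and $\hY^\vt_{NT}(0)=\langle \by_T, M'\by_N\rangle$ for the same matrix $M$ (namely $\bY_0^\dagger$, $(\bY_0^{(k)})^\dagger$, or $(\bY_0'\bY_0+\lambda_2\bI)^{-1}\bY_0'$), whence equality is immediate. Your unified spectral-filter framing $M=\sum_\ell g(s_\ell)\bv_\ell\bu_\ell'$ is a tidy repackaging of the paper's case-by-case lemmas, and your explicit SVD verification for ridge replaces the paper's appeal to the push-through identity $(\bY_0'\bY_0+\lambda_2\bI)^{-1}\bY_0'=\bY_0'(\bY_0\bY_0'+\lambda_2\bI)^{-1}$; one small imprecision is that $\bY_0^{(k)}$ itself is $N_0\times T_0$ and does \emph{not} have full column rank, so for PCR you should either invoke the min-norm selection (as in OLS) or, as you hint, pass to the genuinely rank-$k$ compressed covariates $\bU_k\bS_k$ and map back via $\bV_k$.
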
 
%
%
Theorem~\ref{thm:equivalent} might seem familiar at first glance. 
As observed in \cite{abadie4} and \citet[][Lemma 2]{sc_aug}, the point estimates associated with \HZ~OLS and \HZ~ridge can be written as linear combinations of the elements in $\by_T$, which take the same {\em linear forms} as the corresponding \VT~point estimates. 
However, their results stop short of establishing numerical equivalence as in Theorem~\ref{thm:equivalent}. 
From this perspective, Theorem~\ref{thm:equivalent} is perhaps surprising as it takes the next step forward in contradicting the notion that the two regressions are fundamentally different. 
In particular, Theorem~\ref{thm:equivalent} proves the same point estimate is derived from both approaches whenever the regression model belongs to the symmetric class, namely (i) OLS with minimum $\ell_2$-norm, (ii) PCR for the same choice of $k$, and (iii) ridge for the same choice of $\lambda_2$. 
In this view, \HZ~and \VT~regressions are not perspectives at duel---they are dual perspectives. 

We emphasize that Theorem~\ref{thm:equivalent} holds for any data configuration. 
As such, it clarifies that \HZ~and \VT~OLS are {\em not} invalid when $T > N$ and $N > T$, respectively, as previously believed. 
In fact, the OLS estimate can even be written as $\langle \bhalpha, \bY'_0 \bhbeta \rangle$, which incorporates both regression models. 
The origin of the prior misconception may have come from the fact that infinitely many solutions exist when $\bY_0$ is rank deficient. 
Among these solutions, however, is the unique minimum $\ell_2$-norm model, which is arguably sufficient for inference \citep{shao_deng}. 
This is also the solution when the problem is optimized via gradient descent, a ubiquitous optimizer in practice. 
Phenomena of this form are known as ``implicit regularization'', where the optimization algorithm is biased towards a particular solution even though the bias is not explicit in the objective function \citep{ireg2, ireg4}. 

Through its connection to the $\ell_2$-penalty, the minimum $\ell_2$-norm also offers a high-level intuition for the root of symmetry. 
More specifically, observe that the ridge model converges to the OLS model with minimum $\ell_2$-norm as $\lambda_2 \rightarrow 0$. 
Since the PCR model is precisely the OLS minimum $\ell_2$-norm model that is restricted to the space spanned by the top $k$ principal components, we conjecture that the geometry of the $\ell_2$-ball is a likely source for \HZ~and \VT~estimation symmetry. 

\smallskip 
\noindent 
{\bf II: Asymmetric class.} 
Next, we state the class of formulations that fracture symmetry.
\begin{theorem} \label{thm:not_equivalent} 
$\EHZ \neq \EVT$ for  
(i) lasso, 
(ii) elastic net, 
and (iii) simplex regression. 
\end{theorem}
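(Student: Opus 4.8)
Unlike Theorem~\ref{thm:equivalent}, whose symmetry ultimately rests on the fact that the $\ell_2$-ball---hence the minimum-norm, PCR, and ridge solutions---is invariant under the orthogonal rotations supplied by the singular value decomposition of $\bY_0$, the $\ell_1$-penalty and the simplex constraint are \emph{not} invariant under such rotations. Concretely, \HZ~regression lives in the column space of $\bY_0$ while \VT~regression lives in its row space, and $\ell_1$/simplex geometry is coordinate-dependent, so there is no reason for the two estimates to agree. The plan is therefore to establish each of (i)--(iii) by exhibiting an explicit data configuration $(\bY_0,\by_N,\by_T)$ for which $\hY_{NT}^{\ehz}(0)\neq\hY_{NT}^{\evt}(0)$; since these estimates are continuous functions of the data away from degeneracies, a single strict counterexample shows the inequality holds on an open set, so ``$\HZ\neq\VT$'' is the generic situation rather than a knife-edge.

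For the lasso I would take the minimal case $N_0=T_0=2$ with $\bY_0$ equal to a diagonal matrix with distinct positive entries (a plain $\bI$ risks an accidental cancellation of the asymmetry, so distinctness matters). With a diagonal design the \HZ~and \VT~lasso problems decouple coordinatewise and reduce to soft-thresholding: \HZ~soft-thresholds a rescaling of $\by_T$ and then takes an inner product with $\by_N$, whereas \VT~soft-thresholds $\by_N$ and takes an inner product with $\by_T$. Choosing $\by_N,\by_T$ that are not related by any symmetry of the design, together with a value of $\lambda_1$ in the range that thresholds exactly one coordinate, yields $\hY_{NT}^{\ehz}(0)\neq\hY_{NT}^{\evt}(0)$ by direct computation. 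This is fully compatible with the continuity-of-distribution assumption invoked for the lasso in Section~\ref{sec:est.notable}: the strict inequality is an open condition, so it persists on a neighborhood of the chosen point and hence on a positive-probability event.

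For simplex regression I would again use $N_0=T_0=2$ and a diagonal $\bY_0$ with distinct entries. Each of the two constrained least-squares problems then reduces to projecting a point in $\Rb^2$ onto a line segment, which has an elementary closed form; one checks that the projection lands in the relative interior, so the optimal weights are the unconstrained minimizer clipped to $[0,1]$. Plugging the resulting (distinct) weight vectors into $\langle\by_N,\bhalpha\rangle$ and $\langle\by_T,\bhbeta\rangle$ gives two different numbers, and by continuity the same configuration works for all sufficiently small $\lambda>0$. For the elastic net ($\lambda_1,\lambda_2>0$) I would either repeat the diagonal computation directly or argue by continuity: as $\lambda_2\downarrow 0$ the elastic net solution converges to the (unique, under the continuity assumption) lasso solution, so the strict inequality obtained for the lasso persists for all small enough $\lambda_2>0$, and a separate small example covers larger $\lambda_2$.

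The point to be careful about---rather than a genuine obstacle---is the bookkeeping that keeps the counterexamples honest: the design and the vectors $\by_N,\by_T$ must be chosen so that no symmetry of the example forces the two estimates to coincide (hence the insistence on a \emph{non}-symmetric diagonal design and asymmetric right-hand sides), and for the simplex case one must verify the location of the optimum (interior versus a vertex) to justify the closed forms. Beyond that, the content of the theorem is essentially negative: it records that the equivalence in Theorem~\ref{thm:equivalent} is a special feature of $\ell_2$ geometry and breaks as soon as the penalty or constraint set ceases to be rotation-invariant.
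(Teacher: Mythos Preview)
Your approach is correct and takes a genuinely different, more elementary route than the paper. For lasso and elastic net the paper does not construct explicit counterexamples; instead it invokes a Hadamard product parametrization due to \cite{hoff} (Lemma~\ref{lemma:reg_ls}), which rewrites any $\ell_{2/K}$-penalized least-squares problem as an alternating ridge problem in $K$ Hadamard factors. From that representation the paper argues that the \HZ~estimate, viewed as a function of $(\by_N,\by_T)$, is linear in $\by_N$ but at least quadratic in $\by_T$, while the \VT~estimate has the reverse dependence, so the two cannot coincide identically. Your diagonal-design soft-thresholding example is more direct and arguably more airtight, since the paper's functional-form argument stops short of exhibiting a concrete point of disagreement. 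For simplex regression the two approaches converge: the paper also produces an explicit counterexample---after writing out the KKT conditions it takes $\bY_0=\bI$, $\by_N=\bzero$, and a suitable $\by_T$---so your caution that a plain identity ``risks an accidental cancellation'' is unnecessary there. What the paper's route buys is a single structural lemma covering all $\ell_p$ penalties with $p=2/K$, at the cost of heavier machinery; what your route buys is transparency and a shorter path to the theorem as stated.
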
 
To examine the implications of Theorem~\ref{thm:not_equivalent}, we observe that the common thread between the objective functions in the asymmetric class is a penalty or constraint that promotes sparse models. 
Such regularizers are noticeably absent in the symmetric formulations. 
This highlights an interesting trade-off---while sparsity is widely considered to be a salient feature, the geometries of the $\ell_1$-ball and simplex that encourage sparsity are also the likely sources of \HZ~and \VT~estimation asymmetry. 

\subsection{Doubly Robust Estimators} \label{sec:doubly_robust}  
In recent years, there has been a surge of interest in doubly robust estimators. 
Within panel data, we discuss two prominent works that are rising in popularity. 

\subsubsection{Synthetic Difference-in-Differences} 
An important approach that continues to dominate empirical work in panel data is the difference-in-differences (\DID) estimator \citep{did1}. 
At a high level, \DID~posits an additive outcome model with unit- and time-specific fixed effects, known more colloquially as the ``parallel trends'' assumption. 
The recent work of \cite{sdid} anchors on the \DID~principle and brings in concepts from the unconfoundedness and synthetic controls literatures to derive a doubly robust estimator called synthetic difference-in-differences (\SDID). 
In our setting, the \SDID~prediction for the missing $(N,T)$th potential outcome can be written as 
\begin{align}  
	\hY^\sdid_{NT}(0) &= \sum_{i \le N_0} \hbeta_i Y_{iT} + \sum_{t \le T_0} \halpha_t Y_{Nt}  -\sum_{i \le N_0} \sum_{t \le T_0} \hbeta_i \halpha_t Y_{it}  
	\\ &=  \langle \by_T, \bhbeta \rangle +  \langle \by_N, \bhalpha \rangle - \langle \bhbeta, \bY_0 \bhalpha \rangle, \label{eq:sdid}
\end{align} 
where $\bhalpha$ and $\bhbeta$ represent general \HZ~and \VT~models, respectively.  
The authors note that $\bhalpha = (1/T_0) \bone$ and $\bhbeta = (1/N_0) \bone$ recovers \DID. 
Moving beyond simple \DID~to performing a weighted two-way bias removal, the authors propose to learn $\bhalpha$ via simplex regression and $\bhbeta$ via simplex regression with an $\ell_2$-penalty. %

\subsubsection{Augmented Synthetic Controls} 
Another notable work is that of \cite{sc_aug}. 
The authors introduce the augmented synthetic control (\ASC) estimator, which uses an outcome model to correct the bias induced by the classical synthetic controls estimator.\footnote{See \cite{abadie_hour} for a bias correction of synthetic controls through matching.} 
Concretely, the \ASC~estimator predicts the missing $(N,T)$th potential outcome as 
\begin{align}
	\hY^\asc_{NT}(0) &= 
	\hM_{NT}(0) + \sum_{i \le N_0} \hbeta_i (Y_{iT} - \hM_{iT}(0)), \label{eq:ascm}
\end{align} 
where $\hM_{iT}(0)$ is the estimator for the $(i,T)$th entry. 
The authors instantiate 
\begin{align}
	\hM_{iT}(0) &= \sum_{t \le T_0} \halpha_t Y_{it}. 
	\label{eq:hz_outcome} 
\end{align}
Plugging the \HZ~outcome model in \eqref{eq:hz_outcome} into \eqref{eq:ascm} then gives   
\begin{align} \label{eq:asc_hz} 
	\hY^\asc_{NT}(0) &= \langle \by_T, \bhbeta \rangle +  \langle \by_N, \bhalpha \rangle - \langle \bhbeta, \bY_0 \bhalpha \rangle. 
\end{align} 
We consider this particular variant of \ASC~since it takes the same form as \SDID, as seen in \eqref{eq:sdid}. 
In contrast to \cite{sdid}, \cite{sc_aug} learns $\bhalpha$ via ridge regression and $\bhbeta$ via simplex regression. 

\subsubsection{When Doubly Robust Estimators are No Longer Doubly Robust} 
We leverage Theorem~\ref{thm:equivalent} to study the estimation properties of \SDID~and \ASC~when 
$(\bhalpha, \bhbeta)$ are learned via OLS and PCR. 
\begin{cor} \label{cor:sdid} 
$\ESDID = \EASC = \EHZ = \EVT$ for 
(i) $(\bhalpha, \bhbeta)$ as the OLS minimum $\ell_2$-norm solutions
and 
(ii) $(\bhalpha, \bhbeta)$ as the PCR solutions with the same choice of $k < R$. 
\end{cor}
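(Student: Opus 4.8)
The plan is to observe that the \SDID~and \ASC~predictions in \eqref{eq:sdid} and \eqref{eq:asc_hz} take the identical algebraic form, namely
\[
\hY_{NT}(0) = \langle \by_T, \bhbeta \rangle + \langle \by_N, \bhalpha \rangle - \langle \bhbeta, \bY_0 \bhalpha \rangle,
\]
so it suffices to show this expression collapses to the common \HZ/\VT~point estimate whenever $(\bhalpha, \bhbeta)$ are the matched OLS minimum $\ell_2$-norm solutions or the matched PCR solutions. First I would recall that for both of these regression choices, Theorem~\ref{thm:equivalent} already gives $\langle \by_N, \bhalpha \rangle = \langle \by_T, \bhbeta \rangle = \langle \by_N, \bY_0^\dagger \by_T \rangle$ (with $\bY_0$ replaced by $\bY_0^{(k)}$ in the PCR case). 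Thus the whole argument reduces to proving that the ``interaction'' term $\langle \bhbeta, \bY_0 \bhalpha \rangle$ likewise equals $\langle \by_N, \bY_0^\dagger \by_T \rangle$; once that is established, \eqref{eq:sdid} becomes $2\langle \by_N, \bY_0^\dagger \by_T\rangle - \langle \by_N, \bY_0^\dagger \by_T\rangle = \langle \by_N, \bY_0^\dagger \by_T\rangle$, which is exactly the \HZ/\VT~estimate in \eqref{eq:ols}.

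The key computation is therefore to evaluate $\langle \bhbeta, \bY_0 \bhalpha \rangle$ in the SVD basis. For the OLS minimum $\ell_2$-norm solutions one has $\bhalpha = \bY_0^\dagger \by_T = \bV\bS^{-1}\bU'\by_T$ and $\bhbeta = (\bY_0')^\dagger \by_N = \bU\bS^{-1}\bV'\by_N$. Substituting and using $\bU'\bU = \bV'\bV = \bI_R$ together with $\bY_0 = \bU\bS\bV'$, the matrix $\bV'\bY_0\bV = \bzero$-type cross terms vanish and one is left with $\langle \bhbeta, \bY_0\bhalpha\rangle = \by_N'\bU\bS^{-1}\bV' \cdot \bU\bS\bV' \cdot \bV\bS^{-1}\bU'\by_T = \by_N'\bU\bS^{-1}\bU'\by_T$. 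But this is not quite $\langle\by_N,\bY_0^\dagger\by_T\rangle = \by_N'\bV\bS^{-1}\bU'\by_T$ unless one is careful about which singular vectors multiply which data vector; the correct route is to write everything as $\sum_\ell (1/s_\ell)\langle\by_N,\bv_\ell\rangle\langle\bu_\ell,\by_T\rangle$ by tracking the inner products $\langle\bhbeta,\bu_\ell\rangle$ and $\langle\bhalpha,\bv_\ell\rangle$ and noting $\bhbeta = \sum_\ell (1/s_\ell)\langle\by_N,\bv_\ell\rangle\bu_\ell$, $\bhalpha = \sum_\ell(1/s_\ell)\langle\bu_\ell,\by_T\rangle\bv_\ell$, whence $\bY_0\bhalpha = \sum_\ell \langle\bu_\ell,\by_T\rangle\bu_\ell$ and $\langle\bhbeta,\bY_0\bhalpha\rangle = \sum_\ell (1/s_\ell)\langle\by_N,\bv_\ell\rangle\langle\bu_\ell,\by_T\rangle$, matching \eqref{eq:ols} exactly. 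The PCR case is identical with the sum truncated at $k$, using $\bY_0^{(k)} = \sum_{\ell\le k} s_\ell\bu_\ell\bv_\ell'$ and the fact (from Theorem~\ref{thm:equivalent}) that the PCR coefficients live in the span of the top $k$ singular vectors.

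The main obstacle is purely bookkeeping: keeping the left versus right singular vectors attached to the correct data vector ($\by_N$ pairs with $\bv_\ell$ through the \VT~side, $\by_T$ pairs with $\bu_\ell$ through the \HZ~side) so that the three terms in \eqref{eq:sdid} are genuinely the \emph{same} scalar $\sum_\ell (1/s_\ell)\langle\by_N,\bv_\ell\rangle\langle\bu_\ell,\by_T\rangle$ rather than a superficially similar but distinct quantity. There is no analytic difficulty beyond invoking Theorem~\ref{thm:equivalent}; the corollary is essentially the statement that the bias-correction term in a doubly robust estimator is redundant precisely when \HZ~and \VT~already coincide, so the ``augmentation'' adds and subtracts the same number.
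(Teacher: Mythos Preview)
Your proposal is correct and follows essentially the same strategy as the paper: both observe that \SDID~and \ASC~share the form $\langle \by_T,\bhbeta\rangle + \langle \by_N,\bhalpha\rangle - \langle \bhbeta,\bY_0\bhalpha\rangle$, invoke Theorem~\ref{thm:equivalent} to identify the first two terms, and then show the interaction term $\langle \bhbeta,\bY_0\bhalpha\rangle$ also equals the common \HZ/\VT~estimate, so the expression collapses to $2x - x = x$. The only difference is in how that last identity is verified: you expand in the SVD basis, whereas the paper writes $\langle \bhalpha, \bY_0'\bhbeta\rangle = \by_T'(\bY_0')^\dagger \bY_0'(\bY_0')^\dagger \by_N$ and invokes the Moore--Penrose axiom $A^\dagger A A^\dagger = A^\dagger$ in one line, which sidesteps the singular-vector bookkeeping you flagged as the main obstacle.
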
 
Corollary~\ref{cor:sdid} states that a researcher who uses the implicitly regularized (but explicitly unconstrained) variants of \SDID~and \ASC~inevitably arrive at the same point estimate as their colleague who simply uses \HZ~or \VT~OLS. 
The same phenomena occurs for PCR. 
From this, we observe that \SDID~and \ASC~can lose their weighted double-differencing effects under certain formulations. 
However, Corollary~\ref{cor:sdid} is not to discredit either approach. 
The fact remains that both methods allow researchers to naturally and simultaneously encode their knowledge of time and unit specific structures into the estimator. 
\SDID~and \ASC~warrant further studies and careful consideration. 

\subsection{Intercepts} \label{sec:intercepts} 
Intercepts can be included in the \HZ~regression model by modifying the $\ell_2$-errors in \eqref{eq:hz.rr.1} and \eqref{eq:hz.cvx.1}
 as $\| \by_T - \bY_0 \balpha - \alpha_0 \bone \|_2^2$; 
 similarly, they can be included in the \VT~regression model by modifying $\ell_2$-errors in \eqref{eq:vt.rr.1} and \eqref{eq:vt.cvx.1} as 
$\| \by_N - \bY'_0 \bbeta - \beta_0 \bone \|_2^2$. 
To date, there is still a lack of general consensus on the inclusion of intercepts within the synthetic controls literature. 
We attempt to shed light on the role of intercepts.  
\begin{cor} \label{cor:intercepts} 
$\EHZ \neq \EVT$ for (i) OLS, (ii) PCR, and (iii) ridge with intercepts. 
\end{cor}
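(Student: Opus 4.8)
The plan is to prove Corollary~\ref{cor:intercepts} by exhibiting the asymmetry rather than deriving a clean closed form: since we only need to show $\EHZ \neq \EVT$ in general, it suffices to display the two estimators in a way that makes the mismatch structurally transparent, and then either give a concrete numerical counterexample or point to the algebraic obstruction that prevents the SVD-based cancellation used in Theorem~\ref{thm:equivalent}. First I would rewrite each regression with intercept using the standard centering trick: including $\alpha_0 \bone$ in $\|\by_T - \bY_0\balpha - \alpha_0\bone\|_2^2$ is equivalent to replacing $\by_T$ and the rows of $\bY_0$ by their deviations from the (row-wise) mean over the $N_0$ control units, i.e.\ premultiplying by the centering projector $\bH_{N_0} = \bI_{N_0} - \tfrac{1}{N_0}\bone\bone'$, and analogously the \VT~intercept centers over the $T_0$ pretreatment periods via $\bH_{T_0} = \bI_{T_0} - \tfrac{1}{T_0}\bone\bone'$. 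Then $\hY^{\ehz}_{NT}(0) = \bar{y}_T + \langle \by_N, (\bH_{N_0}\bY_0)^\dagger \bH_{N_0}\by_T\rangle$ (with the minimum-norm solution, and an analogous correction term for the covariate mean), whereas $\hY^{\evt}_{NT}(0) = \bar{y}_N + \langle \by_T, (\bH_{T_0}\bY_0')^\dagger \bH_{T_0}\by_N\rangle$ — and the same substitutions carry over to PCR (center first, then truncate) and ridge.

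The key observation is that in Theorem~\ref{thm:equivalent} the symmetry hinged on a single object, $\bY_0^\dagger$ (or its truncated/ridge analogue), appearing in a bilinear form $\langle \by_N, \bY_0^\dagger \by_T\rangle$ that is manifestly invariant under transposition. Once intercepts enter, the \HZ~estimator involves $(\bH_{N_0}\bY_0)^\dagger$ — a left-centered matrix — while the \VT~estimator involves $(\bY_0' \bH_{T_0})^\dagger = (\bH_{T_0}\bY_0')^\dagger$ — a right-centered matrix — and these are transposes of \emph{different} matrices: $\bH_{N_0}\bY_0 \neq (\bH_{T_0}\bY_0')' = \bY_0 \bH_{T_0}$ in general, because column-centering and row-centering of $\bY_0$ are genuinely distinct operations (they coincide only for special $\bY_0$, e.g.\ when $\bY_0$ already has zero row and column means). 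Moreover the additive constants $\bar{y}_T$ versus $\bar{y}_N$ differ, and there is no reason the difference in the bilinear terms should compensate. So the proof strategy is: (1) reduce each intercept model to a centered no-intercept model; (2) note the two centerings act on opposite sides of $\bY_0$; (3) conclude they cannot produce the transposition-invariant bilinear form of Theorem~\ref{thm:equivalent}; and (4) nail it down with one explicit small example (say $N_0 = T_0 = 2$ with a generic $\bY_0$, $\by_N$, $\by_T$) where the two numbers differ, which also handles PCR and ridge by continuity/genericity.

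I would organize the write-up around the OLS case in full detail, then remark that PCR follows verbatim after replacing $\bY_0$ by its rank-$k$ truncation \emph{computed after centering} (emphasizing that ``center then truncate'' is the natural order, and that truncating then centering would be a different — and also non-symmetric — estimator), and that ridge follows by replacing $\bY_0^\dagger$ with $(\bY_0'\bH\bY_0 + \lambda_2\bI)^{-1}\bY_0'\bH$ throughout; the counterexample from the OLS case perturbs to cover small $\lambda_2$ and general $k=1$. The one modeling subtlety worth stating carefully is exactly what ``intercept'' means on each side and hence which dimension gets centered — getting that right is the whole content, since after it the inequality is essentially immediate.

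The main obstacle I anticipate is not conceptual but expository: making precise that the claimed inequality is \emph{generic} rather than universal — there are degenerate data configurations (e.g.\ $\bY_0$ with constant rows and columns, or rank-one $\bY_0$ aligned with $\bone$) where \HZ~and \VT~with intercepts \emph{do} agree, so the corollary should be read as ``not an identity'' in the sense of Theorem~\ref{thm:equivalent}, and the cleanest rigorous statement is simply the existence of a counterexample. I would therefore spend the bulk of the effort constructing a transparent $2\times 2$ (or $2 \times 3$) instance, verifying by direct computation that $\hY^{\ehz}_{NT}(0) \neq \hY^{\evt}_{NT}(0)$, and presenting the centering reduction as the conceptual explanation for \emph{why} one should expect this — contrasting it against the transposition symmetry that made Theorem~\ref{thm:equivalent} work.
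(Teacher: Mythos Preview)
Your approach is correct and, in fact, more careful than the paper's own proof. The paper dispatches the corollary in two lines: it defines the augmented design matrices $\bY_0^{\hz} = [\bone, \bY_0]$ and $\bY_0^{\vt} = [\bone, \bY_0']$, observes that $(\bY_0^{\hz})' \neq \bY_0^{\vt}$ (they are not even the same shape), and declares the result ``immediate from Theorem~\ref{thm:equivalent}.'' This is the same structural observation you make via centering projectors --- augmenting by $\bone$ and projecting out $\bone$ are dual ways of encoding the intercept --- but the paper stops at noting that the transposition symmetry underlying Theorem~\ref{thm:equivalent} is broken. You correctly flag that this alone is not a proof of inequality (failure of a sufficient condition for equality does not imply inequality), and your plan to exhibit an explicit small counterexample is what actually makes the claim rigorous. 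Your centering formulation, contrasting $\bH_{N_0}\bY_0$ with $\bY_0\bH_{T_0}$, is also more transparent about \emph{why} the asymmetry arises and connects naturally to the paper's Corollary~\ref{cor:equivalent.int}, where double centering restores symmetry; the paper's augmented-matrix phrasing is terser but less illuminating.
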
 
We develop an intuition for Proposition~\ref{cor:intercepts} by interpreting intercepts in panel studies. 
A nonzero time intercept, $\alpha_0$, imposes a permanent constant difference between the treated and pretreatment periods; 
a nonzero unit intercept, $\beta_0$, imposes a permanent constant difference between the treated and control units. 
These systematic structures then create an asymmetry between the two regressions. 
Below, we propose a methodology based on centering the data that allows for intercepts yet retains symmetry. 

\subsubsection{Including Intercepts and Retaining Symmetry through Data Centering} 
Let $\bY_0$ be twice centered, i.e., the rows and columns of $\bY_0$ are mean zero. 
This can be satisfied by applying $\bI - (1/N_0) \bone \bone'$ and $\bI - (1/T_0) \bone \bone'$ to the left and right, respectively, of $\bY_0$. 
Consider the following modified formulations.  
\begin{itemize}
	\item [(a)] \HZ~regression:  for $\lambda \ge 0$, 
	\begin{align} 
		&(\halpha_0, \halpha_1, \bhalpha) = \argmin_{(\alpha_0, \alpha_1, \balpha)} ~\| \by_T - \bY_0 \balpha - \alpha_0 \bone \|_2^2 + \| \by_N - \alpha_1 \bone \|_2^2 + \lambda \|\balpha \|_2^2 \label{eq:hz.int.1} 
		\\ &\hY_{NT}^\hz(0) = \langle \by_N, \bhalpha \rangle + \halpha_0 + \halpha_1. \label{eq:hz.int.2} 
	\end{align} 
	
	\item [(b)] \VT~regression: for $\lambda \ge 0$,  
	\begin{align} 
		&(\hbeta_0, \hbeta_1, \bhbeta) = \argmin_{(\beta_0, \beta_1, \bbeta)} ~\| \by_N - \bY'_0 \bbeta - \beta_0 \bone \|_2^2 + \| \by_T - \beta_1 \bone \|_2^2 + \lambda \|\bbeta \|_2^2 \label{eq:vt.int.1} 
		\\ &\hY_{NT}^\vt(0) = \langle \by_T, \bhbeta \rangle + \hbeta_0 + \hbeta_1. \label{eq:vt.int.2} 
	\end{align} 
\end{itemize} 
Similar to before, OLS corresponds to $\lambda = 0$, PCR corresponds to OLS with $\bY_0^{(k)}$ for $k < R$ in place of $\bY_0$, and ridge regression corresponds to any $\lambda > 0$. 
\begin{cor} \label{cor:equivalent.int} 
$\EHZ = \EVT$ for the symmetric estimators in Theorem~\ref{thm:equivalent} under the formulations set in \eqref{eq:hz.int.1} and \eqref{eq:vt.int.1} with $\bY_0$ being twice centered. 
\end{cor}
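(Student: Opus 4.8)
The plan is to exploit that twice-centering makes $\bone$ orthogonal to both the column and row spaces of $\bY_0$, which forces the intercept terms in \eqref{eq:hz.int.1} and \eqref{eq:vt.int.1} to decouple from the slope coefficients. Once decoupled, the slope coefficients coincide with the intercept-free estimators of Theorem~\ref{thm:equivalent}, while the intercept contributions turn out to be symmetric across the two regressions; adding them back yields the claimed equality. Concretely, twice-centering gives $\bone' \bY_0 = \bzero'$ and $\bY_0 \bone = \bzero$, so $\bone \perp \mathrm{col}(\bY_0)$ and $\bone \perp \mathrm{col}(\bY_0')$. Plugging the SVD $\bY_0 = \sum_{\ell \le R} s_\ell \bu_\ell \bv_\ell'$ into these identities and using orthonormality of $\{\bu_\ell\}$ and $\{\bv_\ell\}$ forces $\bone' \bu_\ell = 0$ and $\bone' \bv_\ell = 0$ for every $\ell \le R$; hence every truncation $\bY_0^{(k)}$ is again twice-centered, so the PCR case follows verbatim from the OLS/ridge case after replacing $\bY_0$ by $\bY_0^{(k)}$.

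For the \HZ~problem \eqref{eq:hz.int.1}, the variable $\alpha_1$ enters only through $\|\by_N - \alpha_1 \bone\|_2^2$, so $\halpha_1 = (1/T_0) \bone' \by_N$. For the remaining objective in $(\alpha_0, \balpha)$, I would use $\bY_0 \balpha \perp \bone$ to split, via the Pythagorean identity, $\|\by_T - \bY_0 \balpha - \alpha_0 \bone\|_2^2 = N_0\,(\halpha_0 - \alpha_0)^2 + \|\by_T - \bY_0 \balpha\|_2^2$ up to a constant free of both $\alpha_0$ and $\balpha$, where $\halpha_0 = (1/N_0) \bone' \by_T$; since the penalty $\lambda \|\balpha\|_2^2$ only touches $\balpha$, the minimization separates and $\bhalpha = \argmin_\balpha \|\by_T - \bY_0 \balpha\|_2^2 + \lambda \|\balpha\|_2^2$, i.e.\ exactly the intercept-free \HZ~ridge estimator of Theorem~\ref{thm:equivalent} (and, for $\lambda = 0$, its minimum-$\ell_2$-norm solution, the non-uniqueness residing entirely in $\balpha$). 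The mirror-image computation on \eqref{eq:vt.int.1} gives $\hbeta_0 = (1/T_0) \bone' \by_N$, $\hbeta_1 = (1/N_0) \bone' \by_T$, and $\bhbeta = \argmin_\bbeta \|\by_N - \bY_0' \bbeta\|_2^2 + \lambda \|\bbeta\|_2^2$.

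Finally I would assemble the predictions: $\hY_{NT}^{\ehz}(0) = \langle \by_N, \bhalpha \rangle + \halpha_0 + \halpha_1$ and $\hY_{NT}^{\evt}(0) = \langle \by_T, \bhbeta \rangle + \hbeta_0 + \hbeta_1$. The intercept sums agree, since $\halpha_0 + \halpha_1 = (1/N_0) \bone' \by_T + (1/T_0) \bone' \by_N = \hbeta_1 + \hbeta_0$, and $\langle \by_N, \bhalpha \rangle = \langle \by_T, \bhbeta \rangle$ by Theorem~\ref{thm:equivalent} applied to the unchanged matrix $\bY_0$. Hence $\hY_{NT}^{\ehz}(0) = \hY_{NT}^{\evt}(0)$.

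The main obstacle is the decoupling step: one must verify that the cross term $\langle \bY_0 \balpha, \alpha_0 \bone \rangle$ genuinely vanishes under twice-centering, that the regularizer (imposed on $\balpha$ only) does not reintroduce coupling between $\alpha_0$ and $\balpha$, and that for OLS the non-uniqueness lives entirely in $\balpha$ so that ``minimum $\ell_2$-norm'' is well defined and matches the estimator of Theorem~\ref{thm:equivalent}. Everything past that point is routine bookkeeping.
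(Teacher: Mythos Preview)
Your proposal is correct and follows essentially the same route as the paper: both use the twice-centering conditions $\bY_0'\bone=\bzero$ and $\bY_0\bone=\bzero$ to decouple the intercepts from the slopes, obtain $\halpha_0=\hbeta_1=(1/N_0)\bone'\by_T$ and $\halpha_1=\hbeta_0=(1/T_0)\bone'\by_N$, recover the intercept-free slope estimators, and then invoke Theorem~\ref{thm:equivalent}. The only cosmetic differences are that the paper derives the decoupling via first-order optimality conditions rather than your Pythagorean split, and you supply the extra justification that $\bY_0^{(k)}$ inherits twice-centering (since $\bone\perp\bu_\ell,\bv_\ell$), which the paper leaves implicit.
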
 
We inspect \eqref{eq:hz.int.2} and \eqref{eq:vt.int.2} to understand the implications of  Corollary~\ref{cor:equivalent.int}. 
First, we recall Theorem~\ref{thm:equivalent}, which establishes that the \HZ~and \VT~estimates share the same ``base'' estimate, i.e., $\langle \by_N, \bhalpha \rangle = \langle \by_T, \bhbeta \rangle$. 
Next, we note that 
$\halpha_0 = \hbeta_1 = (1/N_0) \bone' \by_T$
and 
$\hbeta_0 = \halpha_1 = (1/T_0) \bone' \by_N$, which correspond to the time and unit fixed effects, respectively. 
Intuitively, the modified point estimates in \eqref{eq:hz.int.2} and \eqref{eq:vt.int.2} include both fixed effect models to compensate for $\bY_0$ being twice centered. 
Putting everything together, the modified \HZ~and \VT~point estimates are identical. 

\section{Inference} \label{sec:inference} 
\smallskip 
\begin{tcolorbox}[colback=gray!10!white,colframe=black!75!black]
	\begin{center}
	{
		{\bf Q2:} ``When the \HZ~and \VT~point estimates are identical, how does the source of randomness impact inference?'' 
	} 
	\end{center} 
\end{tcolorbox}
\smallskip 
To answer Q2, we study the inferential properties of the counterfactual prediction. 
Formal discussions for classical inference require an explicit postulation of where the randomness arises from.  
This article takes both a (i) model-based approach, which makes assumptions about the distribution of the potential outcomes, and a (ii) design-based approach, which makes assumptions about the assignment mechanism of treatment. 
%
We start by taking a model-based route that is similar in parts to the recent works of \cite{li2020}, \cite{cattaneo}, and \cite{sc_lasso1, sc_t_test} from the synthetic controls literature. 
We then transition to a design-based route that is paved by the ideas introduced in \cite{guido_inf}. 

%

For ease of exposition, we focus on OLS and its minimum $\ell_2$-norm solutions, i.e., $\bhalpha = \bY_0^\dagger \by_T$ and $\bhbeta = (\bY'_0)^\dagger \by_N$. 
As such, we define $\hY_{NT}(0) = \hY^\hz_{NT}(0) = \hY^\vt_{NT}(0)$, which is justified under Theorem~\ref{thm:equivalent}. 
To improve readability, we present several results informally and provide their precise versions in Appendix~\ref{sec:appendix_inference}. 

\subsection{Model-Based Inference} \label{sec:model_based} 
Within the model-based framework, we consider a classical regression model. 
This postulation is not always plausible but it is useful to dissect its implications for the role of randomness in conducting inference. 

\subsubsection{Generative Models} \label{sec:assumptions.mb} 
We now study properties of $\hY_{NT}(0)$ from three different sources of randomness. 

\medskip 
\noindent 
{\bf I: Horizontal model.} 
The \HZ~model focuses on time series correlation patterns. 
%
\begin{assumption} \label{assump:hz.outcome} 
Conditional on $(\by_N, \bY_0)$, we have 
\begin{align}
	&Y_{iT} = \sum_{t \le T_0} \alpha^*_t Y_{it} + \varepsilon_{iT}, \quad i=1, \dots, N_0, \label{eq:hz.yt} 
\end{align} 
where $\balpha^*$ is a vector of unknown coefficients and $\varepsilon_{iT}$ is an idiosyncratic error term. 
\end{assumption} 
Assumption~\ref{assump:hz.outcome} motivates the \HZ~approach, which models time series patterns as the source of randomness. 
Under the \HZ~model, 
the statistical uncertainty of $\hY_{NT}(0)$ is governed by the construction of $\bhalpha$ from $(\by_T, \bY_0)$, i.e., the in-sample uncertainty. 
%

\begin{assumption} \label{assump:hz.errors}
$\{\varepsilon_{iT}\}_{i=1}^{N_0}$ has zero mean and is independent over $i = 1, \dots, N_0$, 
conditional on $(\by_N, \bY_0)$. 
\end{assumption} 
Assumption~\ref{assump:hz.errors} states that the errors have zero mean and thus the regressors, i.e., lagged outcomes, are uncorrelated with the errors; this is known in the literature as strict exogeneity.  
The errors are also conditionally independent across space.

\medskip 
\noindent 
{\bf II: Vertical model.} 
The \VT~model focuses on cross-sectional correlation patterns. 

\begin{assumption} \label{assump:vt.outcome} 
Conditional on $(\by_T, \bY_0)$, we have 
\begin{align}
	&Y_{Nt} = \sum_{i \le N_0} \beta^*_i Y_{it} + \varepsilon_{Nt}, \quad t=1, \dots, T_0, \label{eq:vt.yn} 
\end{align} 
where $\bbeta^*$ is a vector of unknown coefficients and $\varepsilon_{Nt}$ is an idiosyncratic error term. 
\end{assumption} 
Assumption~\ref{assump:vt.outcome} is analogous to Assumption~\ref{assump:hz.outcome} with the source of randomness here emanating from cross-sectional patterns.
Hence, the statistical uncertainty of $\hY_{NT}(0)$ under the \VT~model is governed by the construction of $\bhbeta$ from $(\by_N, \bY_0)$. 
%

\begin{assumption} \label{assump:vt.errors}
$\{\varepsilon_{Nt}\}_{t=1}^{T_0}$ has zero mean and is independent over $t = 1, \dots, T_0$, 
conditional on $(\by_T, \bY_0)$. 
\end{assumption} 
Assumption~\ref{assump:vt.errors} is analogous to Assumption~\ref{assump:hz.errors} with the errors here being conditionally independent across time. 

\medskip 
\noindent 
{\bf III: Mixed model.} 
We introduce a new model that mixes aspects of the \HZ~and \VT~models. 
At a high level, the mixed model accounts for randomness along both dimensions of the data. 
This comes at the price of placing additional constraints on the stochastic properties of the errors. 

\begin{assumption} \label{assump:mixed.outcome} 
Conditional on $\bY_0$, we have 
\begin{align}
	&Y_{iT} = \sum_{t \le T_0} \alpha^*_t Y_{it} + \varepsilon_{iT}, \quad i = 1, \dots, N_0, \label{eq:mixed.yt} 
	\\
	&Y_{Nt} = \sum_{i \le N_0} \beta^*_i Y_{it} + \varepsilon_{Nt}, \quad t = 1, \dots, T_0, \label{eq:mixed.yn} 
\end{align} 
where $(\balpha^*, \varepsilon_{iT})$ and $(\bbeta^*, \varepsilon_{Nt})$ are defined as in Assumptions~\ref{assump:hz.outcome} and \ref{assump:vt.outcome}, respectively. 

\end{assumption} 
\eqref{eq:mixed.yt} and \eqref{eq:mixed.yn} correspond to Assumptions~\ref{assump:hz.outcome} and \ref{assump:vt.outcome}, respectively. 
Collectively, they model time series and cross-sectional patterns as two distinct sources of randomness. 
Thus, the statistical uncertainty of $\hY_{NT}(0)$ under the mixed model is governed by the constructions of both $\bhalpha$ and $\bhbeta$. 

\begin{assumption} \label{assump:mixed.errors}
$\{\varepsilon_{iT}\}_{i=1}^{N_0}$ and $\{\varepsilon_{Nt}\}_{t=1}^{T_0}$ have zero mean and are independent over $i =1, \dots, N_0$ and $t = 1, \dots, T_0$, conditional on $\bY_0$. 
\end{assumption} 
Assumption~\ref{assump:mixed.errors} combines Assumptions~\ref{assump:hz.errors} and \ref{assump:vt.errors}. 
In words, it states that $\bY_0$ contains all measured confounders and the errors are independent across both time and space. 
We leave a proper analysis under dependent errors as future work. 

\subsubsection{Inferential Properties} 
Equipped with our three models, we are ready to provide one set of answers to Q2. 
In what follows, we define the error covariance matrices as 
(i) $\bSigma^\hz_T = \Cov(\bvarepsilon_T | \by_N, \bY_0)$, 
(ii) $\bSigma^\vt_N = \Cov(\bvarepsilon_N | \by_T, \bY_0)$,
(iii) $\bSigma^\mix_T = \Cov(\bvarepsilon_T | \bY_0)$,
and (iv) $\bSigma^\mix_N = \Cov(\bvarepsilon_N | \bY_0)$, 
where $\bvarepsilon_T = [\varepsilon_{iT}: i \le N_0]$ and $\bvarepsilon_N = [\varepsilon_{Nt}: t \le T_0]$. 
\begin{theorem} \label{thm:inference} 
(i) \emph{[\HZ~model]} Under Assumptions~\ref{assump:hz.outcome}--\ref{assump:hz.errors} and suitable moment conditions, we have 
\begin{align}
	& \frac{\hY_{NT}(0) - \mu_0^\ehz}{ \sqrt{v_0^\ehz}} \xrightarrow{d} \mathcal{N}(0,1), \label{eq:hz.mb} 
\end{align} 
where $\mu_0^\ehz = \langle \by_N, \bH^v \balpha^* \rangle$ and $v_0^\ehz = \bhbeta' \bSigma^\ehz_T \bhbeta$. 
(ii) \emph{[\VT~model]} Under Assumptions~\ref{assump:vt.outcome}--\ref{assump:vt.errors} and suitable moment conditions, we have 
\begin{align}
	\frac{\hY_{NT}(0) - \mu_0^\evt}{ \sqrt{v^\evt_0}} \xrightarrow{d} \mathcal{N}(0,1), \label{eq:vt.mb}
\end{align} 
where $\mu_0^\evt = \langle \by_T, \bH^u \bbeta^* \rangle$ and $v^\evt_0 = \bhalpha' \bSigma^\evt_N \bhalpha$.  
(iii) \emph{[Mixed model]} Under Assumptions~\ref{assump:mixed.outcome}--\ref{assump:mixed.errors} and suitable moment conditions, we have 
\begin{align}
	\frac{ \hY_{NT}(0) - \mu^\emix_0}{\sqrt{v^\emix_0}} \xrightarrow{d} \mathcal{N}(0,1), \label{eq:mixed.mb}
\end{align}
where $\mu^\emix_0 = \langle \balpha^*, \bY'_0 \bbeta^* \rangle$ 
and $$v^\emix_0 = 
	 (\bH^u \bbeta^*)' \bSigma^\emix_T (\bH^u \bbeta^*)
	+ (\bH^v \balpha^*)' \bSigma^\emix_N (\bH^v \balpha^*)  
	+ \tr( \bY^\dagger_0 \bSigma^\emix_T (\bY'_0)^\dagger \bSigma^\emix_N).$$ 
\end{theorem}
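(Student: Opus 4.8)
All three parts follow a common template: substitute the relevant structural equation(s) into the identity $\hY_{NT}(0) = \langle \by_N, \bY_0^\dagger \by_T \rangle$ of Theorem~\ref{thm:equivalent}, decompose the result into a deterministic term (which turns out to be the estimand) plus a mean-zero ``noise'' term that is a linear or bilinear form in the idiosyncratic errors, compute the conditional variance of the noise term, and invoke a central limit theorem, with the ``suitable moment conditions'' supplying the Lindeberg-type requirement.

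\emph{Horizontal and vertical models.} Under Assumption~\ref{assump:hz.outcome}, $\by_T = \bY_0 \balpha^* + \bvarepsilon_T$, so $\bhalpha = \bY_0^\dagger \by_T = \bH^v \balpha^* + \bY_0^\dagger \bvarepsilon_T$ using $\bY_0^\dagger \bY_0 = \bV\bV' = \bH^v$. Pairing with $\by_N$ and applying the pseudoinverse identity $(\bY_0^\dagger)' \by_N = (\bY_0')^\dagger \by_N = \bhbeta$ gives $\hY_{NT}(0) - \mu_0^\hz = \langle \bhbeta, \bvarepsilon_T \rangle = \sum_{i \le N_0} \hbeta_i \varepsilon_{iT}$; conditional on $(\by_N, \bY_0)$, under which $\bhbeta$ is fixed, this has mean zero and variance $\bhbeta' \bSigma^\hz_T \bhbeta = v_0^\hz$ by Assumption~\ref{assump:hz.errors}. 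Being a sum of independent mean-zero summands, it is asymptotically normal by the Lindeberg--Feller CLT as $N_0 \to \infty$. The vertical case is the transpose: $\bhbeta = (\bY_0')^\dagger \by_N = \bH^u \bbeta^* + (\bY_0')^\dagger \bvarepsilon_N$, pairing with $\by_T$ and using $((\bY_0')^\dagger)' \by_T = \bY_0^\dagger \by_T = \bhalpha$ gives $\hY_{NT}(0) - \mu_0^\vt = \langle \bhalpha, \bvarepsilon_N \rangle = \sum_{t \le T_0} \halpha_t \varepsilon_{Nt}$ with conditional variance $\bhalpha' \bSigma^\vt_N \bhalpha = v_0^\vt$, and the Lindeberg--Feller CLT over $t \le T_0$ closes it.

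\emph{Mixed model.} Substituting both equations of Assumption~\ref{assump:mixed.outcome} into $\langle \by_N, \bY_0^\dagger \by_T \rangle$, expanding into four terms, and using $\bY_0 \bY_0^\dagger \bY_0 = \bY_0$, $\bY_0 \bY_0^\dagger = \bU\bU' = \bH^u$, and $\bY_0^\dagger \bY_0 = \bH^v$ (the constant term collapsing to $\langle \balpha^*, \bY_0' \bbeta^* \rangle = \mu_0^\mix$) gives
\[
	\hY_{NT}(0) - \mu_0^\mix = (\bH^u \bbeta^*)' \bvarepsilon_T + (\bH^v \balpha^*)' \bvarepsilon_N + \bvarepsilon_N' \bY_0^\dagger \bvarepsilon_T .
\]
Conditional on $\bY_0$, $\bvarepsilon_T$ and $\bvarepsilon_N$ are independent and mean zero, so each term has mean zero and all three cross-covariances vanish (the two linear terms by independence; each linear term against the bilinear term by conditioning on the ``other'' error vector). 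The variances of the first two terms are the first two summands of $v_0^\mix$, and the cyclic trace identity with independence gives $\Var(\bvarepsilon_N' \bY_0^\dagger \bvarepsilon_T \mid \bY_0) = \tr(\bY_0^\dagger \bSigma^\mix_T (\bY_0')^\dagger \bSigma^\mix_N)$, so the variances sum to exactly $v_0^\mix$. For the limit law, order the $N_0 + T_0$ errors as $\varepsilon_{1T}, \dots, \varepsilon_{N_0 T}, \varepsilon_{N1}, \dots, \varepsilon_{N T_0}$, let $\mathcal{F}_k$ be the $\sigma$-field generated by $\bY_0$ and the first $k$ of them, and write the bilinear term as $\sum_{t \le T_0} \varepsilon_{Nt} (\bY_0^\dagger \bvarepsilon_T)_t$; the increments $(\bH^u \bbeta^*)_i \varepsilon_{iT}$ (for $i \le N_0$) and $\varepsilon_{Nt}\big((\bH^v \balpha^*)_t + (\bY_0^\dagger \bvarepsilon_T)_t\big)$ (for $t \le T_0$) form a martingale difference array, since the coefficient of the newly revealed error is $\mathcal{F}_{k-1}$-measurable, and a martingale CLT yields \eqref{eq:mixed.mb}.

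\emph{Main obstacle.} The first two parts are routine once the pseudoinverse identities are spotted. The real work is the bilinear (degenerate) term $\bvarepsilon_N' \bY_0^\dagger \bvarepsilon_T$ in the mixed model: its joint asymptotic normality with the two linear terms is not handed to us by an ordinary CLT, and although the martingale reformulation above is the cleanest route, a quadratic-form CLT of de Jong type is an alternative. Moreover, the martingale CLT needs the sum of conditional variances --- whose only nondeterministic piece is $\sum_{t \le T_0} (\bY_0^\dagger \bvarepsilon_T)_t^2 (\bSigma^\mix_N)_{tt}$ --- to concentrate around its mean $\tr(\bY_0^\dagger \bSigma^\mix_T (\bY_0')^\dagger \bSigma^\mix_N)$, which forces the ``suitable moment conditions'' to include operator-norm-type control of $\bY_0^\dagger$ relative to $v_0^\mix$; pinning down the minimal such conditions, and the precise Lindeberg conditions in all three parts, is the delicate bookkeeping the proof must carry out.
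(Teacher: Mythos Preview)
Your treatment of parts (i) and (ii) matches the paper's proof exactly: the paper also substitutes the structural equation, reduces $\hY_{NT}(0) - \mu_0$ to $\langle \bhbeta, \bvarepsilon_T \rangle$ (respectively $\langle \bhalpha, \bvarepsilon_N \rangle$), computes the conditional mean and variance, and invokes a Lyapunov CLT for independent, non-identically-distributed summands (the paper's Lemma~\ref{thm:lehmann}, i.e.\ Theorem~2.7.1 of Lehmann).

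For part (iii) the paper takes a shorter and genuinely different route. It writes $\hY_{NT}(0) = \sum_{i \le N_0} \sum_{t \le T_0} (\bY_0^\dagger)_{it}\, Y_{iT} Y_{Nt}$, asserts that under Assumption~\ref{assump:mixed.errors} this double sum is a sum of independent random variables, and applies the same Lyapunov CLT directly to the $N_0 T_0$ summands; the mean $\mu_0^\mix$ and variance $v_0^\mix$ are then derived separately via the law of total expectation and total variance, conditioning first on $(\bvarepsilon_N, \bY_0)$. Your three-term decomposition into two linear pieces plus the bilinear $\bvarepsilon_N' \bY_0^\dagger \bvarepsilon_T$, followed by a martingale-difference CLT, is the more careful argument: the products $(\bY_0^\dagger)_{it} Y_{iT} Y_{Nt}$ are \emph{not} independent across distinct $(i,t)$ pairs, since the same $Y_{iT}$ appears in every term with that $i$ (and likewise for $Y_{Nt}$), so the paper's direct appeal to an independent-sum CLT at this step glosses over a real dependence that your filtration-based construction handles correctly. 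On the other hand, the paper's law-of-total-variance derivation of $v_0^\mix$ is slightly slicker than your term-by-term covariance accounting, though both land on the same expression. Net: your route is more rigorous on the CLT step and essentially equivalent on the moment identification.
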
 
Theorem~\ref{thm:inference} highlights that each model measures uncertainty with respect to a different  estimand. 
In particular, Theorem~\ref{thm:inference} states that the asymptotic variance is controlled by time series patterns under the \HZ~model, cross-sectional patterns under the \VT~model, and both correlation patterns under the mixed model. 
This clarifies that the source of randomness has substantive implications for the estimand and inference. 

%

\subsubsection{Confidence Intervals} \label{sec:ci} 
Theorem~\ref{thm:inference} motivates separate \HZ, \VT, and mixed confidence intervals: for $\theta \in (0,1)$, 
\begin{align} 
	\mu_0^\hz &\in \left[ \hY_{NT}(0) ~\pm~ z_{\frac \theta 2} \sqrt{\hv^\hz_0} \right], 
	\\
	\mu_0^\vt &\in \left[ \hY_{NT}(0) ~\pm~ z_{\frac \theta 2} \sqrt{\hv^\vt_0} \right], 
	\\ 	
	\mu_0^\mix &\in \left[ \hY_{NT}(0) ~\pm~ z_{\frac \theta 2} \sqrt{\hv^\mix_0} \right], 
\end{align}
where 
$z_{\frac \theta 2}$ is the upper $\theta/2$ quantile of $\mathcal{N}(0,1)$, 
and 
$(\hv^\hz_0, \hv^\vt_0, \hv^\mix_0)$ are the estimators of $(v^\hz_0, v^\vt_0, v^\mix_0)$. 
We construct 
\begin{align}
	\hv^\hz_0 = \bhbeta' \bhSigma_T \bhbeta,
	\quad 
	\hv^\vt_0 = \bhalpha' \bhSigma_N \bhalpha, 
	\quad
	\hv^\mix_0  
	= \hv_0^\hz 
	+ \hv_0^\vt
	-  \tr(\bY_0^\dagger \bhSigma_T (\bY'_0)^\dagger \bhSigma_N),  \label{eq:final_var_est} 
\end{align}  
where $\bhSigma_T$ and $\bhSigma_N$ are the estimators of $(\bSigma^\hz_T, \bSigma^\mix_T)$ and $(\bSigma^\vt_N, \bSigma^\mix_N)$, respectively. 
We precisely define them under homoskedastic and heteroskedastic errors below. 
To reduce ambiguity, we index $(\hv^\hz_0, \hv^\vt_0, \hv^\mix_0)$ by the covariance estimator. 
We also denote $\bH^u = \bU \bU'$ and $\bH^u_\perp = \bI - \bH^u$. 
With this notation, the \HZ~and \VT~in-sample errors can be written as $\bH^u_\perp \by_T = \by_T - \bY_0 \bhalpha$ 
and $\bH^v_\perp \by_N = \by_N - \bY'_0 \bhbeta$, respectively. 

It is clear from \eqref{eq:final_var_est} that $(\hv^\hz_0, \hv^\vt_0)$ are plug-in estimators for $(v^\hz_0, v^\vt_0)$. 
As such, we discuss $\hv^\mix_0$ with respect to $v^\mix_0$. 
Recall $\bhalpha = \bH^v \bhalpha$ and $\bhbeta = \bH^u \bhbeta$ by construction. 
To justify the negative trace in $\hv^\mix_0$, note that $\hv_0^\hz$ is a quadratic involving $(\by_N, \by_T)$. 
Since both quantities are random, the expectation of $\hv_0^\hz$ induces an additional term that precisely corresponds to the trace term in $v_0^\mix$.
The same property holds for $\hv_0^\vt$. 
Thus, $\hv^\mix_0$ corrects for this bias via the negative trace. 

\smallskip 
\noindent 
{\bf Homoskedastic errors.} 
Consider $\bSigma^\hz_T$ with identical diagonal elements, i.e., $\bSigma^\hz_T = (\sigma^\hz_T)^2 \bI$, where $(\sigma^\hz_T)^2 = \Var(\varepsilon_{iT} | \by_N, \bY_0)$ for $i = 1, \dots, N_0$. 
Let $(\bSigma^\vt_N, \bSigma^\mix_T, \bSigma^\mix_N)$ be defined analogously. 
We use the standard variance estimators 
\begin{align}
	&\bhSigma_T^\homo = \frac{1}{N_0 - R} \| \bH^u_\perp \by_T \|_2^2 \bI, 	\label{eq:hz.var.1} 
	\\
	&\bhSigma_N^\homo = \frac{1}{T_0 - R} \| \bH^v_\perp \by_N \|_2^2 \bI, \label{eq:vt.var.1} 
\end{align}  
where $R = \text{rank}(\bY_0)$, which can be computed as $R = \tr(\bH^u) = \tr(\bH^v)$. 

\begin{lemma} \label{lemma:inference.1} 
Consider homoskedastic errors. 
(i) \emph{[\HZ~model]} Under Assumptions~\ref{assump:hz.outcome}--\ref{assump:hz.errors}, we have 
\begin{align}
	\Ex[\bhSigma_T^\ehomo | \by_N, \bY_0] =  \bSigma^\ehz_T 
	\quad \text{and} \quad 
	\Ex[ \hv^{\ehz, \ehomo}_0 | \by_N, \bY_0] = v^\ehz_0. 
\end{align} 
(ii) \emph{[\VT~model]} Under Assumptions~\ref{assump:vt.outcome}--\ref{assump:vt.errors}, we have 
\begin{align} 
	\Ex[ \bhSigma_N^\ehomo | \by_T, \bY_0] = \bSigma^\evt_N
	\quad \text{and} \quad 
	\Ex[ \hv^{\evt, \ehomo}_0 | \by_T, \bY_0] = v^\evt_0.
\end{align}  
(iii) \emph{[Mixed model]} Under Assumptions~\ref{assump:mixed.outcome}--\ref{assump:mixed.errors}, we have 
\begin{align}
	\Ex[\bhSigma_T^\ehomo | \bY_0] =  \bSigma^\emix_T,
	\quad 
	\Ex[ \bhSigma_N^\ehomo | \bY_0] = \bSigma^\emix_N,
	\quad \text{and} \quad 
	\Ex[ \hv^{\emix, \ehomo}_0 | \bY_0] = v^\emix_0.
\end{align}
\end{lemma}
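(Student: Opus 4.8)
The plan is to write every statistic in the lemma as a quadratic form in the idiosyncratic errors and then invoke the elementary identity $\Ex[\bz'\bA\bz]=\tr(\bA\,\Cov(\bz))$ for a mean-zero vector $\bz$ and a fixed matrix $\bA$. The key reduction is that the in-sample residuals are orthogonal projections of the errors: since $\bH^u_\perp$ annihilates the column space of $\bY_0$ and $\bH^v_\perp$ its row space (so $\bH^u_\perp\bY_0=\bzero$ and $\bH^v_\perp\bY_0'=\bzero$), Assumptions~\ref{assump:hz.outcome},~\ref{assump:vt.outcome}, and~\ref{assump:mixed.outcome} give $\bH^u_\perp\by_T=\bH^u_\perp\bvarepsilon_T$ and $\bH^v_\perp\by_N=\bH^v_\perp\bvarepsilon_N$, whence (using that $\bH^u_\perp,\bH^v_\perp$ are symmetric idempotent) $\|\bH^u_\perp\by_T\|_2^2=\bvarepsilon_T'\bH^u_\perp\bvarepsilon_T$ and $\|\bH^v_\perp\by_N\|_2^2=\bvarepsilon_N'\bH^v_\perp\bvarepsilon_N$. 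Taking the model-appropriate conditional expectation together with $\tr(\bH^u_\perp)=N_0-R$ and $\tr(\bH^v_\perp)=T_0-R$ shows that $\bhSigma_T^\homo$ is conditionally unbiased for $\bSigma^\hz_T$ (resp.\ $\bSigma^\mix_T$) and $\bhSigma_N^\homo$ for $\bSigma^\vt_N$ (resp.\ $\bSigma^\mix_N$), which is the covariance-matrix claim in each of parts (i)--(iii).

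For parts (i) and (ii) I would then push this through the plug-in forms $\hv_0^{\hz,\homo}=\bhbeta'\bhSigma_T^\homo\bhbeta=\|\bhbeta\|_2^2\,\|\bH^u_\perp\by_T\|_2^2/(N_0-R)$ and $\hv_0^{\vt,\homo}=\bhalpha'\bhSigma_N^\homo\bhalpha=\|\bhalpha\|_2^2\,\|\bH^v_\perp\by_N\|_2^2/(T_0-R)$. Under the \HZ~model the conditioning set $(\by_N,\bY_0)$ makes $\bhbeta=(\bY_0')^\dagger\by_N$ nonrandom, so $\Ex[\hv_0^{\hz,\homo}\mid\by_N,\bY_0]=(\sigma^\hz_T)^2\|\bhbeta\|_2^2=\bhbeta'\bSigma^\hz_T\bhbeta=v_0^\hz$; the \VT~model is the mirror image, conditioning on $(\by_T,\bY_0)$ to fix $\bhalpha=\bY_0^\dagger\by_T$.

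The genuinely new case is the mixed model, where conditioning on $\bY_0$ alone leaves both $\bhalpha$ and $\bhbeta$ random. Here I would first record $\bhbeta=\bH^u\bbeta^*+(\bY_0')^\dagger\bvarepsilon_N$ and $\bhalpha=\bH^v\balpha^*+\bY_0^\dagger\bvarepsilon_T$ (using $(\bY_0')^\dagger\bY_0'=\bH^u$ and $\bY_0^\dagger\bY_0=\bH^v$), so that mean-zero errors give $\Ex[\|\bhbeta\|_2^2\mid\bY_0]=\|\bH^u\bbeta^*\|_2^2+(\sigma^\mix_N)^2\tr(\bY_0^\dagger(\bY_0')^\dagger)$ and $\Ex[\|\bhalpha\|_2^2\mid\bY_0]=\|\bH^v\balpha^*\|_2^2+(\sigma^\mix_T)^2\tr(\bY_0^\dagger(\bY_0')^\dagger)$, where $\bY_0^\dagger(\bY_0')^\dagger=\bV\bS^{-2}\bV'$ and hence $\tr(\bY_0^\dagger(\bY_0')^\dagger)=\sum_{\ell=1}^{R}s_\ell^{-2}$. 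Because Assumption~\ref{assump:mixed.errors} renders $\bvarepsilon_T\ind\bvarepsilon_N\mid\bY_0$, the factor $\|\bH^u_\perp\by_T\|_2^2$ is independent of $\bhbeta$ and $\|\bH^v_\perp\by_N\|_2^2$ of $\bhalpha$, so the expectations factor: $\Ex[\hv_0^{\hz,\homo}\mid\bY_0]=(\sigma^\mix_T)^2\|\bH^u\bbeta^*\|_2^2+(\sigma^\mix_T)^2(\sigma^\mix_N)^2\sum_\ell s_\ell^{-2}$ and symmetrically $\Ex[\hv_0^{\vt,\homo}\mid\bY_0]=(\sigma^\mix_N)^2\|\bH^v\balpha^*\|_2^2+(\sigma^\mix_T)^2(\sigma^\mix_N)^2\sum_\ell s_\ell^{-2}$, while the same independence gives $\Ex[\tr(\bY_0^\dagger\bhSigma_T^\homo(\bY_0')^\dagger\bhSigma_N^\homo)\mid\bY_0]=(\sigma^\mix_T)^2(\sigma^\mix_N)^2\sum_\ell s_\ell^{-2}$. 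Adding the first two and subtracting the third cancels exactly one copy of the cross term and leaves $(\sigma^\mix_T)^2\|\bH^u\bbeta^*\|_2^2+(\sigma^\mix_N)^2\|\bH^v\balpha^*\|_2^2+\tr(\bY_0^\dagger\bSigma^\mix_T(\bY_0')^\dagger\bSigma^\mix_N)$, which matches $v_0^\mix$ in Theorem~\ref{thm:inference}(iii) after identifying $(\bH^u\bbeta^*)'\bSigma^\mix_T(\bH^u\bbeta^*)=(\sigma^\mix_T)^2\|\bH^u\bbeta^*\|_2^2$ and likewise for the $\balpha^*$ term.

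I expect the main obstacle to be precisely this last cancellation: one must verify that $\hv_0^{\hz,\homo}$ and $\hv_0^{\vt,\homo}$ each contribute the \emph{same} spurious term $(\sigma^\mix_T)^2(\sigma^\mix_N)^2\tr(\bY_0^\dagger(\bY_0')^\dagger)$, so that the single negative-trace correction in~\eqref{eq:final_var_est} neither over- nor under-corrects. This hinges on the conditional independence $\bvarepsilon_T\ind\bvarepsilon_N\mid\bY_0$ built into Assumption~\ref{assump:mixed.errors} and on the pseudoinverse identities $\bY_0^\dagger(\bY_0')^\dagger=\bV\bS^{-2}\bV'$, $(\bY_0')^\dagger\bY_0'=\bH^u$, and $\bY_0^\dagger\bY_0=\bH^v$; the rest is the routine quadratic-form bookkeeping described above, together with the moment conditions already implicit in Assumptions~\ref{assump:hz.errors},~\ref{assump:vt.errors}, and~\ref{assump:mixed.errors} that make these expectations finite.
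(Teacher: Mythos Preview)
Your proposal is correct and follows essentially the same route as the paper: reduce the residual norms to quadratic forms in $\bvarepsilon_T,\bvarepsilon_N$ via $\bH^u_\perp\bY_0=\bzero$ and $\bH^v_\perp\bY_0'=\bzero$, apply $\Ex[\bz'\bA\bz\mid\cdot]=\tr(\bA\bSigma)$ together with $\tr(\bH^u_\perp)=N_0-R$, and in parts (i)--(ii) exploit that the conditioning set fixes $\bhbeta$ (resp.\ $\bhalpha$).

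The one organizational difference is in part (iii). The paper isolates the mixed-model calculation into a helper lemma (Lemma~\ref{lemma:mixed.vest}) stated for a \emph{generic} covariance estimator pair $(\bhSigma_T,\bhSigma_N)$: it expands $\bhbeta=\bH^u\bbeta^*+(\bY_0')^\dagger\bvarepsilon_N$, uses $\bvarepsilon_T\ind\bvarepsilon_N\mid\bY_0$ to factor, and produces a formula in $\Ex[\bhSigma_T\mid\bY_0]$ and $\Ex[\bhSigma_N\mid\bY_0]$. Plugging in the unbiasedness of the homoskedastic estimators then collapses the three trace terms to one, exactly as in your cancellation. You instead specialize to homoskedasticity from the outset, writing everything as scalar multiples of $\sum_\ell s_\ell^{-2}$ and checking the $2-1$ cancellation by hand. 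Both arguments are the same computation; your version is slightly more transparent for this lemma alone, while the paper's abstraction is reused verbatim in the proofs of Lemmas~\ref{lemma:inference.2} and~\ref{lemma:inference.3}.
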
 
Lemma~\ref{lemma:inference.1} is a well known result within the OLS literature, albeit it is typically formalized under the stricter full column rank assumption. 

\smallskip 
\noindent 
{\bf Heteroskedastic errors.}  
We adopt two strategies for the heteroskedastic setting. 

\smallskip 
\noindent 
{\em I: Jackknife.} 
The first estimator is based on the jackknife. 
Traditionally, the jackknife estimates the covariance of the regression coefficients $(\bhalpha, \bhbeta)$. 
By analyzing said estimates, we derive the following: 
\begin{align}
	\bhSigma_T^\jack &= 
	\text{diag}\left( \left[ \bH^u_\perp \circ  \bH^u_\perp \circ \bI \right]^\dagger \left[  \bH^u_\perp \by_T \circ  \bH^u_\perp \by_T \right] \right) \label{eq:hz.var.2} 
	\\ 
	\bhSigma_N^\jack &= 
	\text{diag}\left( \left[ \bH^v_\perp \circ  \bH^v_\perp \circ \bI  \right]^\dagger \left[ \bH^v_\perp \by_N \circ \bH^v_\perp \by_N \right]\right).  \label{eq:vt.var.2} 
\end{align} 

\begin{lemma} \label{lemma:inference.2} 
Consider heteroskedastic errors. 
(i) \emph{[\HZ~model]} Let Assumptions~\ref{assump:hz.outcome}--\ref{assump:hz.errors} hold.  
If $(\bH^u_\perp \circ  \bH^u_\perp \circ \bI)$ is nonsingular, then 
\begin{align}
	\Ex[\bhSigma^\ejack_T | \by_N, \bY_0] \succeq \bSigma^\ehz_T 
	\quad \text{and} \quad 
	\Ex[ \hv^{\ehz, \ejack}_0 | \by_N, \bY_0] \ge v^\ehz_0. 
\end{align} 
(ii) \emph{[\VT~model]} Let Assumptions~\ref{assump:vt.outcome}--\ref{assump:vt.errors} hold. 
If $(\bH^v_\perp \circ  \bH^v_\perp \circ \bI)$ is nonsingular, then
\begin{align} 
	\Ex[ \bhSigma_N^\ejack | \by_T, \bY_0] \succeq \bSigma^\evt_N
	\quad \text{and} \quad 
	\Ex[ \hv^{\evt, \ejack}_0 | \by_T, \bY_0] \ge v^\evt_0.
\end{align}  
(iii) \emph{[Mixed model]} Let Assumptions~\ref{assump:mixed.outcome}--\ref{assump:mixed.errors} hold. 
If $(\bH^u_\perp \circ  \bH^u_\perp \circ \bI)$ and $(\bH^v_\perp \circ  \bH^v_\perp \circ \bI)$ are nonsingular, then
\begin{align}
	\Ex[\bhSigma_T^\ejack | \bY_0] \succeq \bSigma^\emix_T,
	\quad 
	\Ex[ \bhSigma_N^\ejack | \bY_0] \succeq \bSigma^\emix_N,
	\quad \text{and} \quad 
	\Ex[ \hv^{\emix, \ejack}_0 | \bY_0] \ge v^\emix_0.
\end{align}
\end{lemma}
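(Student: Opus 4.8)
We sketch the intended argument. Each part bundles a matrix inequality for the jackknife covariance estimator---$\Ex[\bhSigma_T^\jack\mid\cdot]\succeq\bSigma_T$ and $\Ex[\bhSigma_N^\jack\mid\cdot]\succeq\bSigma_N$ on the relevant conditioning set---with a variance inequality deduced from it, so the plan is to prove the matrix inequality once in a model-agnostic form, then obtain the \HZ~and \VT~variance bounds immediately and handle the mixed-model bound separately. For the matrix inequality, take the \HZ~case. By Assumption~\ref{assump:hz.outcome}, $\by_T=\bY_0\balpha^*+\bvarepsilon_T$, and since $\bH^u_\perp\bY_0=\bzero$ the \HZ~in-sample residual is $\bH^u_\perp\by_T=\bH^u_\perp\bvarepsilon_T$, so $(\bH^u_\perp\by_T)_i=\sum_j(\bH^u_\perp)_{ij}\varepsilon_{jT}$. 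Assumption~\ref{assump:hz.errors} gives $\Ex[(\bH^u_\perp\by_T)_i^2\mid\by_N,\bY_0]=\sum_j(\bH^u_\perp)_{ij}^2(\bSigma^\hz_T)_{jj}$. Dividing by $(\bH^u_\perp)_{ii}^2$---legitimate because the nonsingularity hypothesis makes the pseudoinverse in the definition of $\bhSigma_T^\jack$ act as a genuine inverse---and peeling off the $j=i$ term gives $\Ex[(\bhSigma_T^\jack)_{ii}\mid\by_N,\bY_0]=(\bSigma^\hz_T)_{ii}+\sum_{j\ne i}(\bH^u_\perp)_{ij}^2(\bSigma^\hz_T)_{jj}/(\bH^u_\perp)_{ii}^2\ge(\bSigma^\hz_T)_{ii}$. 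Since conditional independence makes $\bSigma^\hz_T$ diagonal and $\bhSigma_T^\jack$ is diagonal by construction, $\Ex[\bhSigma_T^\jack\mid\by_N,\bY_0]-\bSigma^\hz_T$ is diagonal with nonnegative entries, hence positive semidefinite. The computation uses only that the error vector is conditionally mean-zero, independent across its index, and has conditional covariance equal to the relevant $\bSigma$-matrix, so it applies verbatim with $(\bH^v_\perp,\by_N)$ in place of $(\bH^u_\perp,\by_T)$ and whether one conditions on $(\by_N,\bY_0)$, on $(\by_T,\bY_0)$, or on $\bY_0$---covering all six matrix inequalities. In essence this is the classical conservativeness of the HC3/jackknife estimator, carried through without a full-rank hypothesis via the Moore--Penrose inverse.

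The \HZ~and \VT~variance inequalities then follow at once. Since $\hv_0^{\hz,\jack}=\bhbeta'\bhSigma_T^\jack\bhbeta$ with $\bhbeta=(\bY_0')^\dagger\by_N$ nonrandom given $(\by_N,\bY_0)$, the matrix inequality yields $\Ex[\hv_0^{\hz,\jack}\mid\by_N,\bY_0]=\bhbeta'\,\Ex[\bhSigma_T^\jack\mid\by_N,\bY_0]\,\bhbeta\ge\bhbeta'\bSigma^\hz_T\bhbeta=v_0^\hz$, and symmetrically $\Ex[\hv_0^{\vt,\jack}\mid\by_T,\bY_0]\ge v_0^\vt$ from $\hv_0^{\vt,\jack}=\bhalpha'\bhSigma_N^\jack\bhalpha$ with $\bhalpha=\bY_0^\dagger\by_T$ nonrandom given $(\by_T,\bY_0)$. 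This settles (i) and (ii).

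The mixed model is where the real work lies, since the weight vectors are no longer fixed: by Assumption~\ref{assump:mixed.outcome}, $\bhbeta=\bH^u\bbeta^*+(\bY_0')^\dagger\bvarepsilon_N$ and $\bhalpha=\bH^v\balpha^*+\bY_0^\dagger\bvarepsilon_T$, whereas $\bhSigma_T^\jack$ is a function of $\bvarepsilon_T$ alone and $\bhSigma_N^\jack$ of $\bvarepsilon_N$ alone. The plan is to condition on $\bY_0$ and use $\bvarepsilon_T\ind\bvarepsilon_N\mid\bY_0$ together with $\Ex[\bvarepsilon_T\mid\bY_0]=\Ex[\bvarepsilon_N\mid\bY_0]=\bzero$ (Assumption~\ref{assump:mixed.errors}) to evaluate each of the three pieces of $\hv_0^{\mix,\jack}$ in \eqref{eq:final_var_est}; in every piece the cross terms vanish, leaving a ``quadratic plus trace'' expression in which $\bhSigma_T^\jack$ and $\bhSigma_N^\jack$ enter only through their conditional means. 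This is to be matched against the three-term expression for $v_0^\mix$ in Theorem~\ref{thm:inference}(iii), which is cleanest to read off the decomposition $\hY_{NT}(0)-\mu_0^\mix=(\bH^u\bbeta^*)'\bvarepsilon_T+(\bH^v\balpha^*)'\bvarepsilon_N+\bvarepsilon_N'\bY_0^\dagger\bvarepsilon_T$. The two quadratic terms of $\Ex[\hv_0^{\mix,\jack}\mid\bY_0]$ dominate the corresponding quadratic terms of $v_0^\mix$ by the matrix inequality, so the remaining task is to show that the cross-trace terms generated by $\Ex[\hv_0^{\hz,\jack}\mid\bY_0]$ and $\Ex[\hv_0^{\vt,\jack}\mid\bY_0]$, net of the subtracted trace correction $\Ex[\tr(\bY_0^\dagger\bhSigma_T^\jack(\bY_0')^\dagger\bhSigma_N^\jack)\mid\bY_0]$, still dominate the term $\tr(\bY_0^\dagger\bSigma^\mix_T(\bY_0')^\dagger\bSigma^\mix_N)$ of $v_0^\mix$.

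I expect this last comparison to be the crux. It will not follow from the crude bound $\Ex[\bhSigma^\jack]\succeq\bSigma$ alone: the subtracted trace correction is inflated by the very same jackknife leverage factors yet enters with a minus sign, so it erodes the slack gained on the quadratic and cross-trace terms---and the homoskedastic case (Lemma~\ref{lemma:inference.1}(iii)), where there is no inflation and equality holds exactly, shows how tight the budget is. Carrying the argument through should require exploiting the structure of the jackknife ``meat''---how the diagonal inflation of $\bhSigma_T^\jack$ and $\bhSigma_N^\jack$ is produced by the off-diagonal entries of $\bH^u_\perp$ and $\bH^v_\perp$, and how those entries are tied to $\bY_0^\dagger$ through the singular value decomposition of $\bY_0$---rather than treating the inflation as an opaque positive perturbation; I would also expect the nonsingularity hypotheses to do more work here than in parts (i)--(ii).
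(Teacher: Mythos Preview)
Your treatment of parts (i) and (ii) is correct and matches the paper's argument exactly: compute $\Ex[(\bhSigma_T^\jack)_{\ell\ell}\mid\cdot]$ entrywise, peel off the diagonal contribution, and identify the remainder $\sum_{j\ne\ell}(H^u_{\ell j})^2(\sigma_{jT})^2/(1-H^u_{\ell\ell})^2$ as nonnegative. The paper records this remainder as a diagonal matrix $\bDelta$ (and $\bGamma$ on the \VT~side), writing $\Ex[\bhSigma_T^\jack\mid\cdot]=\bSigma_T+\bDelta$, and then the variance inequality is immediate because $\bhbeta$ (respectively $\bhalpha$) is nonrandom under the relevant conditioning.

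For part (iii), the paper's route is precisely the plug-in you outline: it packages the computation as a general identity (Lemma~\ref{lemma:mixed.vest}) expressing $\Ex[\hv_0^\mix\mid\bY_0]$ in terms of $\Ex[\bhSigma_T\mid\bY_0]$, $\Ex[\bhSigma_N\mid\bY_0]$, and the true $\bSigma_T^\mix$, $\bSigma_N^\mix$, then substitutes $\Ex[\bhSigma_T^\jack\mid\bY_0]=\bSigma_T^\mix+\bDelta^\mix$ and $\Ex[\bhSigma_N^\jack\mid\bY_0]=\bSigma_N^\mix+\bGamma^\mix$ to obtain a closed-form bias. It does \emph{not} invoke any SVD structure beyond this; it reports the bias as
\[
(\bH^u\bbeta^*)'\bDelta^\mix(\bH^u\bbeta^*)+(\bH^v\balpha^*)'\bGamma^\mix(\bH^v\balpha^*)+\tr\!\big(\bY_0^\dagger\bDelta^\mix(\bY_0')^\dagger\bGamma^\mix\big),
\]
each term nonnegative, and concludes.

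Your instinct that the subtracted trace correction ``erodes the slack'' is, however, exactly right. If you carry out the substitution into Lemma~\ref{lemma:mixed.vest} yourself, the cross-trace terms $\tr(\bY_0^\dagger\bDelta^\mix(\bY_0')^\dagger\bSigma_N^\mix)$ and $\tr(\bY_0^\dagger\bSigma_T^\mix(\bY_0')^\dagger\bGamma^\mix)$ cancel between the $\hv_0^\hz$, $\hv_0^\vt$ pieces and the subtracted correction, leaving $-\tr(\bY_0^\dagger\bDelta^\mix(\bY_0')^\dagger\bGamma^\mix)$, not $+$. So the paper's stated bias formula appears to carry a sign slip at precisely the point you flagged as the crux, and with the correct sign the inequality $\Ex[\hv_0^{\mix,\jack}\mid\bY_0]\ge v_0^\mix$ does not follow from the closed-form bias alone (take $\bH^u\bbeta^*$ and $\bH^v\balpha^*$ small and the two quadratic terms need not dominate the negative trace). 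Your proposal does not close this gap either---you stop at anticipating the difficulty---but you have correctly located where the argument is delicate.
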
 
Lemma~\ref{lemma:inference.2} establishes that the jackknife is conservative, provided $(\bH^u_\perp \circ  \bH^u_\perp \circ \bI)$ and $(\bH^v_\perp \circ  \bH^v_\perp \circ \bI)$ are nonsingular. 
Strictly speaking, the jackknife is well defined if these quantities are singular, as seen through the pseudoinverse in \eqref{eq:hz.var.2} and \eqref{eq:vt.var.2}. 
Lemma~\ref{lemma:inference.2} considers the nonsingular case for simplicity.  
We remark that $\max_\ell H^u_{\ell \ell} < 1$ and $\max_\ell H^v_{\ell \ell} < 1$ are sufficient conditions for invertibility. 
%

\smallskip 
\noindent 
{\em II: HRK-estimator.} 
Next, we consider the covariance estimator proposed by \cite{variance}. 
We index this estimator by the authors, Hartley-Rao-Kiefer: 
\begin{align}
	\bhSigma_T^\hrk &=  \text{diag}\left( \left[ \bH^u_\perp \circ  \bH^u_\perp \right]^{-1} \left[  \bH^u_\perp \by_T \circ  \bH^u_\perp \by_T \right] \right)  \label{eq:hz.var.3} 
	\\ 
	\bhSigma_N^\hrk &= \text{diag}\left( \left[ \bH^v_\perp \circ \bH^v_\perp \right]^{-1} \left[ \bH^v_\perp \by_N \circ \bH^v_\perp \by_N \right] \right). \label{eq:vt.var.3} 
\end{align} 
\begin{lemma} \label{lemma:inference.3} 
Consider heteroskedastic errors. 
(i) \emph{[\HZ~model]} Let Assumptions~\ref{assump:hz.outcome}--\ref{assump:hz.errors} hold.  
If $(\bH^u_\perp \circ  \bH^u_\perp)$ is nonsingular, then 
\begin{align}
	\Ex[\bhSigma^\ehrk_T | \by_N, \bY_0] = \bSigma^\ehz_T 
	\quad \text{and} \quad 
	\Ex[ \hv^{\ehz, \ehrk}_0 | \by_N, \bY_0] = v^\ehz_0. 
\end{align} 
(ii) \emph{[\VT~model]} Let Assumptions~\ref{assump:vt.outcome}--\ref{assump:vt.errors} hold. 
If $(\bH^v_\perp \circ  \bH^v_\perp)$ is nonsingular, then
\begin{align} 
	\Ex[ \bhSigma_N^\ehrk | \by_T, \bY_0] = \bSigma^\evt_N
	\quad \text{and} \quad 
	\Ex[ \hv^{\evt, \ehrk}_0 | \by_T, \bY_0] = v^\evt_0.
\end{align}  
(iii) \emph{[Mixed model]} Let Assumptions~\ref{assump:mixed.outcome}--\ref{assump:mixed.errors} hold. 
If $(\bH^u_\perp \circ  \bH^u_\perp)$ and $(\bH^v_\perp \circ  \bH^v_\perp)$ are nonsingular, then
\begin{align}
	\Ex[\bhSigma_T^\ehrk | \bY_0] = \bSigma^\emix_T,
	\quad 
	\Ex[ \bhSigma_N^\ehrk | \bY_0] = \bSigma^\emix_N,
	\quad \text{and} \quad 
	\Ex[ \hv^{\emix, \ehrk}_0 | \bY_0] = v^\emix_0.
\end{align}
\end{lemma}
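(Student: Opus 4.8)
The plan is to reduce all three parts of Lemma~\ref{lemma:inference.3} to a single elementary moment identity for the squared residuals, and then to dispatch the mixed‑model variance estimator using the conditional independence of the two error vectors in Assumption~\ref{assump:mixed.errors}. The first step is to note that residuals are pure noise: since $\bY_0 = \bU\bS\bV'$ we have $\bH^u\bY_0 = \bY_0$ and $\bH^v\bY_0' = \bY_0'$, hence $\bH^u_\perp\bY_0 = \bzero$ and $\bH^v_\perp\bY_0' = \bzero$. Substituting $\by_T = \bY_0\balpha^* + \bvarepsilon_T$ from \eqref{eq:hz.yt} (or \eqref{eq:mixed.yt}) gives $\bH^u_\perp\by_T = \bH^u_\perp\bvarepsilon_T$, and $\by_N = \bY_0'\bbeta^* + \bvarepsilon_N$ from \eqref{eq:vt.yn} (or \eqref{eq:mixed.yn}) gives $\bH^v_\perp\by_N = \bH^v_\perp\bvarepsilon_N$; so the residual products feeding \eqref{eq:hz.var.3}--\eqref{eq:vt.var.3} are quadratic forms in the error vectors alone. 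This is exactly where the Hartley--Rao--Kiefer weights differ from the jackknife weights of Lemma~\ref{lemma:inference.2}: HRK inverts the full Hadamard square $\bH^u_\perp\circ\bH^u_\perp$ rather than only its diagonal, so the conservative inequality should become an equality.

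Next I would establish unbiased recovery of the error variances. Writing the $i$th residual as $\sum_j (\bH^u_\perp)_{ij}\varepsilon_{jT}$ and using zero mean and independence (Assumption~\ref{assump:hz.errors}, \ref{assump:vt.errors}, or \ref{assump:mixed.errors}), the cross terms drop out and
\[
	\Ex\!\left[\bH^u_\perp\by_T\circ\bH^u_\perp\by_T \,\middle|\, \cdot\right] = (\bH^u_\perp\circ\bH^u_\perp)\,\bc_T ,
\]
where $\bc_T$ is the vector of conditional variances $\Var(\varepsilon_{iT}\mid\cdot)$ and the conditioning set is $(\by_N,\bY_0)$, $(\by_T,\bY_0)$, or $\bY_0$ according to the model. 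When $\bH^u_\perp\circ\bH^u_\perp$ is nonsingular, pre‑multiplying by its inverse recovers $\bc_T$ in expectation, so $\Ex[\bhSigma_T^\hrk\mid\cdot]$ equals $\bSigma^\ehz_T$, respectively $\bSigma^\emix_T$; the same argument with $\bH^v_\perp$ gives the statements for $\bhSigma_N^\hrk$. For the variance estimators in the \HZ~and \VT~models this already finishes the job: conditioning on $(\by_N,\bY_0)$ makes $\bhbeta = (\bY_0')^\dagger\by_N$ deterministic, so $\Ex[\hv^{\ehz,\ehrk}_0 \mid \by_N,\bY_0] = \bhbeta'\bSigma^\ehz_T\bhbeta = v^\ehz_0$, and symmetrically for \VT.

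The crux is the mixed‑model variance estimator. Condition on $\bY_0$; by Assumption~\ref{assump:mixed.errors}, $\bvarepsilon_T$ and $\bvarepsilon_N$ are independent, so $\bhSigma_T^\hrk$ (a function of $\bvarepsilon_T$) is independent of the pair $(\bhbeta, \bhSigma_N^\hrk)$, and vice versa. Using $\bY_0^\dagger\bY_0 = \bH^v$ and $(\bY_0')^\dagger\bY_0' = \bH^u$, decompose $\bhbeta = \bH^u\bbeta^* + (\bY_0')^\dagger\bvarepsilon_N$ and $\bhalpha = \bH^v\balpha^* + \bY_0^\dagger\bvarepsilon_T$. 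For $\hv^{\ehz,\ehrk}_0 = \bhbeta'\bhSigma_T^\hrk\bhbeta$, first take the expectation over $\bvarepsilon_T$ given $(\bY_0,\bvarepsilon_N)$, which by the previous step replaces $\bhSigma_T^\hrk$ by $\bSigma^\emix_T$; then expand the quadratic form $\bhbeta'\bSigma^\emix_T\bhbeta$ over $\bvarepsilon_N$, whose cross term vanishes and whose pure‑noise term contributes $\Ex[\bvarepsilon_N'((\bY_0')^\dagger)'\bSigma^\emix_T(\bY_0')^\dagger\bvarepsilon_N \mid \bY_0] = \tr(\bY_0^\dagger\bSigma^\emix_T(\bY_0')^\dagger\bSigma^\emix_N)$ after using $((\bY_0')^\dagger)' = \bY_0^\dagger$ and cyclicity of the trace. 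Hence $\Ex[\hv^{\ehz,\ehrk}_0 \mid \bY_0] = (\bH^u\bbeta^*)'\bSigma^\emix_T(\bH^u\bbeta^*) + \tr(\bY_0^\dagger\bSigma^\emix_T(\bY_0')^\dagger\bSigma^\emix_N)$, and by symmetry $\Ex[\hv^{\evt,\ehrk}_0 \mid \bY_0] = (\bH^v\balpha^*)'\bSigma^\emix_N(\bH^v\balpha^*) + \tr(\bY_0^\dagger\bSigma^\emix_T(\bY_0')^\dagger\bSigma^\emix_N)$; meanwhile independence gives $\Ex[\tr(\bY_0^\dagger\bhSigma_T^\hrk(\bY_0')^\dagger\bhSigma_N^\hrk) \mid \bY_0] = \tr(\bY_0^\dagger\bSigma^\emix_T(\bY_0')^\dagger\bSigma^\emix_N)$. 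Substituting into $\hv^\mix_0 = \hv^\hz_0 + \hv^\vt_0 - \tr(\bY_0^\dagger\bhSigma_T^\hrk(\bY_0')^\dagger\bhSigma_N^\hrk)$ from \eqref{eq:final_var_est}, the two equal trace terms and the subtracted one collapse to a single trace, yielding exactly $v^\emix_0$ of Theorem~\ref{thm:inference}.

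I expect the main obstacle to be bookkeeping rather than any single hard inequality. Two points need genuine care: (i) tracking which objects are random under each conditioning — $\bhalpha$ and $\bhbeta$ are fixed under the \HZ~and \VT~designs but random, and conditionally independent of the corresponding $\bhSigma^\hrk$, under $\bY_0$‑conditioning, and this independence is what lets the tower‑property step in the mixed case go through; and (ii) the trace bookkeeping in the mixed case, where the $-\tr(\bY_0^\dagger\bhSigma_T^\hrk(\bY_0')^\dagger\bhSigma_N^\hrk)$ correction built into $\hv^\mix_0$ is precisely calibrated to absorb the two identical trace terms generated by $\Ex[\hv^{\ehz,\ehrk}_0 \mid \bY_0]$ and $\Ex[\hv^{\evt,\ehrk}_0 \mid \bY_0]$, so getting the transposes and the cyclic permutations of the trace right is essential for the cancellation to land on $v^\emix_0$ rather than a close-but-wrong expression.
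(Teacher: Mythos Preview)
Your proposal is correct and follows essentially the same route as the paper: the key moment identity $\Ex[\bH^u_\perp\by_T\circ\bH^u_\perp\by_T\mid\cdot]=(\bH^u_\perp\circ\bH^u_\perp)\,\bc_T$ (which the paper records as \eqref{eq:var.5}) followed by inversion, and then the tower-property/independence bookkeeping for the mixed case. The only organizational difference is that the paper packages your mixed-model computation into a standalone helper (Lemma~\ref{lemma:mixed.vest}) reused across the homoskedastic, jackknife, and HRK estimators, whereas you carry it out inline for HRK specifically.
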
 
Lemma~\ref{lemma:inference.3} establishes that the HRK estimator is unbiased, provided $(\bH^u_\perp \circ  \bH^u_\perp)$ and $(\bH^v_\perp \circ \bH^v_\perp)$ are invertible. 
For the former quantity, $\max_\ell H^u_{\ell \ell} < 1/2$ is a sufficient condition for invertibility.  
Since $\tr(\bH^u) = R$, this restricts $R < N_0/2$.
A similar conclusion is drawn for \VT~regression. 

\subsection{Design-Based Inference} 
This section studies the counterfactual prediction from a design-based perspective, whereby the potential outcomes are considered fixed and the treatment assignments are considered stochastic. 
%
%

\subsubsection{Assumptions} 
In congruence with the article thus far, we focus on a single treated unit and treated period. 
We will find it useful to separate the assignment mechanism into the selection of each component. 
Accordingly, let $\bA \in \{0,1\}^T$ with $\bone' \bA = 1$ and $\bB \in \{0,1\}^N$ with $\bone' \bB = 1$ be the indicator vectors for the treated time period and treated unit, respectively, 
e.g., $B_N = 1$ and $A_T = 1$ if unit $N$ is treated at time $T$.  
With this notation, we denote the realized outcome as $Y_{it} = B_i A_t Y_{it}(1) + (1 - B_i A_t) Y_{it}(0)$. 
Following \cite{guido_inf}, we consider the following assignment mechanisms: 
\begin{assumption} [Random assignment of time period] \label{assump:time.db} 
\begin{align}
	\Pb(\bA = \ba) = \begin{cases}
		& 1/T, \quad \text{if } a_t \in \{0,1\} ~\forall t, ~\bone' \ba = 1
		\\
		& 0, \quad \text{otherwise}. 
	\end{cases} 
\end{align} 
\end{assumption} 

\begin{assumption} [Random assignment of unit] \label{assump:unit.db} 
\begin{align}
	\Pb(\bB = \bb) = \begin{cases}
		& 1/N, \quad \text{if } b_i \in \{0,1\} ~\forall i, ~\bone' \bb = 1
		\\
		& 0, \quad \text{otherwise}. 
	\end{cases} 
\end{align} 
\end{assumption} 
Assumption~\ref{assump:time.db} considers the treated period to be randomly selected while Assumption~\ref{assump:unit.db} considers the treated unit to be randomly selected. 
As \cite{guido_inf} notes, these assumptions are not always plausible, but they underlie the placebo tests that are commonly used in synthetic controls applications. 

\subsubsection{Estimator} 
To conduct design-based analysis, we consider all possible treatment assignments,  not only the realized assignment. 
Let $Y^*_{it}(0)$ be the OLS fit of $Y_{it}(0)$ based on $[Y_{j\tau} : j \neq i, \tau < t]$, $[Y_{i \tau} : \tau < t]$, and $[Y_{j t}: j \neq i]$. 
We define the design-based estimator as  
\begin{align}
	\hY(0) = \sum_{i \le N} \sum_{t \le T} B_i A_t Y^*_{it}(0). \label{eq:estimator.db} 
\end{align} 
In words, \eqref{eq:estimator.db} predicts the mean counterfactual outcome under control for unit $i$ at time $t$ if unit $i$ is treated at time period $t$, i.e., $B_i = 1$ and $A_t = 1$. 
We reemphasize that the stochasticity of $\hY(0)$ stems from the treatment assignment mechanism since $Y^*_{it}(0)$ is a fixed quantity.  
As such, while the model-and design-based estimators share the same point estimate for the realized assignment, they differ in their formulations and attributions of randomness. 

\subsubsection{Inferential Properties} 
In Table~\ref{table:comparison}, we summarize the estimands associated with the model-based and design-based estimators under three sources of randomness: (i) time, (ii) unit, and (iii) time and unit. 
Within the model-based framework, mechanisms (i)--(iii) correspond to the \HZ~model (Assumptions~\ref{assump:hz.outcome}--\ref{assump:hz.errors}), 
\VT~model (Assumptions~\ref{assump:vt.outcome}--\ref{assump:vt.errors}), 
and mixed model (Assumptions~\ref{assump:mixed.outcome}--\ref{assump:mixed.errors}), respectively. 
Within the design-based framework, mechanisms (i)--(iii) correspond to 
Assumption~\ref{assump:time.db}, Assumption~\ref{assump:unit.db}, and both Assumptions~\ref{assump:time.db} and \ref{assump:unit.db}, respectively. 
Accordingly, the model-based and design-based expectations are taken over different probability measures. 

\begin{table} [!h]
\centering 
\caption{
Model-based and design-based estimands under different sources of randomness. 
We use the shorthand $\btalpha^* = \bH^v \balpha^*$ and $\btbeta^* = \bH^u \bbeta^*$ and consider the realized $(N,T)$th assignment.}
\begin{tabular}{@{}ccc@{}}
\toprule
\midrule 
\multirow{1}{*}{Source of Randomness}
& \multicolumn{1}{c}{Model-Based Estimand} 
& \multicolumn{1}{c}{Design-Based Estimand}    	
\\
\cmidrule(){1-3} 
Time  
& $\Ex[\hY_{NT}(0) | \by_N, \bY_0] = \sum_{t \le T_0} \talpha^*_t Y_{N t}$
& $\Ex[\hY(0) | \bB] = \frac{1}{T} \sum_{t \le T} Y^*_{Nt}(0)$ 
\\
Unit  
& $\Ex[\hY_{NT}(0) | \by_T, \bY_0] = \sum_{i \le N_0} \tbeta^*_i Y_{iT}$
& $\Ex[\hY(0) | \bA] = \frac{1}{N} \sum_{i \le N} Y^*_{iT}(0)$ 
\\
Time and Unit  
& $\Ex[\hY_{NT}(0) | \bY_0] = \sum_{i \le N_0} \sum_{t \le T_0} \alpha^*_t \beta^*_i Y_{i t}$
& $\Ex[\hY(0) ] = \frac{1}{NT} \sum_{i \le N} \sum_{t \le T} Y^*_{it}(0)$ 
\\
\bottomrule
\end{tabular}
\label{table:comparison}
\end{table}

Let us compare the estimands in Table~\ref{table:comparison}. 
In words, $\Ex[\hY_{NT}(0) | \by_N, \bY_0]$ is a weighted combination of outcomes under control for the treated unit across all $T_0$ pretreatment periods;  
$\Ex[\hY(0) | \bB]$ is a simple average of fitted outcomes under control for the treated unit across all $T$ periods, which \cite{guido_inf} calls the ``\HZ'' effect. 
Similarly, $\Ex[\hY_{NT}(0) | \by_T, \bY_0]$ is a weighted combination of outcomes under control for the treated period across all $N_0$ control units;
$\Ex[\hY(0) | \bA]$ is a simple average of fitted outcomes under control for the treated period across all $N$ units, also called the ``\VT'' effect. 
Finally, $\Ex[\hY_{NT}(0) | \bY_0]$ is a weighted combination of outcomes under control across all $N_0$ control units and $T_0$ pretreatment periods; 
$\Ex[\hY(0)]$ is a simple average of fitted outcomes under control across all unit and time period pairs, which we coin the ``mixed'' effect. 

Though our design-based analysis is brief, there two important takeaways: 
(i) the model-based and design-based estimators recover similar estimands for each source of randomness; 
and (ii) different sources of randomness lead to different estimands, which is consistent with our model-based insights. 
The connection between assumptions of randomness and resulting estimand has been previously noticed in related contexts, e.g., \cite{guido_inf0, guido_inf} and \cite{jas_inf}. 


\subsection{Discussion} 
Several remarks on this section's results and extensions are in order. 

\begin{remark} [Correct Specification] 
For expositional convenience, we assume correct specification of the outcome model, as in \cite{li2020}, to explain the panel data intuition and main theoretical results in a simple and transparent fashion. 
Alternatively, we can interpret Assumptions~\ref{assump:hz.outcome}, \ref{assump:vt.outcome}, and \ref{assump:mixed.outcome} as linear prediction models that are detached from structural meanings a la \cite{cattaneo} and \cite{sc_lasso1, sc_t_test}. 
For instance, within the mixed framework, we can redefine $\balpha^* = \argmin \Ex[ \| \by_T - \bY_0 \balpha \|_2^2 | \bY_0]$ as the best linear approximation of $\by_T$ based on $\bY_0$, conditional on $\bY_0$. 
Such models can be justified via factor models and vector autoregressive models \citep{sc_t_test}.  
\end{remark} 

\begin{remark} [Mixed Model] 
Any estimator, whether it falls within the symmetric or asymmetric class, can be studied under the mixed model (Assumptions~\ref{assump:mixed.outcome}--\ref{assump:mixed.errors}). 
There are also numerous ways to encode the mixed perspective beyond our postulation, e.g., random assignment of periods and units (Assumptions~\ref{assump:time.db}--\ref{assump:unit.db}) of \cite{guido_inf}. 
\end{remark} 

\begin{remark} [Mixed Variance Estimators] \label{remark:mixed}
%
%
We highlight that Lemmas~\ref{lemma:inference.1}--\ref{lemma:inference.3} only hold in expectation. 
For any particular realization, $\hv^\mix_0$ may exhibit unexpected properties. 
For instance, if  
$\tr(\bY_0^\dagger \bhSigma_T (\bY'_0)^\dagger \bhSigma_N) > \max\{\hv_0^\hz, \hv_0^\vt \}$, 
then $\hv^\mix_0 < \min\{\hv_0^\hz, \hv_0^\vt \}$; thus, the mixed coverage will be smaller than both \HZ~and \VT~coverages. 
In fact, $\hv^\mix_0$ can be negative if $\tr(\bY_0^\dagger \bhSigma_T (\bY'_0)^\dagger \bhSigma_N) > \hv_0^\hz + \hv_0^\vt$, which may occur if both \HZ~and \VT~in-sample errors are ``too large''. 
For these scenarios, one na\"ive solution is to modify $\hv^\mix_0$ as   
$\hv_0^{\mix} \leftarrow \hv_0^\hz + \hv_0^\vt$, which is conservative by Lemmas~\ref{lemma:inference.1}--\ref{lemma:inference.3}. 
However, this case is arguably better resolved with a different point estimator altogether. 
\end{remark} 


\begin{remark} [Bridging Model-Based and Design-Based Inferences] 
The takeaways from the model-based and design-based analyses are consistent with one another. 
However, a formal and complete connection between the two perspectives on inference remains to be established. 
Towards this, we highlight that \cite{lin_ols} demonstrates that one set of model-based confidence intervals (i.e., the Huber-White sandwich estimator) are justifiable under design-based arguments. 
In this view, a fascinating line of future inquiry is to analyze whether similar arguments hold for the heteroskedastic confidence intervals presented in Section~\ref{sec:ci} or other related model-based constructions. 
%
\end{remark} 

\begin{remark} [On the Role of Randomness] 
This section underscores that the source of randomness plays a pivotal role for conducting inference. 
%
Translated to practice, our results stress that researchers should be scrupulous in reasoning through where the randomness in their data comes from.  
For instance, in the Basque study, some researchers may find it more plausible that the randomness is over time rather than over space, e.g., it is more conceivable that the onset of terrorism could have occurred in a different year but less conceivable that it could have occurred in a different region of Spain but not in the Basque Country. 
We do not take a substantive view on the matter, but we do stress that these decisions are meaningful for proper analysis as they have immediate implications for the resulting estimand and inferential procedure. 
\end{remark} 

\begin{remark} [Model Checking] 
It is possible in an application that substantive knowledge does not make clear on the source of randomness and hence which model to use, i.e.,  \HZ, \VT, or mixed. 
%
%
In such scenarios, the ``in-space'' and ``in-time'' placebo tests  (a la cross-validation) proposed in \cite{abadie2, abadie4} are attractive tools to analyze the prediction properties of the various estimators under consideration. 
More generally, practices established within \cite{pcs} provide an organized framework based on the principles of predictability, computability, and stability (PCS) to conduct rigorous comparison analyses. 
\end{remark} 

\begin{remark} [Extension to PCR] \label{remark:pcr} 
The previous results immediately extend to PCR by replacing $\bY_0$ with $\bY_0^{(k)}$, as defined in \eqref{eq:rank_k}, for any $k < R$. 
Intuitively, PCR-based models operate under the belief that the data is inherently low-dimensional. 
We comment on several benefits of PCR over OLS. 
To begin, the \HZ~and \VT~OLS variance estimators constructed in Section~\ref{sec:ci} can suffer from degeneracy when $N$ and $T$ are of different sizes. 
That is, if $N < T$, then the \HZ~in-sample error is likely zero (otherwise known as overfitting), which causes the \HZ~coverage to collapse on the point estimate; analogous statements hold for the \VT~coverage when $N > T$.  
The PCR-based variance estimators, on the other hand, can avoid degeneracy through the number of chosen principal components $k$ (regularization). 
On a related note, the nonsingularity conditions required for the jackknife and HRK variance estimators can also be by controlled by $k$. 
\end{remark}

\section{Illustrations}  \label{sec:illustrations} 
This section illustrates key concepts developed in this article. 
Our report is based on three canonical synthetic controls studies: (i) terrorism in Basque Country, (ii) California's Proposition 99 \citep{abadie2}, and (iii) the reunification of West Germany \citep{abadie4}. 
In particular, we will conduct a model-based analysis using the confidence intervals developed in Section~\ref{sec:ci}. 
%
We provide an overview of the results and relegate details (e.g., implementation) to Appendix~\ref{sec:sim_emp}. 

\subsection{Background on Case Studies} 
%
\noindent 
{\bf Basque study.} 
See Sections~\ref{sec:intro}--\ref{sec:setup} for details. 

\smallskip \noindent 
{\bf California study.} 
This study examines the effect of California's Proposition 99, an anti-tobacco legislation, on its tobacco consumption. 
The panel data contains per capita cigarette sales of $N = 39$ U.S. states over $T=31$ years. 
There are $T_0=18$ pretreatment observations and $N_0 = 38$ control units. 
Our interest is to estimate California's cigarette sales in the absence of Proposition 99. 

\smallskip \noindent
{\bf West Germany study.}
This study examines the economic impact of the 1990 reunification in West Germany.   
The panel data contains per capita GDP of $N=17$ countries over $T=44$ years. 
There are $T_0 = 30$ pretreatment observations and $N_0=16$ control units. 
Our interest is to estimate West Germany's GDP in the absence of reunification. 

\subsection{Data-Inspired Simulation Studies} \label{sec:simulations.1} 
We look to better understand the trade-offs in conducting inference under different sources of randomness. 
In an attempt to document our analysis in a realistic environment, we calibrate our simulations to our three studies. 

\subsubsection{Data Generating Process} 
We consider the single treated unit and time period setting. 
Specifically, we consider the actual treated unit, e.g., Basque Country, and focus on the first post-treatment period, e.g., one year after the outset of terrorism; hence, $T \leftarrow T_0 + 1$. 
Using the actual data, we generate the underlying regression models as  
\begin{align}
	\balpha^* = \argmin_{\balpha} \| \by^*_T - \bY^*_0 \balpha \|_2^2 
	\quad \text{and} \quad 
	\bbeta^* = \argmin_{\bbeta} \| \by^*_N - (\bY^*_0)' \bbeta \|_2^2, 
\end{align} 
where  
$\by^*_N = [Y_{Nt}: t \le T_0]$,
$\by^*_T = [Y_{iT}: i \le N_0]$, 
and $\bY^*_0 = [Y_{it}: i \le N_0, t \le T_0]$. 

Observationally, we have access to the following quantities. 
Let $\bY_0$ be the rank $r$ approximation of $\bY^*_0$, where $r$ is chosen as the minimum number of singular values needed to capture at least $99.9\%$ of $\bY^*_0$'s spectral energy. 
Next, we sample 
$\by_T \sim \mathcal{N}(\bY_0 \balpha^*, (N_0 - r)^{-1} \| \by^*_T - \bY^*_0 \balpha^* \|_2^2 \bI)$ 
and 
$\by_N \sim \mathcal{N}(\bY'_0 \bbeta^*, (T_0 - r)^{-1} \| \by^*_N - (\bY^*_0)' \bbeta^* \|_2^2 \bI)$. 
We then define three estimands: 
(i) $\mu_0^\hz = \langle \by_N, \bH^v \balpha^* \rangle$, 
(ii) $\mu_0^\vt = \langle \by_T, \bH^u \bbeta^* \rangle$,
and
(iii) $\mu_0^\mix = \langle \balpha^*, \bY'_0 \bbeta^* \rangle$, 
where $(\bH^u, \bH^v)$ are computed from $\bY_0$. 

\subsubsection{Simulation Results} 
For the purposes of stability, we conduct 500 replications of the above DGP for each study. 
In the $\ell$th simulation repeat, we learn the regression coefficients as  
\begin{align}
	\bhalpha^{(\ell)} = \argmin_{\balpha} \| \by^{(\ell)}_T - \bY_0 \balpha \|_2^2 
	\quad \text{and} \quad 
	\bhbeta^{(\ell)} = \argmin_{\bbeta} \| \by^{(\ell)}_N - \bY'_0 \bbeta \|_2^2.  
\end{align} 
The corresponding point estimate is defined as $\hY^{(\ell)}_{NT}(0) = \langle \by^{(\ell)}_N, \bhalpha^{(\ell)} \rangle = \langle \by_T^{(\ell)}, \bhbeta^{(\ell)} \rangle$. 
We then construct separate \HZ, \VT, and mixed homoskedastic confidence intervals based on \eqref{eq:hz.var.1} and \eqref{eq:vt.var.1} around the point estimate. 

In Table~\ref{table:coverage}, we report the coverage probabilities (CP) and average lengths (AL) for each confidence interval with respect to each estimand at the $95\%$ nominal mark. 
With respect to $\mu_0^\hz$, the coverage of the \HZ~confidence interval is closer to the nominal coverage than that of the \VT~and mixed intervals as the latter two can substantially under- or over-cover. 
This storyline is consistent for the \VT~interval with respect to $\mu_0^\vt$ and the mixed interval with respect to $\mu_0^\mix$. 

Collectively, our formal results and simulations demonstrate that 
(i) the choice of estimand directly affects the accuracy of the inference; 
and  
(ii) the variance formulas developed for one estimand may not have the correct coverage for another estimand. 
Accordingly, researchers should carefully consider the source of randomness in their data as it can have a significant influence over their ability to conduct valid inference. 
We comment that these conclusions are in line with those drawn in \cite{jas_inf}, which analyzes the classical difference-in-means estimator with respect to standard estimands for randomized control trials.  

\begin{table} [!t]
\centering 
\caption{Coverage for nominal 95\% confidence intervals across 500 replications. The coverage length is normalized by the magnitude of the corresponding point estimate.}
\begin{tabular}{@{}cccccccccc@{}}
\toprule
\midrule 
\multirow{2}{*}{Case study}
& \multicolumn{3}{c}{$\hv_0^\hz$} 
& \multicolumn{3}{c}{$\hv_0^\vt$}   
& \multicolumn{3}{c}{$\hv_0^\mix$}   
\\ 
\cmidrule(l){2-4} 
\cmidrule(l){5-7} 
\cmidrule(l){8-10}
& $\mu_0^\hz$  & $\mu_0^\vt$   & $\mu_0^\mix$ 
& $\mu_0^\hz$  & $\mu_0^\vt$   & $\mu_0^\mix$ 
& $\mu_0^\hz$  & $\mu_0^\vt$   & $\mu_0^\mix$  	
\\
\cmidrule(){1-10} 
Basque (CP)
& $0.92$    & $0.74$    & $0.63$ 
& $0.99$    & $0.93$    & $0.88$ 
& $1.00$    & $0.97$    & $0.94$ 
\\ 
Basque (AL)
& $0.02$    & $0.02$    & $0.02$ 
& $0.03$    & $0.03$    & $0.03$ 
& $0.04$     & $0.04$    & $0.04$ 
\\
\cmidrule(){1-10} 
California (CP) 
& $0.95$    & $1.00$    & $0.92$ 
& $0.64$    & $0.93$    & $0.60$  
& $0.98$    & $1.00$    & $0.95$
\\ 
California (AL)
& $0.07$    & $0.07$    & $0.07$ 
& $0.03$    & $0.03$    & $0.03$ 
& $0.08$    & $0.08$    & $0.08$ 
\\
\cmidrule(){1-10} 
W. Germany (CP)
& $0.94$    & $1.00$    & $0.93$ 
& $0.49$    & $0.94$    & $0.49$ 
& $0.96$    & $1.00$    & $0.95$ 
\\ 
W. Germany (AL)
& $0.03$    & $0.03$    & $0.03$ 
& $0.01$    & $0.01$    & $0.01$ 
& $0.03$    & $0.03$    & $0.03$ 
\\
\bottomrule
\end{tabular}
\label{table:coverage}
\end{table}

\subsection{Empirical Applications} 
Next, we analyze our three case studies of interest.  
All regression models are built on pretreatment data only, and the point and variance estimation formulas are separately applied for the treated unit at each post-treatment period $t > T_0$. 

\subsubsection{Point Estimation} 
Figure~\ref{fig:estimation} visualizes the counterfactual trajectories generated by the estimators in  Section~\ref{sec:est.notable}. 
Our findings reinforce Theorems~\ref{thm:equivalent} and \ref{thm:not_equivalent}. 
On a separate note, we observe that, within the Basque study, the OLS estimates are wildly different from the other estimates and \HZ~simplex regression reduces to the last observation carried forward (LOCF) estimator. 
In the California and West Germany studies, the estimates are all qualitatively similar with the exception of the \HZ~simplex regression, which again reduces to LOCF. 
In fact, the OLS and ridge estimates appear to overlap, as well as the lasso and elastic net estimates. 

\begin{figure} [!t]
	\centering
	\begin{subfigure}[b]{0.32\textwidth}
		\centering 
		\includegraphics[width=\linewidth]
		{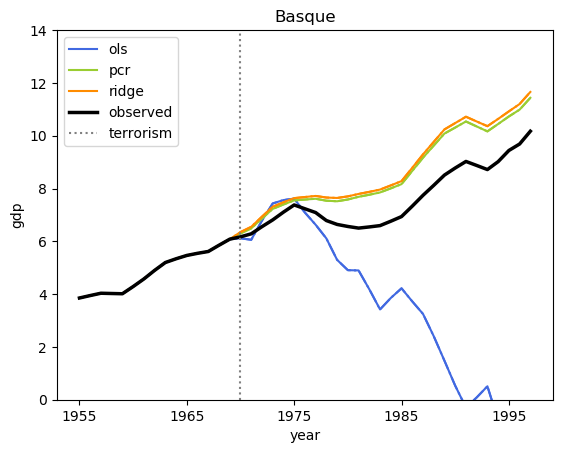}
		\label{fig:sym_basque} 
	\end{subfigure} 
	\begin{subfigure}[b]{0.32\textwidth}
		\centering 
		\includegraphics[width=\linewidth]
		{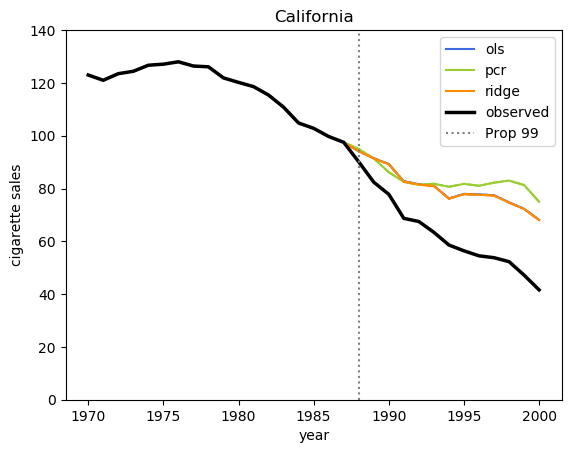}
		\label{fig:sym_prop99} 
	\end{subfigure} 
	\begin{subfigure}[b]{0.32\textwidth}
		\centering 
		\includegraphics[width=\linewidth]
		{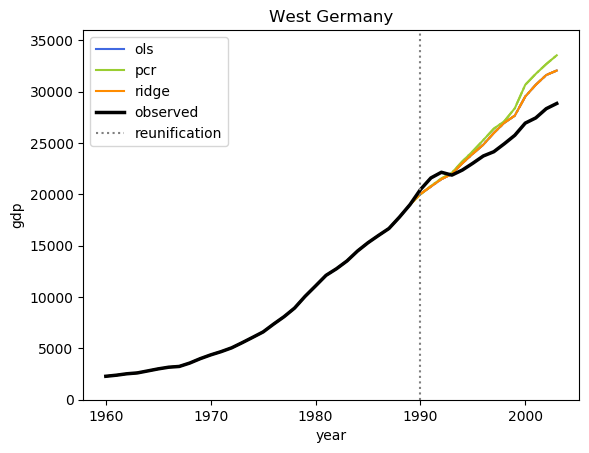}
		\label{fig:sym_germany} 
	\end{subfigure} 
	\\
	\begin{subfigure}[b]{0.32\textwidth}
		\centering 
		\includegraphics[width=\linewidth]
		{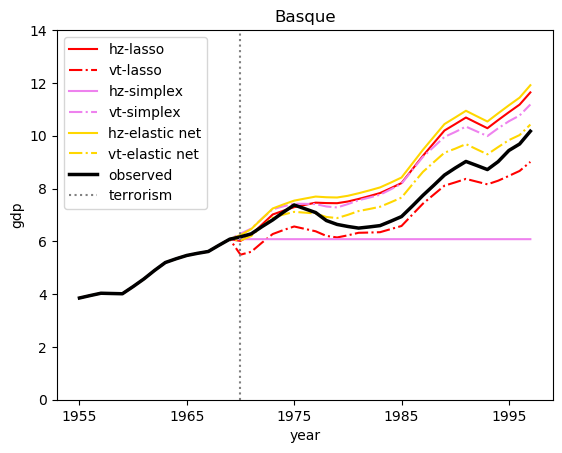}
		\label{fig:asym_basque} 
	\end{subfigure} 
	\begin{subfigure}[b]{0.32\textwidth}
		\centering 
		\includegraphics[width=\linewidth]
		{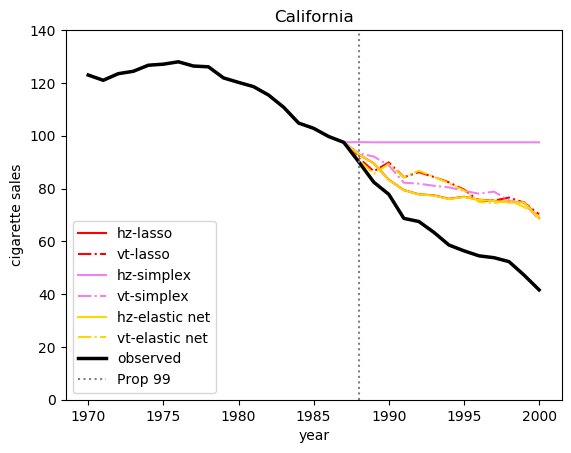}
		\label{fig:asym_prop99} 
	\end{subfigure} 
	\begin{subfigure}[b]{0.32\textwidth}
		\centering 
		\includegraphics[width=\linewidth]
		{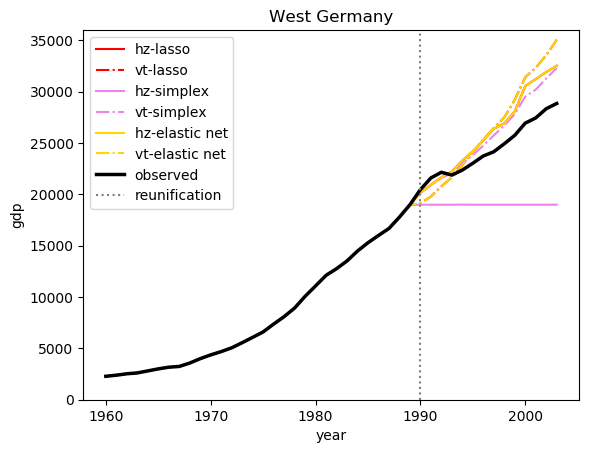}
		\label{fig:asym_germany} 
	\end{subfigure} 
	\caption{
	Top and bottom figures correspond to symmetric and asymmetric estimators, respectively. 
	From left to right, the figures are indexed by the Basque, California, and West Germany studies. 
	Across all figures, the treated year is the dotted vertical line; the observed trajectory is in solid black; and the \HZ~and \VT~counterfactual trajectories are colored solid and dashed-dotted lines, respectively.} 
	\label{fig:estimation} 
\end{figure} 

\subsubsection{Inference} 
To reduce visual redudancy, we only present the figures associated with the jackknife-based intervals for OLS and PCR in Figures~\ref{fig:ols_jack} and \ref{fig:pcr_jack}, respectively. 
Consider the Basque study. 
The top row of plots in both figures demonstrates that $\mu_0^\vt$ is more accurately estimated than both $\mu_0^\hz$ and $\mu_0^\mix$.
Put differently, there is less uncertainty about conducting inference on $\mu_0^\vt$ relative to the other estimands. 
At the same time, these plots indicate that if $\mu_0^\hz$ or $\mu_0^\mix$ are the estimands of interest, then the \VT~confidence interval will undercover in both settings. 
Analogous statements can be made for the remaining subfigures. 
As with our simulations, the large potential differences in coverage reinforce the importance of properly reasoning through the source of randomness in the data.  

\pagebreak 
\begin{figure} [!t]
	\centering 
	\begin{subfigure}[b]{0.32\textwidth}
		\includegraphics[width=\linewidth]
		{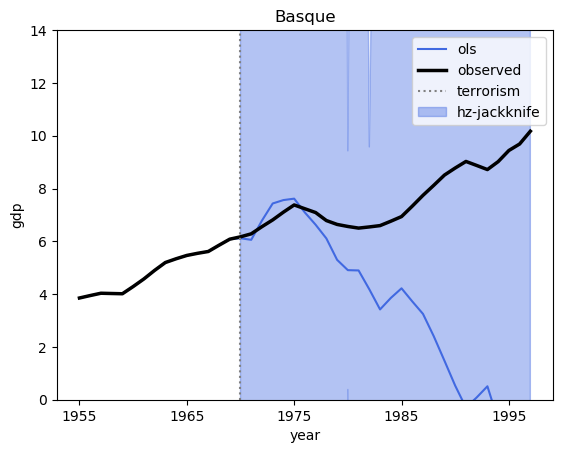}
		\label{fig:basque_hz_jack}
	\end{subfigure} 
	\begin{subfigure}[b]{0.32\textwidth}
		\includegraphics[width=\linewidth]
		{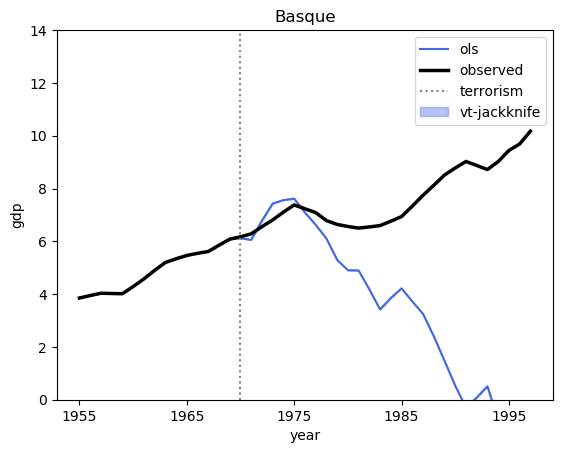}
		\label{fig:basque_vt_jack} 
	\end{subfigure} 
	\begin{subfigure}[b]{0.32\textwidth}
		\includegraphics[width=\linewidth]
		{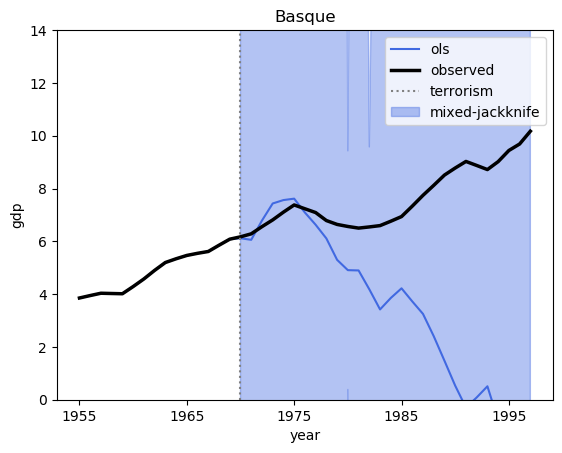}
		\label{fig:basque_mixed_jack}
	\end{subfigure} 	
	\\
	\begin{subfigure}[b]{0.32\textwidth}
		\centering 
		\includegraphics[width=\linewidth]
		{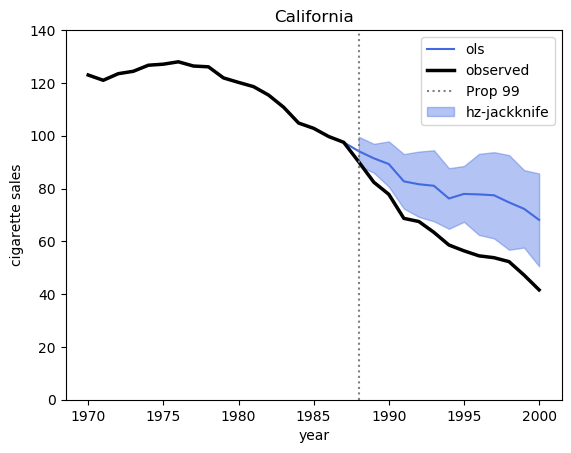}
		\label{fig:prop99_hz_jack}
	\end{subfigure} 
	\begin{subfigure}[b]{0.32\textwidth}
		\centering 
		\includegraphics[width=\linewidth]
		{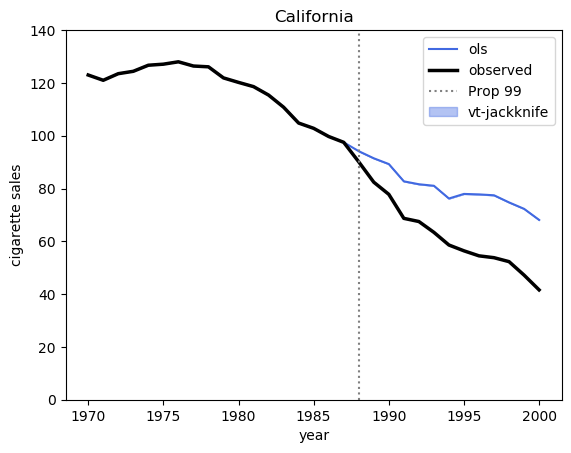}
		\label{fig:prop99_vt_jack}
	\end{subfigure} 
	\begin{subfigure}[b]{0.32\textwidth}
		\centering 
		\includegraphics[width=\linewidth]
		{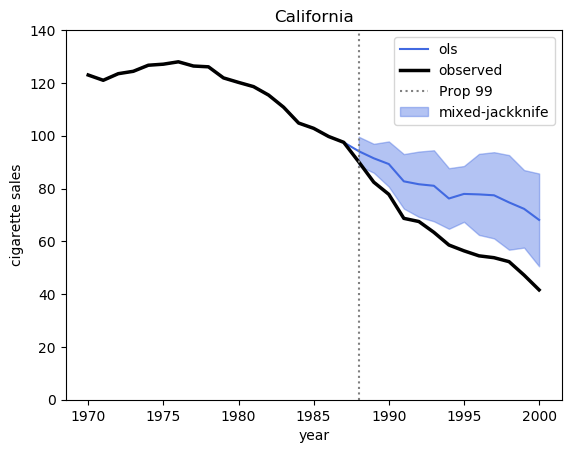}
		\label{fig:prop99_mixed_jack}
	\end{subfigure} 	
	\\
	\begin{subfigure}[b]{0.32\textwidth}
		\includegraphics[width=\linewidth]
		{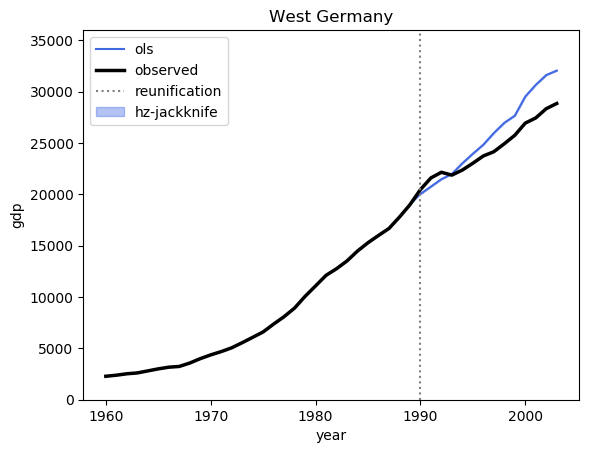}
		\caption{\HZ~model.} 
		\label{fig:germany_hz_jack}
	\end{subfigure} 
	\begin{subfigure}[b]{0.32\textwidth}
		\includegraphics[width=\linewidth]
		{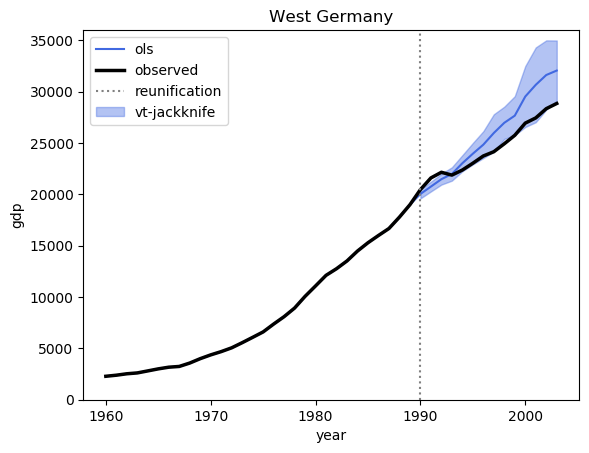}
		\caption{\VT~model.} 
		\label{fig:germany_vt_jack} 
	\end{subfigure} 
	\begin{subfigure}[b]{0.32\textwidth}
		\includegraphics[width=\linewidth]
		{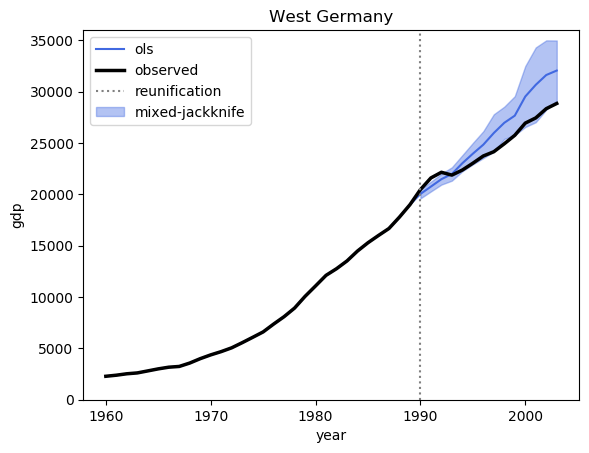}
		\caption{Mixed model.} 
		\label{fig:germany_mixed_jack}
	\end{subfigure} 	
	\caption{
	OLS estimates with jackknife confidence intervals. 
	From top to bottom, the rows are indexed by the Basque, California, and West Germany studies. 
	From left to right, the columns are indexed by the \HZ, \VT, and mixed models. 
	Figure~\ref{fig:ols_jack} visualizes the problem of degeneracy for OLS-based intervals that is discussed in Remark~\ref{remark:pcr}.
	}
	\label{fig:ols_jack} 
\end{figure} 

\begin{figure} [!t]
	\centering 
	\begin{subfigure}[b]{0.32\textwidth}
		\includegraphics[width=\linewidth]
		{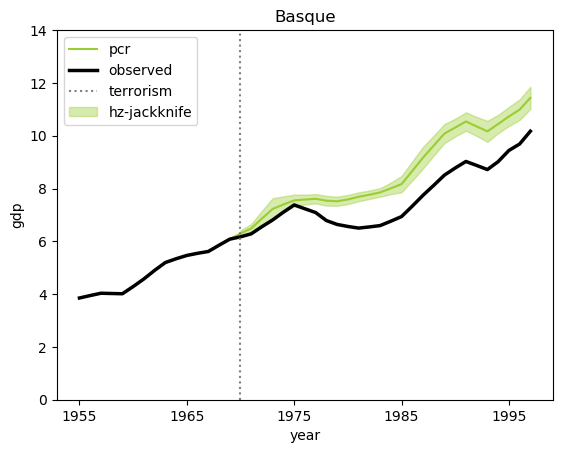}
		\label{fig:pcr_basque_hz_jack}
	\end{subfigure} 
	\begin{subfigure}[b]{0.32\textwidth}
		\includegraphics[width=\linewidth]
		{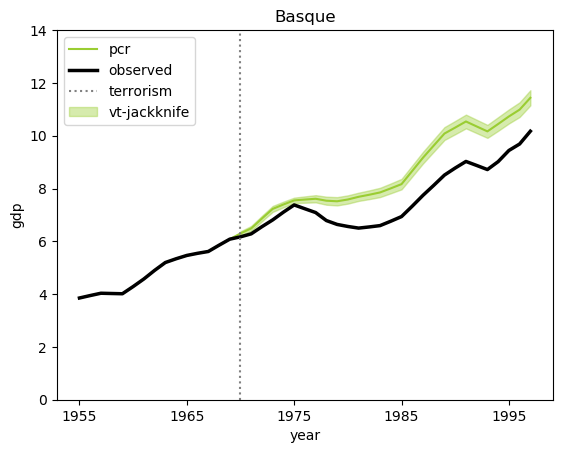}
		\label{fig:pcr_basque_vt_jack} 
	\end{subfigure} 
	\begin{subfigure}[b]{0.32\textwidth}
		\includegraphics[width=\linewidth]
		{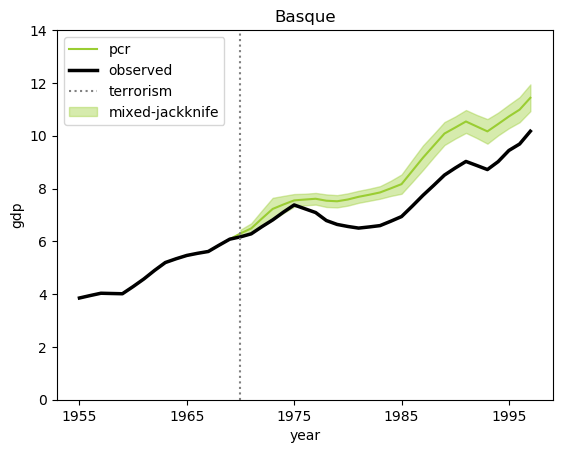}
		\label{fig:pcr_basque_mixed_jack}
	\end{subfigure} 	
	\\
	\begin{subfigure}[b]{0.32\textwidth}
		\centering 
		\includegraphics[width=\linewidth]
		{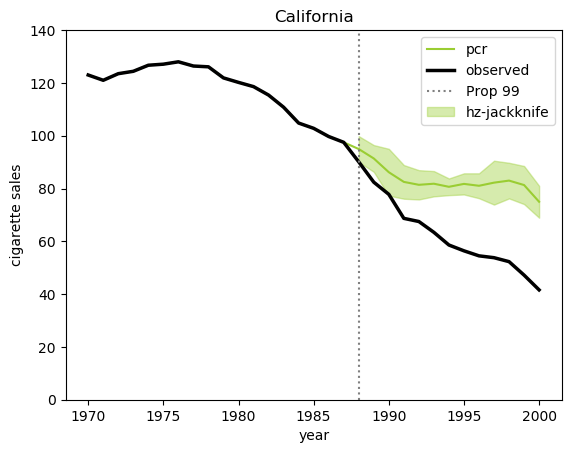}
		\label{fig:pcr_prop99_hz_jack}
	\end{subfigure} 
	\begin{subfigure}[b]{0.32\textwidth}
		\centering 
		\includegraphics[width=\linewidth]
		{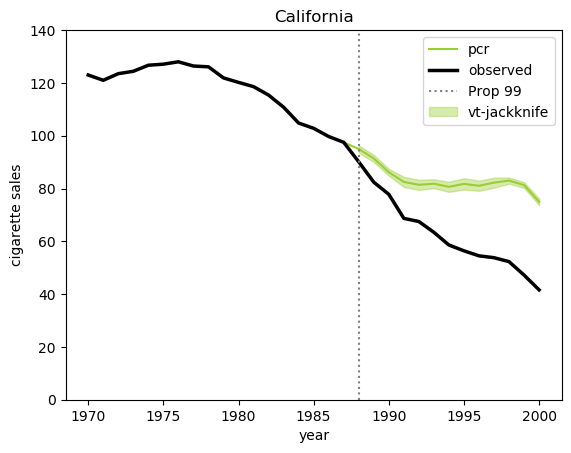}
		\label{fig:pcr_prop99_vt_jack}
	\end{subfigure} 
	\begin{subfigure}[b]{0.32\textwidth}
		\centering 
		\includegraphics[width=\linewidth]
		{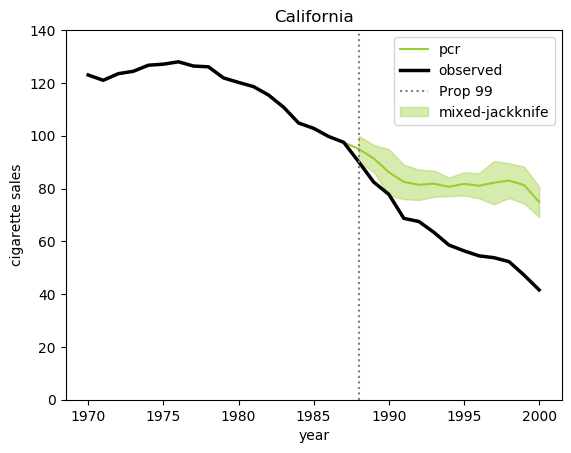}
		\label{fig:pcr_prop99_mixed_jack}
	\end{subfigure} 	
	\\
	\begin{subfigure}[b]{0.32\textwidth}
		\includegraphics[width=\linewidth]
		{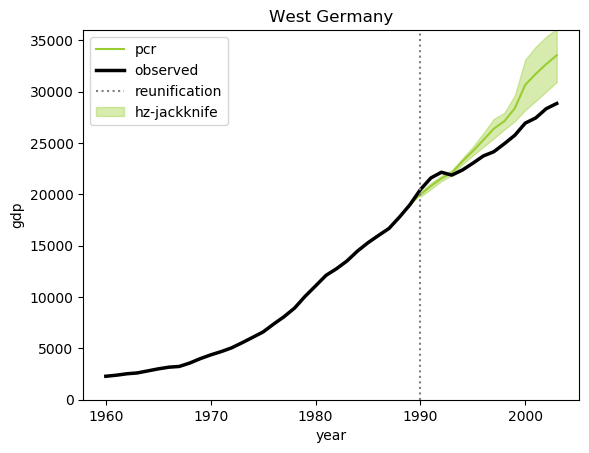}
		\caption{\HZ~model.} 
		\label{fig:pcr_germany_hz_jack}
	\end{subfigure} 
	\begin{subfigure}[b]{0.32\textwidth}
		\includegraphics[width=\linewidth]
		{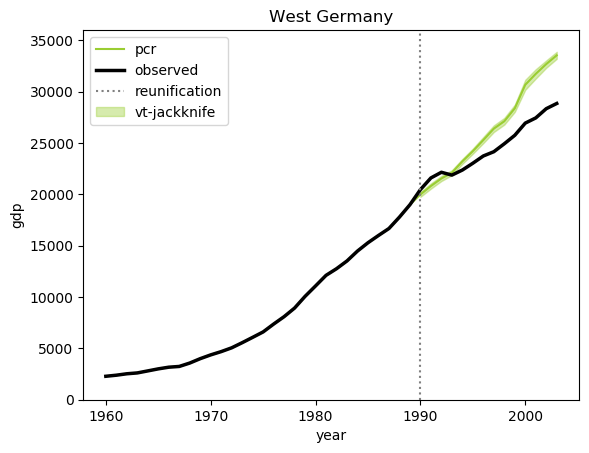}
		\caption{\VT~model.} 
		\label{fig:pcr_germany_vt_jack} 
	\end{subfigure} 
	\begin{subfigure}[b]{0.32\textwidth}
		\includegraphics[width=\linewidth]
		{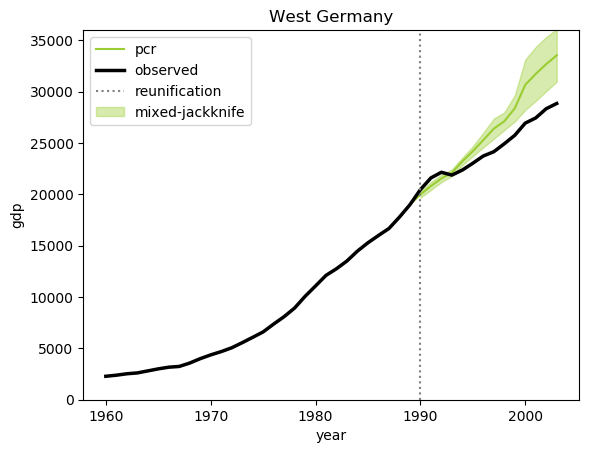}
		\caption{Mixed model.} 
		\label{fig:pcr_germany_mixed_jack}
	\end{subfigure} 	
	\caption{
	PCR estimates with jackknife confidence intervals. 
	From top to bottom, the rows are indexed by the Basque, California, and West Germany studies. 
	From left to right, the columns are indexed by the \HZ, \VT, and mixed models.
	}
	\label{fig:pcr_jack} 
\end{figure} 

\section{Conclusion} \label{sec:conclusion} 
This article rewrites the conventional wisdom on \HZ~and \VT~regressions for panel data analysis. 
Contrary to standard notions, we show that the two regressions yield identical point estimates under several standard settings. 
At the same time, we articulate that the source of randomness directly affects the accuracy of the inference that can be conducted. 
From a practical standpoint, this stresses that researchers should carefully consider where the randomness in their data stems from as this decision will then guide their choice of estimand and inferential procedure. 

%
%




\newpage
\bibliographystyle{ecta-fullname}
\bibliography{bib}

\begin{thebibliography}{39}
\newcommand{\enquote}[1]{``#1''}
\expandafter\ifx\csname natexlab\endcsname\relax\def\natexlab#1{#1}\fi

\bibitem[\protect\citeauthoryear{Abadie}{Abadie}{2021}]{abadie_survey}
\textsc{Abadie, Alberto} (2021): \enquote{Using Synthetic Controls:
  Feasibility, Data Requirements, and Methodological Aspects,} \emph{Journal of
  Economic Literature}, 59, 391--425.

\bibitem[\protect\citeauthoryear{Abadie, Athey, Imbens, and Wooldridge}{Abadie
  et~al.}{2020}]{guido_inf0}
\textsc{Abadie, Alberto, Susan Athey, Guido~W. Imbens, and Jeffrey~M.
  Wooldridge} (2020): \enquote{Sampling-Based versus Design-Based Uncertainty
  in Regression Analysis,} \emph{Econometrica}, 88, 265--296.

\bibitem[\protect\citeauthoryear{Abadie, Diamond, and Hainmueller}{Abadie
  et~al.}{2010}]{abadie2}
\textsc{Abadie, Alberto, Alexis Diamond, and Jens Hainmueller} (2010):
  \enquote{Synthetic Control Methods for Comparative Case Studies: Estimating
  the Effect of California{\^a}s Tobacco Control Program,} \emph{Journal of the
  American Statistical Association}, 105.

\bibitem[\protect\citeauthoryear{Abadie, Diamond, and Hainmueller}{Abadie
  et~al.}{2015}]{abadie4}
---\hspace{-.1pt}---\hspace{-.1pt}--- (2015): \enquote{Comparative Politics and
  the Synthetic Control Method,} \emph{American Journal of Political Science},
  59, 495--510.

\bibitem[\protect\citeauthoryear{Abadie and Gardeazabal}{Abadie and
  Gardeazabal}{2003}]{abadie1}
\textsc{Abadie, A. and J.~Gardeazabal} (2003): \enquote{The Economic Costs of
  Conflict: A Case Study of the Basque Country,} \emph{American Economic
  Review}, 93, 113--132.

\bibitem[\protect\citeauthoryear{Abadie and L’Hour}{Abadie and
  L’Hour}{2021}]{abadie_hour}
\textsc{Abadie, Alberto and Jérémy L’Hour} (2021): \enquote{A Penalized
  Synthetic Control Estimator for Disaggregated Data,} \emph{Journal of the
  American Statistical Association}, 116, 1817--1834.

\bibitem[\protect\citeauthoryear{Agarwal, Shah, and Shen}{Agarwal
  et~al.}{2021}]{si}
\textsc{Agarwal, Anish, Devavrat Shah, and Dennis Shen} (2021):
  \enquote{Synthetic Interventions,} \emph{arXiv preprint arXiv:2006.07691}.

\bibitem[\protect\citeauthoryear{Amjad, Misra, Shah, and Shen}{Amjad
  et~al.}{2019}]{mrsc}
\textsc{Amjad, Muhammad, Vishal Misra, Devavrat Shah, and Dennis Shen} (2019):
  \enquote{MRSC: Multi-Dimensional Robust Synthetic Control,} \emph{Proc. ACM
  Meas. Anal. Comput. Syst.}, 3.

\bibitem[\protect\citeauthoryear{Amjad, Shah, and Shen}{Amjad
  et~al.}{2018}]{rsc}
\textsc{Amjad, Muhammad, Devavrat Shah, and Dennis Shen} (2018):
  \enquote{Robust Synthetic Control,} \emph{Journal of Machine Learning
  Research}, 19, 1--51.

\bibitem[\protect\citeauthoryear{Arkhangelsky, Athey, Hirshberg, Imbens, and
  Wager}{Arkhangelsky et~al.}{2021}]{sdid}
\textsc{Arkhangelsky, Dmitry, Susan Athey, David~A. Hirshberg, Guido~W. Imbens,
  and Stefan Wager} (2021): \enquote{Synthetic Difference-in-Differences,}
  \emph{American Economic Review}, 111, 4088--4118.

\bibitem[\protect\citeauthoryear{Ashenfelter}{Ashenfelter}{1978}]{did1}
\textsc{Ashenfelter, Orley} (1978): \enquote{Estimating the Effect of Training
  Programs on Earnings,} \emph{The Review of Economics and Statistics}, 60,
  47--57.

\bibitem[\protect\citeauthoryear{Athey, Bayati, Doudchenko, Imbens, and
  Khosravi}{Athey et~al.}{2021}]{mc_panel}
\textsc{Athey, Susan, Mohsen Bayati, Nikolay Doudchenko, Guido Imbens, and
  Khashayar Khosravi} (2021): \enquote{Matrix Completion Methods for Causal
  Panel Data Models,} \emph{Journal of the American Statistical Association},
  116, 1716--1730.

\bibitem[\protect\citeauthoryear{Athey and Imbens}{Athey and
  Imbens}{2017}]{athey_imbens}
\textsc{Athey, Susan and Guido~W. Imbens} (2017): \enquote{The State of Applied
  Econometrics: Causality and Policy Evaluation,} \emph{Journal of Economic
  Perspectives}, 31, 3--32.

\bibitem[\protect\citeauthoryear{Ben-Michael, Feller, and
  Rothstein}{Ben-Michael et~al.}{2021}]{sc_aug}
\textsc{Ben-Michael, Eli, Avi Feller, and Jesse Rothstein} (2021): \enquote{The
  Augmented Synthetic Control Method,} \emph{Journal of the American
  Statistical Association}, 116, 1789--1803.

\bibitem[\protect\citeauthoryear{Bottmer, Imbens, Spiess, and Warnick}{Bottmer
  et~al.}{2021}]{guido_inf}
\textsc{Bottmer, Lea, Guido Imbens, Jann Spiess, and Merrill Warnick} (2021):
  \enquote{A Design-Based Perspective on Synthetic Control Methods,} .

\bibitem[\protect\citeauthoryear{Carvalho, Masini, and Medeiros}{Carvalho
  et~al.}{2018}]{arco}
\textsc{Carvalho, Carlos, Ricardo Masini, and Marcelo~C. Medeiros} (2018):
  \enquote{ArCo: An artificial counterfactual approach for high-dimensional
  panel time-series data,} \emph{Journal of Econometrics}, 207, 352--380.

\bibitem[\protect\citeauthoryear{Cattaneo, Feng, and Titiunik}{Cattaneo
  et~al.}{2021}]{cattaneo}
\textsc{Cattaneo, Matias~D., Yingjie Feng, and Rocio Titiunik} (2021):
  \enquote{Prediction Intervals for Synthetic Control Methods,} \emph{Journal
  of the American Statistical Association}, 116, 1865--1880.

\bibitem[\protect\citeauthoryear{Chernozhukov, Wuthrich, and Zhu}{Chernozhukov
  et~al.}{2022}]{sc_t_test}
\textsc{Chernozhukov, Victor, Kaspar Wuthrich, and Yinchu Zhu} (2022):
  \enquote{A $t$-test for synthetic controls,} \emph{arXiv preprint
  arXiv:2006.07691}.

\bibitem[\protect\citeauthoryear{Chernozhukov, Wüthrich, and Zhu}{Chernozhukov
  et~al.}{2021}]{sc_lasso1}
\textsc{Chernozhukov, Victor, Kaspar Wüthrich, and Yinchu Zhu} (2021):
  \enquote{An Exact and Robust Conformal Inference Method for Counterfactual
  and Synthetic Controls,} \emph{Journal of the American Statistical
  Association}, 116, 1849--1864.

\bibitem[\protect\citeauthoryear{Cline}{Cline}{1965}]{cline}
\textsc{Cline, Randall~E.} (1965): \enquote{Representations for the Generalized
  Inverse of Sums of Matrices,} \emph{Journal of the Society for Industrial and
  Applied Mathematics: Series B, Numerical Analysis}, 2, 99--114.

\bibitem[\protect\citeauthoryear{Doudchenko and Imbens}{Doudchenko and
  Imbens}{2016}]{sc_enet}
\textsc{Doudchenko, Nikolay and Guido~W Imbens} (2016): \enquote{Balancing,
  Regression, Difference-In-Differences and Synthetic Control Methods: A
  Synthesis,} Working Paper 22791, National Bureau of Economic Research.

\bibitem[\protect\citeauthoryear{Gunasekar, Woodworth, Bhojanapalli, Neyshabur,
  and Srebro}{Gunasekar et~al.}{2017}]{ireg4}
\textsc{Gunasekar, Suriya, Blake Woodworth, Srinadh Bhojanapalli, Behnam
  Neyshabur, and Nathan Srebro} (2017): \enquote{Implicit Regularization in
  Matrix Factorization,} in \emph{Advances in Neural Information Processing
  Systems}.

\bibitem[\protect\citeauthoryear{Hartley, Rao, and Kiefer}{Hartley
  et~al.}{1969}]{variance}
\textsc{Hartley, H.~O., J.~N.~K. Rao, and Grace Kiefer} (1969):
  \enquote{Variance Estimation with One Unit per Stratum,} \emph{Journal of the
  American Statistical Association}, 64, 841--851.

\bibitem[\protect\citeauthoryear{Hoff}{Hoff}{2017}]{hoff}
\textsc{Hoff, Peter~D.} (2017): \enquote{{Lasso, fractional norm and structured
  sparse estimation using a Hadamard product parametrization},}
  \emph{Computational Statistics \& Data Analysis}, 115, 186--198.

\bibitem[\protect\citeauthoryear{Hsiao, Steve~Ching, and Ki~Wan}{Hsiao
  et~al.}{2012}]{hcw}
\textsc{Hsiao, Cheng, H.~Steve~Ching, and Shui Ki~Wan} (2012): \enquote{A Panel
  Data Approach for Program Evaluation: Measuring the Benefits of Political and
  Economic Integration of Hong Kong with Mainland China,} \emph{Journal of
  Applied Econometrics}, 27, 705--740.

\bibitem[\protect\citeauthoryear{Imbens and Wooldridge}{Imbens and
  Wooldridge}{2009}]{imbens_wooldridge}
\textsc{Imbens, Guido~W. and Jeffrey~M. Wooldridge} (2009): \enquote{Recent
  Developments in the Econometrics of Program Evaluation,} \emph{Journal of
  Economic Literature}, 47, 5--86.

\bibitem[\protect\citeauthoryear{Lehmann}{Lehmann}{2000}]{lehmann}
\textsc{Lehmann, E.L.} (2000): \enquote{Elements of Large-Sample Theory,} .

\bibitem[\protect\citeauthoryear{Li}{Li}{2020}]{li2020}
\textsc{Li, Kathleen~T.} (2020): \enquote{Statistical Inference for Average
  Treatment Effects Estimated by Synthetic Control Methods,} \emph{Journal of
  the American Statistical Association}, 115, 2068--2083.

\bibitem[\protect\citeauthoryear{Li and Bell}{Li and Bell}{2017}]{li_bell}
\textsc{Li, Kathleen~T. and David~R. Bell} (2017): \enquote{Estimation of
  average treatment effects with panel data: Asymptotic theory and
  implementation,} \emph{Journal of Econometrics}, 197, 65--75.

\bibitem[\protect\citeauthoryear{Lin}{Lin}{2013}]{lin_ols}
\textsc{Lin, Winston} (2013): \enquote{{Agnostic notes on regression
  adjustments to experimental data: Reexamining Freedman’s critique},}
  \emph{The Annals of Applied Statistics}, 7, 295 -- 318.

\bibitem[\protect\citeauthoryear{Meyer}{Meyer}{1973}]{meyer}
\textsc{Meyer, Carl~D.} (1973): \enquote{Generalized Inversion of Modified
  Matrices,} \emph{SIAM Journal on Applied Mathematics}, 24, 315--323.

\bibitem[\protect\citeauthoryear{Neyshabur, Tomioka, and Srebro}{Neyshabur
  et~al.}{2015}]{ireg2}
\textsc{Neyshabur, Behnam, Ryota Tomioka, and Nathan Srebro} (2015):
  \enquote{In Search of the Real Inductive Bias: On the Role of Implicit
  Regularization in Deep Learning,} in \emph{International Conference on
  Learning Representations}.

\bibitem[\protect\citeauthoryear{Rosenbaum and Rubin}{Rosenbaum and
  Rubin}{1983}]{rubin_rosenbaum}
\textsc{Rosenbaum, PAUL and Donald Rubin} (1983): \enquote{The Central Role of
  the Propensity Score in Observational Studies For Causal Effects,}
  \emph{Biometrika}, 70, 41--55.

\bibitem[\protect\citeauthoryear{Sekhon and Shem-Tov}{Sekhon and
  Shem-Tov}{2021}]{jas_inf}
\textsc{Sekhon, Jasjeet~S. and Yotam Shem-Tov} (2021): \enquote{Inference on a
  New Class of Sample Average Treatment Effects,} \emph{Journal of the American
  Statistical Association}, 116, 798--804.

\bibitem[\protect\citeauthoryear{Shao and Deng}{Shao and
  Deng}{2012}]{shao_deng}
\textsc{Shao, Jun and Xinwei Deng} (2012): \enquote{{Estimation in
  high-dimensional linear models with deterministic design matrices},}
  \emph{The Annals of Statistics}, 40, 812 -- 831.

\bibitem[\protect\citeauthoryear{Styan}{Styan}{1973}]{styan}
\textsc{Styan, George~P.H.} (1973): \enquote{Hadamard products and multivariate
  statistical analysis,} \emph{Linear Algebra and its Applications}, 6,
  217--240.

\bibitem[\protect\citeauthoryear{Tibshirani}{Tibshirani}{2013}]{lasso_ryan}
\textsc{Tibshirani, Ryan~J.} (2013): \enquote{{The lasso problem and
  uniqueness},} \emph{Electronic Journal of Statistics}, 7, 1456 -- 1490.

\bibitem[\protect\citeauthoryear{Varga}{Varga}{1962}]{varga1969matrix}
\textsc{Varga, Richard~S.} (1962): \emph{Matrix Iterative Analysis},
  Prentice-Hall Series in Automatic Computation, Englewood Cliffs:
  Prentice-Hall.

\bibitem[\protect\citeauthoryear{Yu and Kumbier}{Yu and Kumbier}{2020}]{pcs}
\textsc{Yu, Bin and Karl Kumbier} (2020): \enquote{Veridical data science,}
  \emph{Proceedings of the National Academy of Sciences}, 117, 3920--3929.

\end{thebibliography}


\newpage 
\begin{appendix}
	\section{Inference} \label{sec:appendix_inference}
%

\subsection{Model-Based Inference: Asymptotic Properties} 
First, we state the precise form of Theorem~\ref{thm:inference}. 
Towards this, let $(\sigma^\hz_{iT})^2 = \Var(\varepsilon_{iT} | \by_N, \bY_0)$ for $i=1, \dots, N_0$. 
We define $((\sigma^\vt_{Nt})^2, (\sigma^\mix_{iT})^2, (\sigma^\mix_{Nt})^2)$ for $i=1, \dots, N_0$ and $t = 1, \dots, T_0$ with respect to $(\bSigma^\vt_N, \bSigma^\mix_T, \bSigma^\mix_N)$ analogously. 
\begin{theorem} \label{thm:inference.formal} 
(i) \emph{[\HZ~model]} Let Assumptions~\ref{assump:hz.outcome}--\ref{assump:hz.errors} hold.  
If
\begin{align} \label{eq:lyapunov} 
	\left(\sum_{i \le N_0}  \Ex\left[  |\hbeta_i \varepsilon_{iT} |^3 |~ \by_N, \bY_0 \right] \right)^2 
	&= o \left( \Big( \sum_{i \le N_0} \hbeta^2_i (\sigma^\ehz_{iT})^2 \Big)^3 \right),
\end{align} 
then 
\begin{align}
	& \frac{\hY_{NT}(0) - \mu_0^\ehz}{ \sqrt{v^\ehz_0}} \xrightarrow{d} \mathcal{N}(0,1),
\end{align} 
where $\mu_0^\ehz = \langle \by_N, \bH^v \balpha^* \rangle$ and $v_0^\ehz = \bhbeta' \bSigma^\ehz_T \bhbeta$. 
(ii) \emph{[\VT~model]} Let Assumptions~\ref{assump:vt.outcome}--\ref{assump:vt.errors} hold.  
If
\begin{align} \label{eq:lyapunov.2} 
	\left(\sum_{t \le T_0}  \Ex\left[  | \halpha_t \varepsilon_{Nt} |^3 |~ \by_T, \bY_0 \right] \right)^2 
	&= o \left( \Big( \sum_{t \le T_0} \halpha^2_t (\sigma_{Nt}^\evt)^2 \Big)^3 \right),
\end{align} 
then
\begin{align}
	\frac{\hY_{NT}(0) - \mu_0^\evt}{ \sqrt{v_0^\evt}} \xrightarrow{d} \mathcal{N}(0,1), 
\end{align} 
where $\mu^\evt_0 = \langle \by_T, \bH^u \bbeta^* \rangle$ and $v^\evt_0 = \bhalpha' \bSigma^\evt_N \bhalpha$. 
(iii) \emph{[Mixed model]} Let Assumptions~\ref{assump:mixed.outcome}--\ref{assump:mixed.errors} hold. 
If
\begin{align}
	&\left( \sum_{i \le N_0} \sum_{t \le T_0} \Ex\Big[ | (\bY_0^\dagger)_{it} \left\{ \Ex[Y_{iT} | \bY_0] \varepsilon_{Nt} 
	+ \Ex[Y_{Nt} | \bY_0] \varepsilon_{iT} + \varepsilon_{iT} \varepsilon_{Nt} \right\} |^3 ~| ~\bY_0 \Big] \right)^2 
	\\ &  = 
	o \left( \left( \sum_{i \le N_0} \sum_{t \le T_0} (\bY_0^\dagger)^2_{it} \left\{ \Ex[Y_{iT} | \bY_0]^2 (\sigma^\emix_{Nt})^2 
	+ \Ex[Y_{Nt} | \bY_0]^2 (\sigma^\emix_{iT})^2 + (\sigma^\emix_{iT})^2 (\sigma^\emix_{Nt})^2 \right\} \right)^3 \right),
	\label{eq:lyapunov.3} 
\end{align} 
then 
\begin{align}
	\frac{ \hY_{NT}(0) - \mu^\emix_0}{\sqrt{v^\emix_0}} \xrightarrow{d} \mathcal{N}(0,1),
\end{align}
where $\mu^\emix_0 = \langle \balpha^*, \bY'_0 \bbeta^* \rangle$ and 
\begin{align}
	v^\emix_0 = 
	 (\bH^u \bbeta^*)' \bSigma^\emix_T (\bH^u \bbeta^*)
	+ (\bH^v \balpha^*)' \bSigma^\emix_N (\bH^v \balpha^*)  
	+ \tr( \bY^\dagger_0 \bSigma^\emix_T (\bY'_0)^\dagger \bSigma^\emix_N). 
\end{align}    
\end{theorem}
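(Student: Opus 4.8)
The two ingredients are (a) linear-algebraic identities that reduce $\hY_{NT}(0)-\mu_0$ to a weighted sum of the model's independent error terms, and (b) a central limit theorem for that sum. Throughout I use $\bhalpha=\bY_0^\dagger\by_T$, $\bhbeta=(\bY_0')^\dagger\by_N$, and the SVD identities $\bY_0^\dagger\bY_0=\bH^v$, $\bY_0\bY_0^\dagger=\bH^u$, $(\bY_0')^\dagger=(\bY_0^\dagger)'$, and $\bY_0\bY_0^\dagger\bY_0=\bY_0$. First I would carry out the reductions. For the \HZ~model, substituting $\by_T=\bY_0\balpha^*+\bvarepsilon_T$ into $\hY_{NT}(0)=\langle\by_N,\bY_0^\dagger\by_T\rangle$ gives $\hY_{NT}(0)-\mu_0^\ehz=\langle(\bY_0^\dagger)'\by_N,\bvarepsilon_T\rangle=\langle\bhbeta,\bvarepsilon_T\rangle=\sum_{i\le N_0}\hbeta_i\varepsilon_{iT}$, and symmetrically $\hY_{NT}(0)-\mu_0^\evt=\langle\bhalpha,\bvarepsilon_N\rangle=\sum_{t\le T_0}\halpha_t\varepsilon_{Nt}$ under the \VT~model. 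For the mixed model, substituting both $\by_T=\bY_0\balpha^*+\bvarepsilon_T$ and $\by_N=\bY_0'\bbeta^*+\bvarepsilon_N$ and expanding the product, the term carrying $\bY_0\bY_0^\dagger\bY_0$ collapses to $\mu_0^\emix$ and one is left with
\[
\hY_{NT}(0)-\mu_0^\emix=(\bH^u\bbeta^*)'\bvarepsilon_T+(\bH^v\balpha^*)'\bvarepsilon_N+\bvarepsilon_N'\bY_0^\dagger\bvarepsilon_T,
\]
which, using $\Ex[Y_{iT}\mid\bY_0]=(\bY_0\balpha^*)_i$ and $\Ex[Y_{Nt}\mid\bY_0]=(\bY_0'\bbeta^*)_t$, is exactly the double sum in \eqref{eq:lyapunov.3}.

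Parts (i) and (ii) then follow immediately. Conditional on $(\by_N,\bY_0)$ the weights $\hbeta_i$ are fixed and the $\varepsilon_{iT}$ are independent with mean zero (Assumption~\ref{assump:hz.errors}); the conditional variance of $\sum_i\hbeta_i\varepsilon_{iT}$ is $\sum_i\hbeta_i^2(\sigma_{iT}^\ehz)^2=\bhbeta'\bSigma_T^\ehz\bhbeta=v_0^\ehz$, and \eqref{eq:lyapunov} is precisely the $\delta=1$ Lyapunov ratio condition, so the Lyapunov CLT for triangular arrays yields $(\hY_{NT}(0)-\mu_0^\ehz)/\sqrt{v_0^\ehz}\xrightarrow{d}\mathcal N(0,1)$; since the limiting law does not depend on the conditioning variables it holds unconditionally as well. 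Part (ii) is identical after transposing the roles of rows and columns and conditioning on $(\by_T,\bY_0)$.

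For part (iii) the summands of the double sum are dependent (the pairs $(i,t)$ and $(i,t')$ share $\varepsilon_{iT}$), so instead I would organize $\hY_{NT}(0)-\mu_0^\emix=\sum_{k=1}^{N_0+T_0}D_k$ as a martingale difference sequence for the filtration that first reveals $\varepsilon_{1T},\dots,\varepsilon_{N_0T}$ and then $\varepsilon_{N1},\dots,\varepsilon_{NT_0}$: this gives $D_i=(\bH^u\bbeta^*)_i\varepsilon_{iT}$ for $i\le N_0$ and $D_{N_0+t}=\varepsilon_{Nt}\big((\bH^v\balpha^*)_t+(\bY_0^\dagger\bvarepsilon_T)_t\big)$ for $t\le T_0$, each of conditional mean zero by Assumption~\ref{assump:mixed.errors}. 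I would then invoke a standard martingale central limit theorem (or, alternatively, a CLT for generalized quadratic forms in independent variables), which requires (a) the sum of conditional variances $\sum_k\Ex[D_k^2\mid\mathcal F_{k-1}]$ to converge in probability to a constant, and (b) a conditional Lindeberg/Lyapunov condition. For (a) a direct computation gives $\Ex\big[\sum_k\Ex[D_k^2\mid\mathcal F_{k-1}]\mid\bY_0\big]=v_0^\emix$, with the three pieces $(\bH^u\bbeta^*)'\bSigma_T^\emix(\bH^u\bbeta^*)$, $(\bH^v\balpha^*)'\bSigma_N^\emix(\bH^v\balpha^*)$, and $\tr(\bY_0^\dagger\bSigma_T^\emix(\bY_0')^\dagger\bSigma_N^\emix)$ arising respectively from the $D_i$, the $(\bH^v\balpha^*)_t$ part of the $D_{N_0+t}$, and the $(\bY_0^\dagger\bvarepsilon_T)_t$ part; and (b) reduces, after bounding $\Ex|D_{N_0+t}|^3$ by Minkowski's inequality, to \eqref{eq:lyapunov.3} together with the stated moment conditions. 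The main obstacle is verifying (a): one must show that $\sum_t(\sigma_{Nt}^\emix)^2\big((\bH^v\balpha^*)_t+(\bY_0^\dagger\bvarepsilon_T)_t\big)^2$ concentrates around its conditional mean, which is a quadratic functional of $\bvarepsilon_T$, so it needs a variance bound for a quadratic form in independent variables. I expect this to follow from \eqref{eq:lyapunov.3} (which forces the array $\bY_0^\dagger$ to have no dominant entry) together with a fourth-moment condition, and it is the one step that goes beyond bookkeeping — the linear pieces in (i)–(ii) being essentially automatic once the algebraic identities are in hand.
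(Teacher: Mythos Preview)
Your treatment of parts (i) and (ii) is essentially identical to the paper's: both reduce $\hY_{NT}(0)-\mu_0$ to the linear form $\langle\bhbeta,\bvarepsilon_T\rangle$ (resp.\ $\langle\bhalpha,\bvarepsilon_N\rangle$), read off the conditional mean and variance, and invoke Lehmann's Lyapunov CLT under \eqref{eq:lyapunov} (resp.\ \eqref{eq:lyapunov.2}).

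For part (iii) your route diverges from the paper's. The paper writes $\hY_{NT}(0)=\sum_{i,t}(\bY_0^\dagger)_{it}Y_{iT}Y_{Nt}$ and applies Lehmann's CLT directly to this double sum, asserting (via Assumption~\ref{assump:mixed.errors}) that the summands are independent; it then computes $\mu_0^\emix$ by iterated expectation and $v_0^\emix$ by the law of total variance. You are right to be suspicious of that independence claim: the $(i,t)$ and $(i,t')$ summands share $\varepsilon_{iT}$, so the collection $\{Y_{iT}Y_{Nt}\}_{i,t}$ is \emph{not} independent, and Lehmann's theorem as stated does not apply. In this sense your martingale-difference reorganization is a genuine strengthening of the paper's argument, not merely an alternative.

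That said, your proposal for (iii) is itself incomplete in exactly the place you flag: the martingale CLT requires the sum of conditional variances to stabilize (in probability, after normalization) to a constant, and you only verify that its \emph{expectation} equals $v_0^\emix$. Showing that the random quadratic $\sum_t(\sigma_{Nt}^\emix)^2\big((\bH^v\balpha^*)_t+(\bY_0^\dagger\bvarepsilon_T)_t\big)^2$ concentrates requires a variance bound on a quadratic form in $\bvarepsilon_T$, which in turn needs fourth moments of the errors and a smallness condition on the rows of $\bY_0^\dagger$; neither is supplied by \eqref{eq:lyapunov.3} alone. So the paper's proof is shorter but glosses over the dependence you caught, while your route is structurally correct but leaves the concentration step as an assumption to be discharged.
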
 
If $(\varepsilon_{iT}, (\sigma^\hz_{iT})^2)$ are bounded, then \eqref{eq:lyapunov} translates to $\sum_{i \le N_0} | \hbeta_i |^3 = o( \| \bhbeta \|_2^3)$, which rules out outlier coefficients; a similar interpretation can be derived for \eqref{eq:lyapunov.2}. 
Similarly, if 
$(\varepsilon_{iT}, \varepsilon_{Nt})$ and $(\sigma^2_{iT}, \sigma^2_{Nt})$ are bounded for all $(i,t)$,  
then \eqref{eq:lyapunov.3} loosely translates to 
$$
\sum_{i \le N_0} | \hbeta_i|^3 + \sum_{t \le T_0} | \halpha_t|^3 + \sum_{i \le N_0} \sum_{t \le T_0} | (\bY^\dagger_0)_{it} |^3 
= o\left( \| \bhbeta \|_2^3 + \| \bhalpha \|_2^3 + \| \bY_0^\dagger \|_F^3 \right), 
$$
which effectively bounds the magnitudes of the \HZ~and \VT~OLS coefficients and pseudoinverse matrix entries. 
We note that \eqref{eq:lyapunov}--\eqref{eq:lyapunov.3} are known as Lyapunov's condition, and we refer the interested reader to \cite{lehmann} for details. 

Next, we provide a bound on the trace term in $v^\mix_0$. 
Beginning with the upper bound, notice that 
\begin{align}
	\tr( \bY^\dagger_0 \bSigma^\mix_T (\bY'_0)^\dagger \bSigma^\mix_N) 
	\le \max_{i \le N_0} (\sigma^\mix_{iT})^2 \max_{t \le T_0} (\sigma^\mix_{Nt})^2
	\tr( \bY^\dagger_0 (\bY'_0)^\dagger). 
\end{align}
%
%
By the cyclic property of the trace operator, 
\begin{align}
	\tr( \bY^\dagger_0 (\bY'_0)^\dagger) 
	= \tr( \bV \bS^{-2} \bV') 
	= \tr(\bS^{-2} \bV' \bV) 
	= \tr(\bS^{-2}). 
\end{align} 
Putting everything together, we obtain 
\begin{align}
	\tr( \bY^\dagger_0 \bSigma^\mix_T (\bY'_0)^\dagger \bSigma^\mix_N) 
	\le \max_{i \le N_0} (\sigma^\mix_{iT})^2 \max_{t \le T_0} (\sigma^\mix_{Nt})^2
	\tr(\bS^{-2}). 
\end{align} 
The same arguments can be applied to derive the lower bound, which yields 
%
\begin{align}
	&\tr( \bY^\dagger_0 \bSigma^\mix_T (\bY'_0)^\dagger \bSigma^\mix_N) 
	\\ &\qquad \in 
	\tr(\bS^{-2})
	\left[ \min_{i \le N_0} (\sigma^\mix_{iT})^2 \min_{t \le T_0} (\sigma^\mix_{Nt})^2, 
	~\max_{i \le N_0} (\sigma^\mix_{iT})^2 \max_{t \le T_0} (\sigma^\mix_{Nt})^2
	\right]. \label{eq:trace_interval} 
\end{align} 
From \eqref{eq:trace_interval}, we see that the lower and upper bounds match under homoskedasticity. 

\subsection{Model-Based Inference: Confidence Intervals} 
Next, we discuss technical aspects of the variance estimators in Section~\ref{sec:ci}. 

\smallskip 
\noindent 
{\bf Homoskedastic errors.} 
We take note of the recent work of \cite{si} in the synthetic controls literature. 
\cite{si} propose a \VT~PCR estimator under the homoskedastic setting and provide a similar confidence interval to that of \eqref{eq:vt.var.1} via large-sample approximations. 
Under a closely related \VT~model, they propose $\bhbeta' \bhSigma_N^\homo \bhbeta$ in place of $\bhalpha' \bhSigma_N^\homo \bhalpha$. 
While the point estimate of \cite{si} also takes the form $\langle \by_T, \bhbeta \rangle$, their variance estimator only depends on $(\by_N, \bY_0)$; in comparison, ours depends on $(\by_N, \by_T, \bY_0)$. 
Therefore, the confidence interval as per \cite{si} is numerically identical for every post-treatment point estimate while ours can vary across the post-treatment periods, which may be favorable. 

\smallskip 
\noindent 
{\bf Heteroskedastic errors.} 
Consider the heteroskedastic setting. 

\smallskip 
\noindent {\em I: Jackknife.} 
%
%
%
In the following lemma, we quantify the bias in Lemma~\ref{lemma:inference.2}. 
\begin{lemma} [Detailed restatement of Lemma~\ref{lemma:inference.2}] \label{lemma:inference.2.formal} 
(i) \emph{[\HZ~model]} Let Assumptions~\ref{assump:hz.outcome}--\ref{assump:hz.errors} hold.  
If $(\bH^u_\perp \circ  \bH^u_\perp \circ \bI)$ is nonsingular, then 
\begin{align}
	\Ex[\bhSigma^\ejack_T | \by_N, \bY_0] = \bSigma^\ehz_T + \bDelta^\ehz
	\quad \text{and} \quad 
	\Ex[ \hv^{\ehz, \ejack}_0 | \by_N, \bY_0] = v^\ehz_0 +  \bhalpha' \bDelta^\ehz \bhalpha, 
\end{align} 
where $\Delta^\ehz_{\ell \ell} = \sum_{j \neq \ell} (\sigma^\ehz_{jT})^2 (H^u_{\ell j})^2 (1 - H^u_{\ell \ell})^{-2}$ for $\ell=1, \dots, N_0$. 
%
%
(ii) \emph{[\VT~model]} Let Assumptions~\ref{assump:vt.outcome}--\ref{assump:vt.errors} hold. 
If $(\bH^v_\perp \circ  \bH^v_\perp \circ \bI)$ is nonsingular, then
\begin{align} 
	\Ex[ \bhSigma_N^\ejack | \by_T, \bY_0] = \bSigma^\evt_N + \bGamma^\evt
	\quad \text{and} \quad 
	\Ex[ \hv^{\evt, \ejack}_0 | \by_T, \bY_0] = v^\evt_0 + \bhbeta' \bGamma^\evt \bhbeta, 
\end{align}  
where $\Gamma^\evt_{\ell \ell} = \sum_{j \neq \ell} (\sigma^\evt_{Nj})^2 (H^v_{\ell j})^2(1-H^v_{\ell \ell})^{-2}$ for $\ell=1, \dots, T_0$. 
(iii) \emph{[Mixed model]} Let Assumptions~\ref{assump:mixed.outcome}--\ref{assump:mixed.errors} hold. 
If $(\bH^u_\perp \circ  \bH^u_\perp \circ \bI)$ and $(\bH^v_\perp \circ  \bH^v_\perp \circ \bI)$ are nonsingular, then
\begin{align}
	&\Ex[\bhSigma_T^\ejack | \bY_0] = \bSigma^\emix_T + \bDelta^\emix,
	\quad 
	\Ex[ \bhSigma_N^\ejack | \bY_0] = \bSigma^\emix_N + \bGamma^\emix,
	\\ 
	&\Ex[\hv^{\emix, \ejack}(\bY_0) | \bY_0] 
	\\&\qquad = v^\emix_0 
	+ (\bH^u \bbeta^*)' \bDelta^\emix (\bH^u \bbeta^*) 
	+ (\bH^v \balpha^*)' \bGamma^\emix (\bH^v \balpha^*)
	+ \tr(\bY_0^\dagger \bDelta^\emix (\bY'_0)^\dagger \bGamma^\emix),  
\end{align}
where $\Delta^\emix_{\ell \ell}$ and $\Gamma^\emix_{\ell \ell}$ are defined analogously to $\Delta^\ehz_{\ell \ell}$ and $\Gamma^\evt_{\ell \ell}$, respectively, with $(\sigma^\emix_{jT})^2$ and $(\sigma^\emix_{Nj})^2$ in place of $(\sigma^\ehz_{jT})^2$ and $(\sigma^\evt_{Nj})^2$, respectively. 
\end{lemma}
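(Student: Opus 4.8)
The engine is the residual identity for the minimum $\ell_2$-norm fit: since $\bY_0\bY_0^\dagger=\bH^u$ we have $\bH^u_\perp\bY_0=\bzero$, hence $\bH^u_\perp\by_T=\bH^u_\perp\bvarepsilon_T$ under Assumption~\ref{assump:hz.outcome} (and its mixed analogue), and symmetrically $\bH^v_\perp\by_N=\bH^v_\perp\bvarepsilon_N$. Under the stated nonsingularity the pseudoinverse in \eqref{eq:hz.var.2} inverts the diagonal matrix $\text{diag}((1-H^u_{\ell\ell})^2)$, so the $\ell$th diagonal entry of $\bhSigma_T^\ejack$ equals $((\bH^u_\perp\bvarepsilon_T)_\ell)^2/(1-H^u_{\ell\ell})^2$. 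First I would compute its conditional mean: using zero-mean, coordinatewise-independent errors, $\Ex[((\bH^u_\perp\bvarepsilon_T)_\ell)^2\mid\cdot]=\sum_j(\bH^u_\perp)_{\ell j}^2\sigma_{jT}^2=(1-H^u_{\ell\ell})^2\sigma_{\ell T}^2+\sum_{j\ne\ell}(H^u_{\ell j})^2\sigma_{jT}^2$, so dividing by $(1-H^u_{\ell\ell})^2$ gives $\Ex[\bhSigma_T^\ejack\mid\cdot]=\text{diag}(\sigma_{\ell T}^2)+\bDelta$, with $\bDelta$ exactly the diagonal bias matrix in the statement; the same computation on $\bhSigma_N^\ejack$ produces $\bGamma$. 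This single identity drives all three parts.

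For part (i) the conditioning is on $(\by_N,\bY_0)$, so $\bhbeta=(\bY'_0)^\dagger\by_N$ is deterministic and $v_0^\ehz=\bhbeta'\bSigma_T^\ehz\bhbeta$; the covariance identity gives $\Ex[\bhSigma_T^\ejack\mid\by_N,\bY_0]=\bSigma_T^\ehz+\bDelta^\ehz$ and, sandwiching by the fixed $\bhbeta$, $\Ex[\hv_0^{\ehz,\ejack}\mid\by_N,\bY_0]=v_0^\ehz+\bhbeta'\bDelta^\ehz\bhbeta$. Part (ii) is the mirror image under $\bU\leftrightarrow\bV$, $\by_T\leftrightarrow\by_N$, $N_0\leftrightarrow T_0$, $\bhalpha\leftrightarrow\bhbeta$, giving $v_0^\evt+\bhalpha'\bGamma^\evt\bhalpha$; the $\succeq$ and $\ge$ of Lemma~\ref{lemma:inference.2} then follow since $\bDelta^\ehz,\bGamma^\evt\succeq\bzero$.

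For part (iii) we condition only on $\bY_0$, so $\bhalpha,\bhbeta,\bhSigma_T^\ejack,\bhSigma_N^\ejack$ are all random; here I would use two consequences of Assumption~\ref{assump:mixed.errors}: (a) $\bvarepsilon_T\ind\bvarepsilon_N\mid\bY_0$, with $\bhSigma_T^\ejack,\bhalpha$ measurable with respect to $\bvarepsilon_T$ and $\bhSigma_N^\ejack,\bhbeta$ with respect to $\bvarepsilon_N$, so in particular $\bhSigma_T^\ejack\ind\bhSigma_N^\ejack\mid\bY_0$; and (b) the signal/noise splits $\bhalpha=\btalpha^*+\bY_0^\dagger\bvarepsilon_T$, $\bhbeta=\btbeta^*+(\bY'_0)^\dagger\bvarepsilon_N$ with $\btalpha^*=\bH^v\balpha^*$ and $\btbeta^*=\bH^u\bbeta^*$ (notation of Table~\ref{table:comparison}), the noise parts mean zero with conditional second moments $\bY_0^\dagger\bSigma_T^\emix(\bY_0^\dagger)'$ and $(\bY'_0)^\dagger\bSigma_N^\emix((\bY'_0)^\dagger)'$. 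I would then expand $\hv_0^{\emix,\ejack}=\bhbeta'\bhSigma_T^\ejack\bhbeta+\bhalpha'\bhSigma_N^\ejack\bhalpha-\tr(\bY_0^\dagger\bhSigma_T^\ejack(\bY'_0)^\dagger\bhSigma_N^\ejack)$ term by term: in the first term condition on $\bvarepsilon_N$ to replace $\bhSigma_T^\ejack$ by $\bSigma_T^\emix+\bDelta^\emix$ and evaluate the quadratic form in $\bhbeta$ via (b) and $\Ex[\bhbeta'\bM\bhbeta\mid\bY_0]=(\btbeta^*)'\bM\btbeta^*+\tr(\bM(\bY'_0)^\dagger\bSigma_N^\emix\bY_0^\dagger)$; symmetrically for the second term; and for the trace term factor the expectation through independence as $\tr(\bY_0^\dagger(\bSigma_T^\emix+\bDelta^\emix)(\bY'_0)^\dagger(\bSigma_N^\emix+\bGamma^\emix))$. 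Applying the cyclic property of the trace throughout, the pure-$\bSigma^\emix$ pieces reassemble into $v_0^\emix$ — the same cancellation that motivates the negative trace in \eqref{eq:final_var_est} (see the discussion following it) — the mixed $\bSigma^\emix$--$\bDelta^\emix$ and $\bSigma^\emix$--$\bGamma^\emix$ cross-traces cancel, and what survives is $v_0^\emix$ together with the bias quadratics $(\btbeta^*)'\bDelta^\emix\btbeta^*$, $(\btalpha^*)'\bGamma^\emix\btalpha^*$ and a single $\bDelta^\emix$--$\bGamma^\emix$ cross-trace, i.e.\ the bias recorded in part (iii); the covariance-mean identities needed along the way are again supplied by the first paragraph.

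The main obstacle is the bookkeeping in part (iii): one must be disciplined about which source of noise each factor is measurable with respect to before invoking the tower property and conditional independence, and the subsequent tally of the several cyclic-trace terms is genuinely error-prone. Two safeguards I would build in: verify every cyclic manipulation (e.g.\ $\tr(\bA\bY_0^\dagger\bB(\bY'_0)^\dagger)=\tr(\bY_0^\dagger\bB(\bY'_0)^\dagger\bA)$) against the SVD $\bY_0=\bU\bS\bV'$ together with the identity $(\bY_0^\dagger)'=(\bY'_0)^\dagger$; and specialize to homoskedastic errors, where $\bDelta^\emix,\bGamma^\emix$ collapse and the expression must degenerate to Lemma~\ref{lemma:inference.1}(iii), which pins down every surviving constant and the sign of the residual $\bDelta^\emix$--$\bGamma^\emix$ trace.
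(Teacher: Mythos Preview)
Your approach is correct and essentially the same as the paper's. Both rest on the residual identity $\bH^u_\perp\by_T=\bH^u_\perp\bvarepsilon_T$, compute $\Ex[(\bH^u_\perp\bvarepsilon_T)_\ell^2]=\sum_j(\bH^u_\perp)_{\ell j}^2\sigma_{jT}^2$, and divide by $(1-H^u_{\ell\ell})^2$ to isolate the bias $\bDelta$; the paper additionally spends a page re-deriving the leave-one-out form $\Omega_{ii}=\hvarepsilon_{iT}^2(1-H^u_{ii})^{-2}$ from first principles via a Sherman--Morrison update, which you reasonably take as given from \eqref{eq:hz.var.2}. For part (iii) the paper packages your tower-property/independence expansion into a standalone helper (Lemma~\ref{lemma:mixed.vest}), proved exactly as you outline, and then substitutes $\Ex[\bhSigma_T^\ejack\mid\bY_0]=\bSigma_T^\emix+\bDelta^\emix$, $\Ex[\bhSigma_N^\ejack\mid\bY_0]=\bSigma_N^\emix+\bGamma^\emix$---so your ``inline'' derivation and the paper's ``lemma then plug in'' are the same argument organized differently. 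One incidental observation: your formulas $v_0^\ehz+\bhbeta'\bDelta^\ehz\bhbeta$ and $v_0^\evt+\bhalpha'\bGamma^\evt\bhalpha$ agree with the paper's \emph{proof} and are dimensionally the only sensible ones ($\bDelta^\ehz$ is $N_0\times N_0$, $\bGamma^\evt$ is $T_0\times T_0$); the $\bhalpha\leftrightarrow\bhbeta$ swap in the lemma's \emph{statement} appears to be a typo.
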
 
Without loss of generality, we consider the magnitudes of $(\Delta^\mix_{\ell \ell}, \Gamma^\mix_{\ell \ell})$. 
Towards bounding the former quantity, notice that $\bH^u$ is an orthogonal projector and is thus idempotent, i.e., $(\bH^u)^2 = \bH^u$, and symmetric. 
Therefore,  
\begin{align}
	H^u_{\ell \ell} = (H^u_{\ell \ell})^2 + \sum_{j \neq \ell} (H^u_{\ell j})^2
	\implies \sum_{j \neq \ell} (H^u_{\ell j})^2 = H^u_{\ell \ell} ( 1 - H^u_{\ell \ell}). \label{eq:diag.2} 
\end{align} 
This yields  
$\Delta^\mix_{\ell \ell} \in H^u_{\ell \ell}  (1 - H^u_{\ell \ell})^{-1} [ \min_{j \neq \ell} (\sigma^\mix_{jT})^2, \max_{j \neq \ell} (\sigma^\mix_{jT})^2]$.
Since $\sum_{j \neq \ell} (H_{\ell j}^u)^2 \ge 0$, \eqref{eq:diag.2} implies $H^u_{\ell \ell} \in [0,1]$. 
Thus, $\Delta^\mix_{\ell \ell} = 0$ if $H^u_{\ell \ell} = 0$ and diverges if $H^u_{\ell \ell} = 1$. 
If 
$H^u_{\ell \ell}$ takes the average value $N_0^{-1} \sum_{\ell} H^u_{\ell \ell} 
= R N_0^{-1}$, then it follows that $\Delta^\mix_{\ell \ell} \in R (N_0-R)^{-1} [ \min_{j \neq \ell} (\sigma^\mix_{jT})^2, \max_{j \neq \ell} (\sigma^\mix_{jT})^2]$.  
A similar result is derived for $\Gamma^\mix_{\ell \ell}$. 

\smallskip 
\noindent 
{\em II: HRK-estimator.} 
Recall Lemma~\ref{lemma:inference.3}.
Consider the \HZ~estimator and the invertibility of $(\bH^u \circ \bH^u)$. 
A sufficient condition is strict diagonal dominance \citep{varga1969matrix}: 
$(1 - H^u_{\ell \ell})^2 > \sum_{j \neq \ell} (H^u_{\ell j})^2$. 
Using \eqref{eq:diag.2}, we simplify this condition as
$(1 - H^u_{\ell \ell})^2 > H^u_{\ell \ell} - (H^u_{\ell \ell})^2$.
Thus, $\max_\ell H^u_{\ell \ell} < 1/2$ is a sufficient condition for invertibility.  
%
%
The same arguments apply for \VT~regression. 

\smallskip 
\noindent 
{\bf Mixed variance estimator.} 
Let us bound $\hv^\mix_0$. 
From \eqref{eq:trace_interval}, it follows that $\hv^\mix_0 \in [\hv^\mix_{0, \min}, \hv^\mix_{0, \max}]$, where 
\begin{align}
	& \hv^\mix_{0, \min} =   
	\min_{i \le N_0} \hsigma^2_{iT} \| \bhbeta \|_2^2 
	~+ \min_{t \le T_0} \hsigma^2_{Nt} \| \bhalpha \|_2^2 
	~- 
	\max_{i \le N_0} \hsigma^2_{iT} \max_{t \le T_0} \hsigma^2_{Nt} \tr(\bS^{-2}), 
	\\ 
	& \hv^\mix_{0, \max} =  
	\max_{i \le N_0} \hsigma^2_{iT} \| \bhbeta \|_2^2 
	~+ \max_{t \le T_0} \hsigma^2_{Nt} \| \bhalpha \|_2^2
	~- 
	\min_{i \le N_0} \hsigma^2_{iT} \min_{t \le T_0} \hsigma^2_{Nt} \tr(\bS^{-2}). 
\end{align} 
Here, $\hsigma^2_{iT}$ and $\hsigma^2_{Nt}$ are the $i$th and $t$th diagonal elements of $\bhSigma_T$ and $\bhSigma_N$, respectively.  
Observe that $\hv^\mix_{0, \min} = \hv^\mix_{0, \max}$ for homoskedastic errors. 



\section{Illustrations} \label{sec:sim_emp}
This section provides details on the simulations that were absent in the main article. 

\subsection{Implementation Details} 
%
%
For ridge, lasso, and elastic net regressions, we use the default \texttt{scikit-learn} hyperparameters ($\lambda_1, \lambda_2$). 
For PCR, we choose the number of principal components $k$ via the approach described in Section~\ref{sec:simulations.1}. 
This yields $k=$ for the Basque study, $k=3$ for the California study, and $k = 4$ for the West Germany study. 
We implement simplex regression using the code made available at \href{https://matheusfacure.github.io/python-causality-handbook/15-Synthetic-Control.html}{https://matheusfacure.github.io/python-causality-handbook/15-Synthetic-Control.html}. 
%

\subsection{Data-Inspired Simulation Studies} 
Formally, we define average length (AL) as 
\begin{align}
	\text{AL} = \frac{1}{m} \sum_{\ell=1}^m \frac{(2 \cdot 1.96) \sqrt{\hv^{(\ell)}_0}}{|\hY^{(\ell)}_{NT}(0) |},
\end{align} 
where $\hv^{(\ell)}_0$ and $\hY^{(\ell)}_{NT}(0)$ are the variance and point estimates for the $\ell$th repeat. 

	 \section{Proofs for Point Estimation} 
{\bf Helper lemmas.} To establish Theorems~\ref{thm:equivalent} and \ref{thm:not_equivalent}, we first state the following useful lemmas for the collection of regression formulations presented in Section~\ref{sec:estimation}. 
 We provide their proofs in Appendix~\ref{sec:proofs_estimation_lemmas}. 
 
\begin{lemma} [OLS] \label{lemma:ols} 
$\EHZ = \EVT$ for OLS with $(\bhalpha, \bhbeta)$ as the minimum $\ell_2$-norm solutions:  
\begin{align} \label{eq:ols} 
    \hY_{NT}^{\ehz}(0) &= \hY_{NT}^{\evt}(0) 
    = \langle \by_N, \bY_0^\dagger \by_T \rangle 
    = \sum_{\ell=1}^R (1/s_\ell) \langle \by_N, \bv_\ell \rangle \langle \bu_\ell, \by_T \rangle.
\end{align}
\end{lemma}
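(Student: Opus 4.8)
\textbf{Approach.} The plan is to work directly with the singular value decomposition $\bY_0 = \bU \bS \bV'$ and the explicit formula for the Moore--Penrose pseudoinverse, and to verify that both the \HZ~and \VT~minimum $\ell_2$-norm OLS point estimates collapse to the common bilinear form $\langle \by_N, \bY_0^\dagger \by_T\rangle$. The key fact I would lean on is the standard characterization: for any matrix $\bM$, the minimum $\ell_2$-norm solution of $\argmin_{\bx} \|\by - \bM\bx\|_2^2$ is exactly $\bM^\dagger \by$. This is a classical result, so I would cite it (or give the one-line argument that $\bM^\dagger \by$ lies in the row space of $\bM$ and that any other minimizer differs from it by an element of $\ker(\bM)$, hence has strictly larger norm).

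\textbf{Key steps, in order.} First, apply the characterization to the \HZ~problem \eqref{eq:hz.rr.1} with $\lambda_1 = \lambda_2 = 0$ and design matrix $\bY_0$: the minimum $\ell_2$-norm solution is $\bhalpha = \bY_0^\dagger \by_T$, so by \eqref{eq:hz.rr.2},
\begin{align}
\hY_{NT}^{\ehz}(0) = \langle \by_N, \bhalpha \rangle = \langle \by_N, \bY_0^\dagger \by_T \rangle.
\end{align}
Second, apply the same characterization to the \VT~problem \eqref{eq:vt.rr.1} with $\lambda_1 = \lambda_2 = 0$ and design matrix $\bY_0'$: the minimum $\ell_2$-norm solution is $\bhbeta = (\bY_0')^\dagger \by_N = (\bY_0^\dagger)' \by_N$, using the identity $(\bY_0')^\dagger = (\bY_0^\dagger)'$. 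Hence by \eqref{eq:vt.rr.2},
\begin{align}
\hY_{NT}^{\evt}(0) = \langle \by_T, \bhbeta \rangle = \langle \by_T, (\bY_0^\dagger)' \by_N \rangle = \langle \bY_0^\dagger \by_T, \by_N \rangle = \langle \by_N, \bY_0^\dagger \by_T \rangle,
\end{align}
where the middle equality is just the definition of the transpose. This already gives $\hY_{NT}^{\ehz}(0) = \hY_{NT}^{\evt}(0) = \langle \by_N, \bY_0^\dagger \by_T\rangle$. Third, expand the pseudoinverse via its SVD, $\bY_0^\dagger = \bV \bS^{-1} \bU' = \sum_{\ell=1}^R (1/s_\ell)\, \bv_\ell \bu_\ell'$, and substitute:
\begin{align}
\langle \by_N, \bY_0^\dagger \by_T \rangle = \sum_{\ell=1}^R (1/s_\ell)\, \langle \by_N, \bv_\ell \rangle \langle \bu_\ell, \by_T \rangle,
\end{align}
which is the last expression in \eqref{eq:ols}.

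\textbf{Main obstacle.} There is no serious obstacle here; the lemma is essentially a bookkeeping exercise around the pseudoinverse. The one point requiring a little care is making sure the minimum $\ell_2$-norm selection rule is correctly invoked on \emph{both} sides with the \emph{correct} design matrix ($\bY_0$ for \HZ, $\bY_0'$ for \VT) and that the transpose identity $(\bY_0')^\dagger = (\bY_0^\dagger)'$ is stated rather than silently used — this identity is what makes the two linear forms literally the same number rather than merely ``the same shape'', which is precisely the gap the surrounding text notes was left open by prior work. I would also briefly remark that the common value is symmetric in its dependence on $(\by_N, \by_T)$, which foreshadows the identity $\langle \bhalpha, \bY_0' \bhbeta \rangle$ mentioned after Theorem~\ref{thm:equivalent}.
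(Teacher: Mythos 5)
Your proposal is correct and follows essentially the same route as the paper's proof: identify the minimum $\ell_2$-norm solutions as $\bhalpha = \bY_0^\dagger \by_T$ and $\bhbeta = (\bY'_0)^\dagger \by_N$ (the paper reaches this via the normal equations and $(\bY'_0\bY_0)^\dagger\bY'_0 = \bY_0^\dagger$, you by citing the standard minimum-norm characterization directly), then conclude via the transpose identity $(\bY'_0)^\dagger = (\bY_0^\dagger)'$ and the SVD expansion. No gaps.
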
 

\begin{lemma} [PCR] \label{lemma:pcr} 
$\EHZ = \EVT$ for PCR with the same choice of $k < R$:  
\begin{align} \label{eq:pcr} 
    \hY_{NT}^{\ehz}(0) &= \hY_{NT}^{\evt}(0) 
    = \langle \by_N, (\bY_0^{(k)})^\dagger \by_T \rangle 
    = \sum_{\ell=1}^k (1/s_\ell) \langle \by_N, \bv_\ell \rangle \langle \bu_\ell, \by_T \rangle.
\end{align}
\end{lemma}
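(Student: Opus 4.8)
The plan is to obtain the \PCR{} identity as an immediate corollary of Lemma~\ref{lemma:ols}, by recognizing that \HZ{}- and \VT{}-PCR are, by definition, ordinary least squares run with the rank-$k$ matrix $\bY_0^{(k)}$ substituted for $\bY_0$. First I would record the spectral bookkeeping for $\bY_0^{(k)}$. Writing $\bU^{(k)} \in \Rb^{N_0 \times k}$ and $\bV^{(k)} \in \Rb^{T_0 \times k}$ for the matrices of the top $k$ left and right singular vectors and $\bS^{(k)} = \diag(s_1, \dots, s_k)$, we have $\bY_0^{(k)} = \bU^{(k)} \bS^{(k)} (\bV^{(k)})'$. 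Because $k < R = \mathrm{rank}(\bY_0)$, the retained singular values satisfy $s_1 \ge \cdots \ge s_k > 0$, so this display is a genuine (thin) SVD; in particular $\mathrm{rank}(\bY_0^{(k)}) = k$ and the Moore--Penrose pseudoinverse has the closed form $(\bY_0^{(k)})^\dagger = \bV^{(k)} (\bS^{(k)})^{-1} (\bU^{(k)})' = \sum_{\ell=1}^k (1/s_\ell)\, \bv_\ell \bu_\ell'$.

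Next I would invoke Lemma~\ref{lemma:ols} verbatim, with $\bY_0$ replaced by $\bY_0^{(k)}$ and $R$ replaced by $k$. By the definition of PCR, \HZ{}- and \VT{}-PCR are precisely the problems \eqref{eq:hz.rr.1}--\eqref{eq:hz.rr.2} and \eqref{eq:vt.rr.1}--\eqref{eq:vt.rr.2} with $\lambda_1 = \lambda_2 = 0$ and this substituted design matrix; hence the minimum $\ell_2$-norm solutions are $\bhalpha = (\bY_0^{(k)})^\dagger \by_T$ and $\bhbeta = \big((\bY_0^{(k)})'\big)^\dagger \by_N$. Lemma~\ref{lemma:ols} then yields the required chain of equalities:
\begin{align*}
\hY^{\ehz}_{NT}(0) &= \langle \by_N, \bhalpha\rangle = \big\langle \by_N, (\bY_0^{(k)})^\dagger \by_T\big\rangle \\
&= \sum_{\ell=1}^k \frac{1}{s_\ell}\, \langle \by_N, \bv_\ell\rangle \langle \bu_\ell, \by_T\rangle = \langle \by_T, \bhbeta\rangle = \hY^{\evt}_{NT}(0).
\end{align*}

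Finally I would reconcile this with the ``compressed covariates'' description of PCR: regressing $\by_T$ on the $k$-dimensional scores $\bY_0 \bV^{(k)} = \bU^{(k)}\bS^{(k)}$ (which has full column rank $k$, so the fit is unique) and mapping the coefficients back through $\bV^{(k)}$ produces the linear functional $\by_N \mapsto \langle \by_N, \bV^{(k)}(\bS^{(k)})^{-1}(\bU^{(k)})'\by_T\rangle$, which is exactly the functional above; so the two formulations of PCR coincide, and likewise for the \VT{} side. There is no genuine obstacle here — the only points that require care are (i) justifying the closed form of $(\bY_0^{(k)})^\dagger$, which is where the hypothesis $k < R$ (equivalently $s_k > 0$) enters, and (ii) checking that the min-norm OLS solution associated with the low-rank design $\bY_0^{(k)}$ is the same representative picked out by the score-regression formulation — after which Lemma~\ref{lemma:ols} does all the work.
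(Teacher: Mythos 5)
Your proposal is correct and follows essentially the same route as the paper: compute the thin SVD of $\bY_0^{(k)}$ (valid since $k<R$ ensures $s_k>0$), obtain $(\bY_0^{(k)})^\dagger = \bV_k\bS_k^{-1}\bU_k'$, and reduce both \HZ~and \VT~PCR to the minimum $\ell_2$-norm OLS argument of Lemma~\ref{lemma:ols} with $\bY_0^{(k)}$ in place of $\bY_0$, which is exactly what the paper does (it just verifies the identity $\big((\bY_0^{(k)})'\bY_0^{(k)}\big)^\dagger(\bY_0^{(k)})' = (\bY_0^{(k)})^\dagger$ explicitly rather than citing the OLS lemma verbatim). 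Your closing reconciliation with the compressed-score formulation of PCR is a correct extra check that the paper omits but does not need.
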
 

\begin{lemma} [Ridge] \label{lemma:ridge} 
$\EHZ = \EVT$ for ridge regression with the same choice of $\lambda_2 > 0$:  
\begin{align} \label{eq:ridge} 
    \hY_{NT}^{\ehz}(0) &= \hY_{NT}^{\evt}(0) 
    = \langle \by_N,  (\bY_0' \bY_0 + \lambda_2 \bI)^{-1} \bY_0' \by_T \rangle 
    = \sum_{\ell=1}^R \frac{s_\ell}{s_\ell^2 + \lambda_2} \langle \by_N, \bv_\ell \rangle \langle \bu_\ell, \by_T \rangle.
\end{align}
\end{lemma}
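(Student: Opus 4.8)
The plan is to compute both ridge point estimates in closed form, show they coincide via a resolvent ``push-through'' identity, and then read off the stated spectral expression from the singular value decomposition of $\bY_0$.

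First I would record the closed forms. For $\lambda_2 > 0$ the objectives in \eqref{eq:hz.rr.1} and \eqref{eq:vt.rr.1} (with $\lambda_1 = 0$) are strictly convex, so the stationarity conditions yield the unique minimizers
\[
\bhalpha = (\bY_0'\bY_0 + \lambda_2\bI)^{-1}\bY_0'\by_T, \qquad \bhbeta = (\bY_0\bY_0' + \lambda_2\bI)^{-1}\bY_0\by_N,
\]
where both matrices being inverted are positive definite. Substituting into \eqref{eq:hz.rr.2} and \eqref{eq:vt.rr.2} gives
\[
\hY_{NT}^\ehz(0) = \by_N'(\bY_0'\bY_0 + \lambda_2\bI)^{-1}\bY_0'\by_T, \qquad \hY_{NT}^\evt(0) = \by_T'(\bY_0\bY_0' + \lambda_2\bI)^{-1}\bY_0\by_N .
\]

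Next I would invoke the identity $(\bY_0'\bY_0 + \lambda_2\bI)^{-1}\bY_0' = \bY_0'(\bY_0\bY_0' + \lambda_2\bI)^{-1}$, which follows by observing $\bY_0'(\bY_0\bY_0' + \lambda_2\bI) = (\bY_0'\bY_0 + \lambda_2\bI)\bY_0'$ and left- and right-multiplying by the respective inverses. This rewrites $\hY_{NT}^\ehz(0) = \by_N'\bY_0'(\bY_0\bY_0' + \lambda_2\bI)^{-1}\by_T$. Since $(\bY_0\bY_0' + \lambda_2\bI)^{-1}$ is symmetric, transposing the scalar $\hY_{NT}^\evt(0)$ produces exactly the same expression, so $\hY_{NT}^\ehz(0) = \hY_{NT}^\evt(0)$. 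To obtain the spectral form, substitute $\bY_0 = \bU\bS\bV'$, so $\bY_0\bY_0' = \bU\bS^2\bU'$ and $(\bY_0\bY_0' + \lambda_2\bI)^{-1} = \bU(\bS^2 + \lambda_2\bI)^{-1}\bU' + \lambda_2^{-1}(\bI - \bU\bU')$; since $\bU'\bU = \bI$ and the factor $\bY_0' = \bV\bS\bU'$ annihilates the $\bI - \bU\bU'$ piece, this collapses to $\bY_0'(\bY_0\bY_0' + \lambda_2\bI)^{-1} = \bV\bS(\bS^2 + \lambda_2\bI)^{-1}\bU'$, whence $\hY_{NT}^\ehz(0) = \by_N'\bV\bS(\bS^2 + \lambda_2\bI)^{-1}\bU'\by_T = \sum_{\ell=1}^R \tfrac{s_\ell}{s_\ell^2 + \lambda_2}\langle\by_N,\bv_\ell\rangle\langle\bu_\ell,\by_T\rangle$, as claimed.

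The only subtle bookkeeping is the use of the reduced SVD: $\bU\bU'$ and $\bV\bV'$ are projections, not identities, so the $\lambda_2^{-1}(\bI - \bU\bU')$ term must be carried until it is killed by the adjacent $\bY_0'$ (equivalently, note $\bY_0'\by_N$ already lies in the row space of $\bY_0$); this is the one place an error could creep in, and everything else is a direct computation. I would present this alongside Lemmas~\ref{lemma:ols} and \ref{lemma:pcr} and remark that letting $\lambda_2 \to 0$ recovers \eqref{eq:ols}, which both explains and unifies the symmetry across the three formulations.
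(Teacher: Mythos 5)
Your proof is correct and follows essentially the same route as the paper: both rest on the push-through identity $(\bY_0'\bY_0 + \lambda_2\bI)^{-1}\bY_0' = \bY_0'(\bY_0\bY_0' + \lambda_2\bI)^{-1}$ applied to the unique ridge solutions. The only differences are cosmetic---the paper obtains the closed forms by specializing its Hadamard-product lemma (Lemma~\ref{lemma:reg_ls} with $K=1$) rather than from the first-order conditions directly, and it leaves the spectral expression implicit, which you verify explicitly (correctly handling the reduced-SVD projection term).
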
 

\begin{lemma} [Lasso] \label{lemma:lasso} 
$\EHZ \neq \EVT$ for lasso regression. 
\end{lemma}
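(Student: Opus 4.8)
The plan is to establish non-equivalence by producing an explicit counterexample: the assertion is the negation of ``$\HZ = \VT$ holds for every lasso instance,'' so it suffices to exhibit one data configuration $(\by_N, \bY_0, \by_T)$ on which the two point estimates differ. Because the lasso solution depends continuously on the data away from the non-generic set where the active set is ambiguous, any such disagreement persists on an open neighborhood; combined with the uniqueness guarantee of \cite{lasso_ryan}, this shows the failure of symmetry holds on a positive-measure set of $\bY_0$ with continuous entries, so the phenomenon is robust rather than knife-edge.

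First I would pass to the minimal nontrivial panel, $N_0 = T_0 = 1$ (hence $N = T = 2$), where $\bY_0 = y \in \Rb \setminus \{0\}$, $\by_N = a$, and $\by_T = b$ are scalars and $\lambda_2 = 0$, $\lambda_1 > 0$. Then \eqref{eq:hz.rr.1} is a one-dimensional lasso whose unique minimizer is a soft-threshold of $b/y$, \eqref{eq:vt.rr.1} is a one-dimensional lasso whose unique minimizer is a soft-threshold of $a/y$, and by \eqref{eq:hz.rr.2} and \eqref{eq:vt.rr.2} the point estimates are $\hY_{NT}^{\ehz}(0) = a\,\bhalpha$ and $\hY_{NT}^{\evt}(0) = b\,\bhbeta$. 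The mechanism is then transparent: soft-thresholding acts on $b/y$ in the \HZ~problem but on $a/y$ in the \VT~problem, and the shrunk value is afterwards rescaled by $a$ versus $b$ respectively; these operations commute for the linear maps underlying Theorem~\ref{thm:equivalent} but not once the nonlinear shrinkage is present.

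Next I would instantiate constants that place both problems in the active regime of the soft-threshold operator, e.g.\ $y = 1$ with $a$ and $b$ of the same sign, $a \neq b$, and $\min\{|a|,|b|\} > \lambda_1/2$, and read off directly that $\hY_{NT}^{\ehz}(0) - \hY_{NT}^{\evt}(0) = \tfrac{\lambda_1}{2}\,(b - a) \neq 0$; the choice $y = 1$, $a = 2$, $b = 1$, $\lambda_1 = 1$ gives $\bhalpha = 1/2$, $\bhbeta = 3/2$, and hence point estimates $1$ and $3/2$. Finally I would pad $\bY_0$, $\by_N$, and $\by_T$ with zero rows and columns to embed this into an $N \times T$ panel of any shape, so that, as with Theorem~\ref{thm:equivalent}, no restriction on the relative sizes of $N$ and $T$ enters, and I would perturb the padded configuration slightly to arrange continuous entries for $\bY_0$ as required by the uniqueness hypothesis, the disagreement surviving by continuity.

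The only delicate point is bookkeeping rather than analysis: one must verify that the chosen constants genuinely land in the active regime, so that the $\ell_1$ penalty visibly perturbs the OLS fit and the asymmetry is exposed (in the inactive regime both estimates vanish and trivially coincide, which would be an uninformative ``counterexample''). There is no substantive obstacle; the content of Lemma~\ref{lemma:lasso} is conceptual---the sparsity-promoting geometry of the $\ell_1$-ball is precisely what breaks the exchangeability of the roles of $\by_N$ and $\by_T$ that drives the symmetric class, matching the discussion following Theorem~\ref{thm:not_equivalent}.
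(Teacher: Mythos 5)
Your argument is correct, and the computation checks out: in the one-control-unit, one-pretreatment-period configuration with $y=1$, $a=2$, $b=1$, $\lambda_1=1$, the \HZ~lasso gives $\bhalpha=1/2$ and point estimate $1$, while the \VT~lasso gives $\bhbeta=3/2$ and point estimate $3/2$, and in the active regime the discrepancy $\tfrac{\lambda_1}{2}(b-a)$ is exactly as you state. This is, however, a genuinely different route from the paper. The paper does not build a counterexample; it invokes its Lemma~\ref{lemma:reg_ls} (the Hadamard product parametrization of $\ell_p$ penalties, following \cite{hoff}) to write the lasso solution as a componentwise product of two ridge-type solutions, and then argues that the \HZ~estimate is linear in $\by_N$ but nonlinear (at least quadratic) in $\by_T$ while the \VT~estimate has the reversed structure, so that—given uniqueness under continuously distributed entries of $\bY_0$—the two cannot coincide in general. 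The paper's approach buys a structural explanation that is shared with its ridge and elastic net analyses (the same lemma is reused for Lemma~\ref{lemma:enet}), whereas your soft-thresholding counterexample is more elementary and arguably more airtight at the level of an explicit instance, in the same spirit as the paper's own proof for simplex regression (Lemma~\ref{lemma:cvx}). The only soft spot in your write-up is the padding-and-perturbation step: continuity of the lasso fit in the data at points where the solution is unique is asserted rather than argued, and the padded matrix itself violates the continuous-entries hypothesis. In your construction this is harmless—the scalar solution is an explicit soft-threshold satisfying strict complementarity, coefficients on zero columns are uniquely zero, and the difference of the two estimates is an explicitly continuous function of $(y,a,b)$, so it remains nonzero on an open neighborhood reachable by continuously distributed data—but a sentence verifying this (or simply noting that the $N=T=2$ instance with generic entries near your values already suffices, since the lemma requires only one failing configuration) would close the loop.
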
 

\begin{lemma} [Elastic net] \label{lemma:enet} 
$\EHZ \neq \EVT$ for elastic net regression. 
\end{lemma}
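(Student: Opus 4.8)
The plan is to establish non-equivalence by exhibiting an explicit data configuration on which $\hY^{\hz}_{NT}(0)\neq\hY^{\vt}_{NT}(0)$; since $\EHZ=\EVT$ means the two estimates always coincide, a single counterexample is a complete proof. First I would record that, because $\lambda_2>0$, the elastic net criteria in \eqref{eq:hz.rr.1} and \eqref{eq:vt.rr.1} are strictly convex, so $\bhalpha$ and $\bhbeta$ are uniquely defined and we need not invoke the continuity assumption used for the lasso in Lemma~\ref{lemma:lasso}. At a high level, symmetry breaks because the push-through identity $(\bY_0'\bY_0+\lambda_2\bI)^{-1}\bY_0'=\bY_0'(\bY_0\bY_0'+\lambda_2\bI)^{-1}$ underlying Lemma~\ref{lemma:ridge} has no analogue once the nonsmooth $\ell_1$ term is present; the counterexample makes this concrete.

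Next I would reduce to the minimal panel $N_0=T_0=1$ (equivalently $N=T=2$), which is admissible since the statements place no restriction on the relative sizes of $N$ and $T$. In this case $\bY_0$, $\by_N$, $\by_T$ collapse to scalars $y\neq 0$, $y_N$, $y_T$. The subgradient (first-order) condition for $\argmin_{\alpha}(y_T-y\alpha)^2+\lambda_1|\alpha|+\lambda_2\alpha^2$ is the usual soft-thresholding rule; on the region where the minimizer is positive it gives $\hat\alpha=(2yy_T-\lambda_1)/\bigl(2(y^2+\lambda_2)\bigr)$, and by the same computation $\hat\beta=(2yy_N-\lambda_1)/\bigl(2(y^2+\lambda_2)\bigr)$. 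Substituting into \eqref{eq:hz.rr.2} and \eqref{eq:vt.rr.2} gives $\hY^{\hz}_{NT}(0)=y_N\hat\alpha$ and $\hY^{\vt}_{NT}(0)=y_T\hat\beta$, hence
\begin{align}
\hY^{\hz}_{NT}(0)-\hY^{\vt}_{NT}(0)=\frac{\lambda_1\,(y_T-y_N)}{2\,(y^2+\lambda_2)},
\end{align}
which is nonzero whenever $\lambda_1>0$ and $y_T\neq y_N$. To finish I would fix concrete values—e.g.\ $y=1$, $y_T=2$, $y_N=10$, and any $\lambda_1\in(0,4)$ with $\lambda_2>0$—and verify the positivity conditions $2yy_T>\lambda_1$ and $2yy_N>\lambda_1$, so that the soft-thresholded solutions really are the positive branch used above and the difference displayed is indeed attained.

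\textbf{Main obstacle.} The only delicate point is staying in the regime where soft-thresholding returns a \emph{nonzero} coefficient: in the over-penalized regime both $\hat\alpha$ and $\hat\beta$ are driven to $0$ and the two estimates trivially agree, so the counterexample would be vacuous—hence the explicit check that the chosen $(y,y_N,y_T,\lambda_1)$ lies in the positive branch for both regressions. An optional robustness step, which I expect to be routine bookkeeping only, is to pad the $1\times1$ example up to arbitrary $(N_0,T_0)$ by appending an orthogonal block of rows and columns to $\bY_0$ (setting the extra coordinates of $\by_N,\by_T$ so the corresponding one-dimensional subproblems decouple), showing the asymmetry is not an artifact of the minimal dimension. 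Beyond this, the argument is just the one-line computation above together with the convexity/uniqueness remark.
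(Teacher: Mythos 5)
Your proposal is correct, but it takes a genuinely different route from the paper. The paper proves Lemma~\ref{lemma:enet} by rewriting the elastic net as a lasso on augmented data ($\bY_0$ stacked with $\sqrt{\lambda_2}\bI$), invoking the Hadamard-product-parametrization result (Lemma~\ref{lemma:reg_ls} with $K=2$) to obtain fixed-point expressions for $\bhalpha$ and $\bhbeta$, and then arguing that the \HZ~solution depends on $(\by_N,\by_T)$ with the opposite linear/quadratic structure from the \VT~solution, so the two cannot coincide in general; uniqueness for $\lambda_2>0$ closes the argument. You instead exhibit an explicit counterexample: in the scalar panel $N_0=T_0=1$ the positive-branch soft-thresholding solutions give $\hY^{\hz}_{NT}(0)-\hY^{\vt}_{NT}(0)=\lambda_1(y_T-y_N)/\bigl(2(y^2+\lambda_2)\bigr)$, which is nonzero for your concrete choice $y=1$, $y_T=2$, $y_N=10$, $\lambda_1\in(0,4)$, and you correctly verify both coefficients stay off the thresholded-to-zero branch (the one place the example could collapse). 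Since the paper's shorthand $\EHZ\neq\EVT$ means only that the point estimates do not always agree, a single counterexample suffices, and indeed the paper's own proof of Lemma~\ref{lemma:cvx} proceeds by explicit example, so your strategy is consistent with the paper's conventions. What the two approaches buy: yours is more elementary and arguably more airtight, since the paper's ``linear in one, at least quadratic in the other, hence not matching in general'' step is left somewhat heuristic, whereas your computation is fully explicit; the paper's route, on the other hand, exposes the structural source of the asymmetry (the $\ell_1$ term destroying the push-through identity $(\bY_0'\bY_0+\lambda_2\bI)^{-1}\bY_0'=\bY_0'(\bY_0\bY_0'+\lambda_2\bI)^{-1}$ that drives the ridge symmetry) uniformly in all dimensions and with the same machinery used for the lasso. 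Your optional padding step to general $(N_0,T_0)$ is unnecessary for the lemma as stated but would be routine, as you say.
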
 

\begin{lemma} [Simplex regression] \label{lemma:cvx}
$\EHZ \neq \EVT$ for simplex regression. 
\end{lemma}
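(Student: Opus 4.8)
Since the claim is a non-equivalence, the plan is to exhibit an explicit data configuration $(\by_N, \bY_0, \by_T)$ for which $\hY^{\hz}_{NT}(0) \neq \hY^{\vt}_{NT}(0)$. The structural fact I would exploit is that the simplex constraint forces each point estimate to be a \emph{convex combination} of a restricted block of the observed outcomes. Concretely, from \eqref{eq:hz.cvx.1}--\eqref{eq:hz.cvx.2}, any feasible $\bhalpha$ satisfies $\bhalpha \succeq \bzero$ and $\bhalpha'\bone = 1$, so $\hY^{\hz}_{NT}(0) = \langle \by_N, \bhalpha\rangle \in [\min_{t \le T_0} Y_{Nt},\, \max_{t \le T_0} Y_{Nt}]$; symmetrically, from \eqref{eq:vt.cvx.1}--\eqref{eq:vt.cvx.2}, $\hY^{\vt}_{NT}(0) = \langle \by_T, \bhbeta\rangle \in [\min_{i \le N_0} Y_{iT},\, \max_{i \le N_0} Y_{iT}]$. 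Hence it is enough to pick data so that these two intervals are disjoint.

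First I would fix an arbitrary $\lambda \ge 0$ as in \eqref{eq:hz.cvx.1} and \eqref{eq:vt.cvx.1} and take a small instance, say $N_0 = T_0 = 2$, with $\bY_0$ a fixed well-conditioned $2\times 2$ matrix (so that for $\lambda > 0$ both programs are strictly convex and uniquely solved, while for $\lambda = 0$ the containment above still applies to every minimizer). I would then set every entry of $\by_N$ to lie in $[10,11]$ and every entry of $\by_T$ to lie in $[0,1]$. By the convex-combination bounds, $\hY^{\hz}_{NT}(0) \ge 10 > 1 \ge \hY^{\vt}_{NT}(0)$, so the two point estimates cannot coincide, which proves $\EHZ \neq \EVT$ for simplex regression. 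An even more transparent --- though degenerate --- witness is $N_0 = T_0 = 1$, where the simplex in $\Rb^1$ is the single point $\{1\}$, forcing $\hY^{\hz}_{NT}(0) = Y_{N1}$ and $\hY^{\vt}_{NT}(0) = Y_{1T}$, which differ whenever $Y_{N1} \neq Y_{1T}$.

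The only delicate points, where I would spend care rather than effort, are: (i) the standard formulation $\lambda = 0$ can admit multiple minimizers, but since the convex-hull containment holds for \emph{all} feasible weights, the strict inequality --- hence the non-equivalence --- is insensitive to which minimizer is selected; and (ii) the counterexample is not an artifact of a measure-zero configuration, which is immediate because the condition $\min_{t} Y_{Nt} > \max_{i} Y_{iT}$ is an open one and thus stable under small perturbations of the data. I do not anticipate a genuine obstacle: the lemma is ``negative'', and a single robust witness suffices, in contrast to Lemmas~\ref{lemma:ols}--\ref{lemma:ridge}, whose positive content required the SVD identities. If a fully worked instance is wanted, I would additionally record the closed-form simplex solutions for the chosen $2\times 2$ $\bY_0$ --- a one-line quadratic-program calculation --- to display the two distinct numerical values explicitly.
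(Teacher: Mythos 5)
Your argument is correct, and it proves the lemma by a genuinely different route than the paper. You exploit the fact that any feasible simplex weight vector makes the point estimate a convex combination of the relevant data block, so $\hY^{\hz}_{NT}(0)$ must lie in $[\min_t Y_{Nt}, \max_t Y_{Nt}]$ and $\hY^{\vt}_{NT}(0)$ in $[\min_i Y_{iT}, \max_i Y_{iT}]$, and then you choose data making these ranges disjoint; since the containment holds for every feasible point, the argument is insensitive to non-uniqueness at $\lambda = 0$ and to the choice of $\bY_0$, and the witness is an open set of configurations. The paper instead works through the KKT conditions of both programs, writes each prediction as a shared ridge/OLS ``base'' term (identical by Theorem~\ref{thm:equivalent}) plus a correction involving the dual variables $(\bhtheta, \hnu)$, and exhibits a specific instance ($\bY_0 = \bI$, $\by_N = \bzero$, and $\by_T$ proportional to $\bhtheta^\vt - \hnu^\vt\bone$) where the HZ correction vanishes but the VT correction is strictly positive. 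What each buys: your route is shorter, more elementary, and more robust, needing no optimality conditions at all; the paper's KKT decomposition is heavier but localizes the asymmetry in the dual correction terms, which supports the surrounding discussion that the simplex geometry (rather than the least-squares ``base'') is the source of the HZ/VT asymmetry. Your proposal has no gap — the closed-form $2\times 2$ solutions you offer to compute at the end are unnecessary, since the interval-disjointness already forces the two estimates apart.
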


\subsection{Proof of Theorem~\ref{thm:equivalent}} 

\begin{proof} 
The proof is immediate from Lemmas~\ref{lemma:ols}--\ref{lemma:ridge}. 
\end{proof} 

\subsection{Proof of Theorem~\ref{thm:not_equivalent}} 

\begin{proof} 
The proof is immediate from Lemmas~\ref{lemma:lasso}--\ref{lemma:cvx}. 
\end{proof} 

\subsection{Proof of Corollary~\ref{cor:sdid}} 

\begin{proof}
We begin with the OLS. 
Recall that $\hY^\hz_{NT}(0)  = \langle \by_N, \bhalpha \rangle$ with $\bhalpha = \bY_0^\dagger \by_T$ and $\hY^\vt_{NT}(0) = \langle \by_T, \bhbeta \rangle$ with $\bhbeta = (\bY_0')^\dagger \by_N$.  
By Theorem~\ref{thm:equivalent}, we have 
\begin{align}
	\hY^\hz_{NT}(0)  = \hY^\vt_{NT}(0) = \langle \by_N, \bY_0^\dagger \by_T \rangle. 
\end{align} 
Returning to \eqref{eq:sdid}, we obtain
\begin{align}
	\hY^\sdid_{NT}(0) &= \hY^\asc_{NT}(0) =  \langle \by_T, \bhbeta \rangle + \langle \by_N, \bhalpha \rangle - \langle \bhalpha, \bY_0' \bhbeta \rangle  
	= 2 \langle \by_T, \bhbeta \rangle - \langle  \bhalpha, \bY_0' \bhbeta \rangle. \label{eq:sdid.0} 
\end{align} 
Recall $(\bY_0')^\dagger  = (\bY_0^\dagger)'$. 
Therefore,  
\begin{align}
 	\langle  \bhalpha, \bY_0' \bhbeta \rangle &= \by'_T(\bY_0')^\dagger \bY_0' (\bY_0')^\dagger \by_N 
	= \by'_T(\bY_0')^\dagger \by_N 
	= \by'_T \bhbeta. \label{eq:sdid.1}
\end{align} 
Plugging \eqref{eq:sdid.1} into \eqref{eq:sdid.0}, we conclude 
\begin{align}
	\hY^\sdid_{NT}(0) &= \hY^\asc_{NT}(0) = 2 \langle \by_T, \bhbeta \rangle - \langle \by_T, \bhbeta \rangle
	= \hY^\vt_{NT}(0) = \hY^\hz_{NT}(0). 
\end{align}  
Now, observe that the same arguments above hold when $\bY_0^{(k)}$ takes the place of $\bY_0$ for any $k < R$. 
Therefore, the same reduction can be derived for PCR. 
\end{proof} 

\subsection{Proof of Corollary~\ref{cor:intercepts}} 

\begin{proof}
Let $\bY^\hz_0 = [\bone, \bY_0]$ and $\bY^\vt_0 = [\bone, \bY'_0]$. 
The proof is immediate from Theorem~\ref{thm:equivalent} by noting that $(\bY^\hz_0)' \neq \bY^\vt_0$. 
\end{proof}

\subsection{Proof of Corollary~\ref{cor:equivalent.int}} 

\begin{proof}
Consider \HZ~ridge regression. 
To begin, the optimality conditions give 
\begin{align}
	\nabla_{(\alpha_0, \alpha_1, \balpha)} \left\{ \| \by_T - \bY_0 \balpha - \alpha_0 \bone\|_2^2 + \| \by_N - \balpha_1 \bone \|_2^2 + \lambda \| \balpha \|_2^2 \right\} &= 0. 
\end{align}
Solving for $\alpha_0$, we have $\halpha_0 = (1/N_0) \langle \by_T, \bone \rangle$, 
where we have used the fact that $\bY'_0 \bone = \bzero$. 
Solving for $\alpha_1$, we have $\halpha_1 = (1/T_0) \langle \by_N, \bone \rangle$. 
Finally, solving for $\balpha$, we have $\bhalpha = (\bY'_0 \bY_0 + \lambda \bI)^{-1} \bY'_0 \by_T$. 

Switching gears to \VT~ridge regression, similar arguments yield 
$\hbeta_0 = (1/T_0) \langle \by_N, \bone \rangle$,  
$\hbeta_1 = (1/N_0) \langle \by_T, \bone \rangle$,
and 
$\bhbeta = (\bY_0 \bY'_0 + \lambda \bI) \bY_0 \by_N$. 

We establish our desired result by invoking Theorem~\ref{thm:equivalent} to obtain  
$\langle \by_N, \bhalpha \rangle = \langle \by_T, \bhbeta \rangle$. 
Next, we observe that $\halpha_0 = \hbeta_1$ and $\hbeta_0 = \halpha_1$. 
This proves that \HZ~and \VT~ridge yield numerically identical point estimates. 
Setting $\lambda = 0$ and using the pseudoinverse, we have our result for OLS. 
The result for PCR is then established from the OLS result by substituting $\bY_0^{(k)}$ for $\bY_0$. 
This completes the proof. 
\end{proof}

\subsection{Proofs of Point Estimation Lemmas} \label{sec:proofs_estimation_lemmas} 

\subsubsection{Proof of Lemma~\ref{lemma:ols}: OLS} 
\begin{proof} 
Consider \HZ~regression. 
By the optimality conditions, 
\begin{align}
	\nabla_{\balpha} \| \by_T - \bY_0 \balpha \|_2^2 &= 0. 
\end{align}
Solving for $\balpha$, we derive the well-known ``normal equations'' 
\begin{align}
	\bY_0' \bY_0 \balpha = \bY_0' \by_T. 
\end{align} 
Using the pseudoinverse, we obtain 
\begin{align}
	\bhalpha &= (\bY'_0 \bY_0)^\dagger \bY'_0 \by_T 
	= \bY_0^\dagger \by_T  \label{eq:hz_ols} 
\end{align} 
Observe that this corresponds to the unique minimum $\ell_2$-norm solution that lies within the rowspace of $\bY_0$. 
Therefore, the \HZ~prediction is given by 
\begin{align}
	\hY^\hz_{NT}(0) &= \langle \by_N, \bhalpha \rangle 
	= \langle \by_N, \bY^\dagger_0 \by_T \rangle. \label{eq:hz_ols.1} 
\end{align}

Following the arguments above for \VT~regression, it follows that 
\begin{align}
	\bhbeta &= (\bY_0 \bY_0')^\dagger \bY_0 \by_N 
	= (\bY_0')^\dagger \by_N,
\end{align} 
which corresponds to the unique minimum $\ell_2$-norm solution that lies within the columnspace of $\bY_0$. 
Therefore, 
\begin{align}
	\hY^\vt_{NT}(0) &= \langle \by_T, \bhbeta \rangle 
	= \langle \by_T, (\bY'_0)^\dagger \by_N \rangle. \label{eq:vt_ols.1} 
\end{align} 
Given that $(\bY_0')^\dagger = (\bY_0^\dagger)'$, we conclude  
\begin{align}
	\hY^\hz_{NT}(0) &= \langle \by_N, \bY_0^\dagger \by_T \rangle 
	= \langle \by_T, (\bY_0')^\dagger \by_N \rangle 
	= \hY^\vt_{NT}(0). \label{eq:ols_equiv} 
\end{align} 
\end{proof}

\subsubsection{Proof of Lemma~\ref{lemma:pcr}: PCR} 

\begin{proof} 
Consider \HZ~regression with any $k < R$. 
Let $\bU_k \in \Rb^{N_0 \times k}$ and $\bV_k \in \Rb^{T_0 \times k}$ denote the matrices formed by the top $k$ left and right singular vectors, respectively, and $\bS_k \in \Rb^{k \times k}$ denote the matrix of top $k$ singular values. 
Observe that 
\begin{align}
	\left((\bY_0^{(k)})' \bY_0^{(k)} \right)^\dagger (\bY_0^{(k)})' 
	&= (\bV_k \bS_k^{-2} \bV'_k) \bV_k \bS_k \bU'_k 
	= \bV_k \bS_k^{-1} \bU'_k
	= (\bY_0^{(k)})^\dagger. 
\end{align} 
Therefore, $\bhalpha = (\bY_0^{(k)})^\dagger \by_T$, which corresponds to the unique minimum $\ell_2$-norm solution that lies within the rowspace of $\bY^{(k)}_0$. 
Following the proof of Lemma~\ref{lemma:ols}, we conclude that
\begin{align} 
	\hY^\hz_{NT}(0) &= \langle \by_N, \bhalpha \rangle = \langle \by_N, (\bY_0^{(k)})^\dagger \by_T \rangle.
\end{align} 
Similarly, for \VT~regression, we note that 
\begin{align}
	\left(\bY_0^{(k)} (\bY_0^{(k)})' \right)^\dagger \bY_0^{(k)}
	&= (\bU_k \bS_k^{-2} \bU'_k) \bU_k \bS_k \bV'_k 
	= \bU_k \bS_k^{-1} \bV'_k
	= ((\bY_0^{(k)})')^\dagger. 
\end{align} 
In turn, we have $\bhbeta = ((\bY_0^{(k)})')^\dagger \by_N$, which corresponds to the unique minimum $\ell_2$-norm solution that lies within the columnspace of $\bY^{(k)}_0$. 
Moreover, 
\begin{align} 
	\hY^\vt_{NT}(0) &= \langle \by_T, \bhbeta \rangle = \langle \by_T, ((\bY_0^{(k)})')^\dagger \by_N \rangle. 
\end{align} 
We finish by establishing  
\begin{align}
	\hY^\hz_{NT}(0) = \langle \by_N, (\bY_0^{(k)})^\dagger \by_T \rangle 
	= \langle \by_T, ((\bY_0^{(k)})')^\dagger \by_N \rangle = \hY^\vt_{NT}(0). 
\end{align} 
\end{proof}

\smallskip 
\noindent 
{\bf Helper lemma.} To establish Lemmas~\ref{lemma:ridge}--\ref{lemma:enet}, we first establish a general result in Lemma~\ref{lemma:reg_ls} for $\ell_p$-penalties, where $p = 2/K$ and $K$ is an integer $\ge 1$, based on the contributions of \cite{hoff}. 
More formally, consider
\begin{itemize}
	\item [(a)] \HZ~regression: for $K \ge 1$ and $\lambda > 0$, 
	\begin{align} 
		&\bhalpha = \argmin_{\balpha} \| \by_T- \bY_0 \balpha \|_2^2 ~+~ \lambda \| \balpha \|_p^p \label{eq:hz.lp.1} 
		\\ &\hY_{NT}^\hz(0) =\langle \by_N, \bhalpha \rangle. \label{eq:hz.lp.2} 
	\end{align} 
	
	\item [(b)] \VT~regression: for $K \ge 1$ and $\lambda > 0$,
	\begin{align} 
		&\bhbeta = \argmin_{\bbeta} \| \by_N - \bY_0' \bbeta \|_2^2 ~+~ \lambda \| \bbeta \|_p^p \label{eq:vt.lp.1} 
		\\ &\hY_{NT}^\vt(0) =\langle \by_T, \bhbeta \rangle. \label{eq:vt.lp.2} 
	\end{align} 
\end{itemize}
We remark that $K=1$ and $K=2$ yield ridge and lasso regression, respectively, while $K > 2$ yields non-convex penalties. 
We relegate the proof of Lemma~\ref{lemma:reg_ls} to Appendix~\ref{sec:proof_reg_ls}. 

\begin{lemma} \label{lemma:reg_ls} 
For any $K \ge 1$ and $\lambda > 0$, a \EHZ~and \EVT~regression solution is
\begin{align}
	\hY^\ehz_{NT}(0) &=\langle \by_N, \bhalpha_1 \circ \dots \circ \bhalpha_K \rangle
	\\ 
	\hY^\evt_{NT}(0) &=\langle \by_T, \bhbeta_1 \circ \dots \circ \bhbeta_K \rangle,
\end{align}
where for every $k \le K$, 
\begin{align}
	\bhalpha_k &= \left( \bD(\bhalpha_{\sim k}) \bY_0' \bY_0 \bD(\bhalpha_{\sim k}) + \frac{\lambda}{K} \bI \right)^{-1} \bD(\bhalpha_{\sim k}) \bY_0' \by_T, \label{eq:lp.alpha} 
	\\ 
	\bhbeta_k &= \left( \bD(\bhbeta_{\sim k}) \bY_0 \bY_0' \bD(\bhbeta_{\sim k})+ \frac{\lambda}{K} \bI \right)^{-1} \bD(\bhbeta_{\sim k}) \bY_0 \by_N, \label{eq:lp.beta} 
\end{align} 
$\bhalpha_{\sim k} = \bhalpha_1 \circ \dots \circ \bhalpha_{k-1} \circ \bhalpha_{k+1} \circ \dots \circ \bhalpha_K$,
$\bhbeta_{\sim k} = \bhbeta_1 \circ \dots \circ \bhbeta_{k-1} \circ \bhbeta_{k+1} \circ \dots \circ \bhbeta_K$,
and $\bD(\bhalpha_{\sim k})$ and $\bD(\bhbeta{\sim k})$ are diagonal matrices formed from $\bhalpha_{\sim k}$ and $\bhbeta_{\sim k}$, respectively.  
\end{lemma}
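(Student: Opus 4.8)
The plan is to follow the reparametrization trick of \cite{hoff}: write each coefficient $\alpha_j$ as a product $\alpha_j = \prod_{k=1}^K (\alpha_k)_j$ of $K$ auxiliary variables, so that $\balpha = \bhalpha_1 \circ \dots \circ \bhalpha_K$. The key algebraic fact I would invoke is that the $\ell_p$-penalty with $p = 2/K$ admits a variational representation in terms of sums of squared $\ell_2$-norms of the factors. Concretely, for a scalar $a$ and exponent $p = 2/K$, one has $|a|^{p} = \min\{ \tfrac1K \sum_{k=1}^K c_k^2 : \prod_{k=1}^K c_k = a \}$ up to the constant $K$ absorbed into $\lambda/K$; summing over coordinates gives $\lambda \|\balpha\|_p^p = \min \tfrac{\lambda}{K} \sum_{k=1}^K \|\bhalpha_k\|_2^2$ over all factorizations. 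I would first verify this scalar identity by a Lagrange-multiplier / AM–GM computation, then lift it coordinatewise.

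Next I would substitute this representation into the \HZ~objective \eqref{eq:hz.lp.1}, turning it into the jointly-in-$(\bhalpha_1,\dots,\bhalpha_K)$ problem
\begin{align}
	\min_{\bhalpha_1,\dots,\bhalpha_K} ~\Big\| \by_T - \bY_0 \big(\bhalpha_1 \circ \cdots \circ \bhalpha_K\big) \Big\|_2^2 ~+~ \frac{\lambda}{K}\sum_{k=1}^K \|\bhalpha_k\|_2^2. \label{eq:joint_reparam}
\end{align}
Holding all factors except $\bhalpha_k$ fixed, the map $\bhalpha_k \mapsto \bhalpha_1 \circ \cdots \circ \bhalpha_K$ is linear: it equals $\bD(\bhalpha_{\sim k}) \bhalpha_k$, where $\bhalpha_{\sim k}$ is the Hadamard product of the other factors. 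Hence \eqref{eq:joint_reparam} becomes, in $\bhalpha_k$ alone, a ridge regression of $\by_T$ on the matrix $\bY_0 \bD(\bhalpha_{\sim k})$ with ridge parameter $\lambda/K$. Its unique minimizer is the normal-equation solution
\begin{align}
	\bhalpha_k = \big( \bD(\bhalpha_{\sim k}) \bY_0' \bY_0 \bD(\bhalpha_{\sim k}) + \tfrac{\lambda}{K}\bI \big)^{-1} \bD(\bhalpha_{\sim k}) \bY_0' \by_T,
\end{align}
which is exactly \eqref{eq:lp.alpha}; a block-coordinate (alternating) stationary point of \eqref{eq:joint_reparam} therefore satisfies this system for every $k$. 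The same derivation with $\bY_0$ replaced by $\bY_0'$ and $\by_T$ by $\by_N$ yields \eqref{eq:lp.beta}, and then $\hY^\ehz_{NT}(0) = \langle \by_N, \bhalpha_1\circ\cdots\circ\bhalpha_K\rangle$, $\hY^\evt_{NT}(0) = \langle \by_T, \bhbeta_1\circ\cdots\circ\bhbeta_K\rangle$ follow by substituting the product back into \eqref{eq:hz.lp.2}, \eqref{eq:vt.lp.2}.

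The main obstacle is the \emph{correctness of the variational identity and of the reduction in the non-convex regime} ($K>2$): there one must argue that a minimizer of the original $\ell_p$-penalized objective does factor, i.e. that the infimum in the scalar identity is attained and that passing to the lifted problem does not spuriously create or destroy optima. For $K=1,2$ everything is convex and this is routine, but for $K>2$ the statement is only that \emph{a} \HZ/\VT~solution takes this form (the lemma is phrased that way), so I would be careful to claim only that any stationary point of the factored problem maps to a stationary point of the original, and that the global minimizer of the original lifts to a feasible point of \eqref{eq:joint_reparam} achieving the same value — which suffices for the stated existence claim. A secondary, minor technical point is invertibility of $\bD(\bhalpha_{\sim k})\bY_0'\bY_0\bD(\bhalpha_{\sim k}) + \tfrac{\lambda}{K}\bI$, which is immediate for $\lambda>0$ since the added term is positive definite.
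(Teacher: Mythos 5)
Your proposal is correct and follows essentially the same route as the paper: the Hadamard product parametrization of \cite{hoff} converts the $\ell_p$ penalty into summed squared $\ell_2$ penalties, and blockwise (alternating) ridge updates then give exactly the normal equations \eqref{eq:lp.alpha}--\eqref{eq:lp.beta}. The only differences are cosmetic---you treat $\bhalpha_k \mapsto \bD(\bhalpha_{\sim k})\bhalpha_k$ directly as a linear map, whereas the paper expands the quadratic form via Hadamard-product identities before reaching the same $\bD(\cdot)$ expressions---and your added care about attainment and stationarity for the non-convex case $K>2$ goes slightly beyond what the paper records, which is fine since the lemma only asserts that a solution takes this form.
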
 

\subsubsection{Proof of Lemma~\ref{lemma:ridge}: Ridge Regression} 

\begin{proof}
By Lemma~\ref{lemma:reg_ls} for $K=1$ and $\lambda = \lambda_2 > 0$, the \HZ~regression solution is 
\begin{align}
	\hY^\hz_{NT}(0) &=\langle \by_N, (\bY_0' \bY_0 + \lambda_2 \bI)^{-1} \bY_0' \by_T\rangle. 
\end{align} 
Similarly, the \VT~regression solution is given by 
\begin{align}
	\hY^\vt_{NT}(0) &=\langle \by_T, (\bY_0 \bY_0' + \lambda_2 \bI)^{-1} \bY_0 \by_N \rangle. 
\end{align} 
Since $(\bY_0' \bY_0 + \lambda \bI)^{-1} \bY_0' = \bY_0' (\bY_0 \bY_0' + \lambda \bI)^{-1}$,
it follows that $\hY^\hz_{NT}(0) = \hY^\vt_{NT}(0)$. 
\end{proof}

\subsubsection{Proof of Lemma~\ref{lemma:lasso}: Lasso Regression} 

\begin{proof}
By Lemma~\ref{lemma:reg_ls} for $K=2$ and $\lambda = \lambda_1 > 0$, a \HZ~regression solution is
\begin{align} \label{eq:hz.lasso}
	\hY^\hz_{NT}(0) &=\langle \by_N, \bhalpha_1 \circ \bhalpha_2 \rangle,
\end{align} 
where 
\begin{align}
	\bhalpha_{1+k} &= \left( \bD(\bhalpha_{2-k}) \bY_0' \bY_0 \bD(\bhalpha_{2-k}) + \frac{\lambda_1}{2} \bI \right)^{-1} \bD(\bhalpha_{2-k}) \bY_0' \by_T
\end{align} 
for $k \in \{0,1\}$. 
Similarly, a \VT~regression solution is given by 
\begin{align} \label{eq:vt.lasso}
	\hY^\vt_{NT}(0) &=\langle \by_T, \bhbeta_1 \circ \bhbeta_2 \rangle,
\end{align} 
where 
\begin{align}
	\bhbeta_{1+k} &= \left( \bD(\bhbeta_{2-k}) \bY_0 \bY_0' \bD(\bhbeta_{2-k}) + \frac{\lambda_1}{2} \bI \right)^{-1} \bD(\bhbeta_{2-k}) \bY_0 \by_N 
\end{align} 
for $k \in \{0,1\}$.  
Leveraging \eqref{eq:hz.lasso} and \eqref{eq:vt.lasso}, we find that the \HZ~regression solution can be linear in $y$ and at least quadratic in $q$.
On the other hand, the \VT~regression solution can be linear in $q$ and at least quadratic in $y$.
Since the lasso solution is unique under the assumption the entries of $\bY_0$ are drawn from a continuous distribution, this implies that \HZ~and \VT~regressions do not yield matching solutions in general. 
\end{proof}

\subsubsection{Proof of Lemma~\ref{lemma:enet}: Elastic Net Regression} 

\begin{proof}
Consider \HZ~regression. 
We rewrite \eqref{eq:hz.rr.1} in a lasso formulation:  
\begin{align} 
	\bhalpha^* &= \argmin_{\balpha^*} \| \by_T^* - \bY_0^* \balpha^* \|_2^2 ~+~ \lambda^* \| \balpha^* \|_1, \label{eq:hz.enet.lasso} 
\end{align} 
where 
\begin{align}
	&q^* = \begin{pmatrix}
		\by_T \\ \bzero
	\end{pmatrix},
	\quad 
	\bY_0^* = \frac{1}{\sqrt{1 + \lambda_2}}  
	\begin{pmatrix}
		\bY_0 \\ \sqrt{\lambda_2} \bI
	\end{pmatrix},
	\quad
	\lambda^* = \frac{\lambda_1}{\sqrt{1 + \lambda_2}},
	\quad
	\balpha^* = (\sqrt{1 + \lambda_2}) \balpha. 
\end{align} 
%
%
We apply Lemma~\ref{lemma:reg_ls} to \eqref{eq:hz.enet.lasso} with $K=2$ and $\lambda = \lambda^* > 0$ to obtain 
\begin{align} \label{eq:hz.enet} 
	\hY^\hz_{NT}(0) &= \frac{\langle \by_N, \bhalpha^*_1 \circ \bhalpha^*_2 \rangle}{ \sqrt{1 + \lambda_2}},
\end{align}
where
\begin{align}
	\bhalpha^*_{1+k} &= \left( \frac{1}{\sqrt{1 + \lambda_2}} \bD(\bhalpha^*_{2-k}) (\bY_0' \bY_0 + \lambda_2 \bI) \bD(\bhalpha^*_{2-k}) + \frac{\lambda_1}{2} \bI \right)^{-1} \bD(\bhalpha^*_{2-k}) \bY_0' \by_T
\end{align} 
for $k \in \{0,1\}$. 
%
Similarly, for \VT~regression, we proceed as above to obtain 
\begin{align} \label{eq:vt.enet} 
	\hY^\vt_{NT}(0) &= \frac{\langle \by_T, \bhbeta^*_1 \circ \bhbeta^*_2 \rangle}{\sqrt{1 + \lambda_2}}, 
\end{align}
where
\begin{align} 
	\bhbeta^*_{1+k} &= \left( \frac{1}{\sqrt{1 + \lambda_2}} \bD(\bhbeta^*_{2-k}) (\bY_0 \bY_0' + \lambda_2 \bI) \bD(\bhbeta^*_{2-k}) + \frac{\lambda_1}{2} \bI \right)^{-1} \bD(\bhbeta^*_{2-k}) \bY_0 \by_N
\end{align} 
for $k \in \{0,1\}$.  
Leveraging \eqref{eq:hz.enet} and \eqref{eq:vt.enet}, we find that the \HZ~regression solution can be linear in $y$ and at least quadratic in $q$.
On the other hand, the \VT~regression solution can be linear in $q$ and at least quadratic in $y$.
Since the elastic net regression solution is unique, provided $\lambda_2 > 0$, this implies that \HZ~and \VT~regressions do not yield matching solutions in general. 
\end{proof}


\subsubsection{Proof of Lemma~\ref{lemma:cvx}: Simplex Regression} 

\begin{proof}
Consider \HZ~regression. 
We write the Lagrangian of \eqref{eq:hz.cvx.1} as 
\begin{align}
	\bhalpha &= \argmin_{\balpha} ~\| \by_T - \bY_0 \balpha \|_2^2 ~+~ \lambda \| \balpha \|_2^2 ~-~ (\btheta^\hz)' \balpha + \nu^\hz (\bone' \balpha - 1),
\end{align} 
where $\btheta^\hz \in \Rb^{T_0}$ and $\nu^\hz \in \Rb$. 
By the Karush-Kuhn-Tucker (KKT) conditions, optimality is achieved if the following are satisfied:  
\begin{align} 
	&\bhalpha \succeq \bzero, 
	\quad
	 \bone' \bhalpha = 1, 
	\\  
	& \bhtheta^\hz \succeq \bzero, 
	\\
	& \htheta^\hz_i \halpha_i = 0 \quad \text{for } i=1, \dots, T_0, 
	\\ 
	& \bhalpha = (\bY_0' \bY_0 + \lambda \bI)^{-1} \left( \bY_0' \by_T + \frac{1}{2} \bhtheta^\hz - \frac{\hnu^\hz}{2} \bone \right).  
\end{align} 
Therefore, given primal and dual feasible variables $(\bhalpha, \bhtheta^\hz, \hnu^\hz)$, we can write the final \HZ~prediction as  
\begin{align}
	\hY^\hz_{NT}(0) &= \hY^{\hz, \ols}_{NT}(0) 
	+ (1/2) \by'_N (\bY_0' \bY_0 + \lambda \bI)^{-1} (\bhtheta^\hz - \hnu^\hz \bone ),  
\end{align}
where $\hY^{\hz, \ols}_{NT}(0) = \by'_N (\bY_0' \bY_0 + \lambda \bI)^{-1} \bY_0' \by_T$ converges to the prediction corresponding to the OLS solution with minimum $\ell_2$-norm as $\lambda \rightarrow 0^+$. 
Similarly, for \VT~regression, the KKT conditions are
\begin{align} 
	&\bhbeta \succeq \bzero, 
	\quad
	 \bone' \bhbeta = 1, 
	\\  
	& \bhtheta^\vt \succeq \bzero, 
	\\
	& \htheta^\vt_i \hbeta_i = 0 \quad \text{for } i=1, \dots, N_0, 
	\\ 
	& \bhbeta = (\bY_0 \bY_0' + \lambda \bI)^{-1} \left(\bY_0 \by_N + \frac{1}{2} \bhtheta^\vt - \frac{\hnu^\vt}{2} \bone\right).  
\end{align} 
For primal and dual feasible variables $(\bhbeta, \bhtheta^\vt, \hnu^\vt)$, this yields 
\begin{align}
	\hY^\vt_{NT}(0) &= \hY^{\vt, \ols}_{NT}(0) 
	+ (1/2) \by'_T (\bY_0 \bY_0' + \lambda \bI)^{-1} (\bhtheta^\vt - \hnu^\vt \bone ),  
\end{align} 
where $\hY^{\vt, \ols}_{NT}(0) = \by'_T (\bY_0 \bY_0' + \lambda \bI)^{-1} \bY_0 \by_N$ converges to the prediction corresponding to the OLS solution with minimum $\ell_2$-norm as $\lambda \rightarrow 0^+$. 
Notably, as per Theorem~\ref{thm:equivalent}, $\hY^{\hz, \ols}_{NT}(0) = \hY^{\vt, \ols}_{NT}(0) = \hY^{\ols}_{NT}(0)$ for any $\lambda \ge 0$.
As a result,
\begin{align}
	\hY^\hz_{NT}(0) &= \hY^{\ols}_{NT}(0) 
	+ (1/2) \by'_N (\bY_0' \bY_0 + \lambda \bI)^{-1} (\bhtheta^\hz - \hnu^\hz \bone ) \label{eq:hz.cvx.ols}
	\\ 
	\hY^\vt_{NT}(0) &= \hY^{\ols}_{NT}(0) 
	+ (1/2) \by'_T (\bY_0 \bY_0' + \lambda \bI)^{-1} (\bhtheta^\vt - \hnu^\vt \bone ). \label{eq:vt.cvx.ols} 
\end{align} 
As seen from \eqref{eq:hz.cvx.ols} and \eqref{eq:vt.cvx.ols}, the leading terms in the \HZ~and \VT~simplex regression predictions are identical.
%
The remaining terms, however, can differ from one another. 
As an example, consider $N = T$ with  
\begin{align} \label{eq:example.1} 
	\bY_0 = \bI, 
	\quad 
	\by_N = \bzero,
	\quad
	\by_T = (1+ \lambda) (\bhtheta^\vt - \hnu^\vt \bone). 
\end{align} 
%
By construction, observe that
\begin{align}
	\bhbeta = \frac{1}{2(1+ \lambda)} (\bhtheta^\vt - \hnu^\vt \bone).  \label{eq:example.2} 
\end{align} 
Recall from the KKT conditions for \VT~regression that $\bhbeta \succeq \bzero$ and $\bone' \bhbeta = 1$.
Therefore, at least one entry of $(\bhtheta^\vt - \hnu^\vt \bone)$ must be strictly positive. 
This yields 
\begin{align}
	(1+ \lambda)^{-1} \by'_T (\bhtheta^\vt - \hnu^\vt \bone) 
	&= (\bhtheta^\vt - \hnu^\vt \bone)' (\bhtheta^\vt - \hnu^\vt \bone) > 0. \label{eq:example.3}
\end{align}
Plugging \eqref{eq:example.1} and \eqref{eq:example.3} into \eqref{eq:hz.cvx.ols} and \eqref{eq:vt.cvx.ols}, we obtain 
\begin{align}
	\hY^\hz_{NT}(0) &= 0  
	\quad \text{and} \quad 
	\hY^\vt_{NT}(0) > 0, 
\end{align} 
which concludes our proof. 
\end{proof}

\subsubsection{Proof of Lemma~\ref{lemma:reg_ls}: $\ell_p$-penalties} \label{sec:proof_reg_ls}
\begin{proof}
%
%
We recall the Hadamard product parametrization (HPP): for any vector $\bz$ and integer $K \ge 1$, 
\begin{align}
	\| \bz \|_p^p &=  \min_{\bz_1\circ \dots \circ \bz_K = \bz} \frac{1}{K} \sum_{k=1}^K \| \bz_k \|_2^2,
\end{align} 
where $\circ$ denotes the Hadamard (componentwise) product. 
We rewrite our subclass of $\ell_p$-penalties, i.e., \eqref{eq:hz.lp.1} and \eqref{eq:vt.lp.1}, as sums of $\ell_2$-penalties via the HPP technique: 
\begin{align}
	(\bhalpha_1, \dots, \bhalpha_K) &= \argmin_{\balpha_1, \dots, \balpha_K} \| \by_T- \bY_0 (\balpha_1 \circ \dots \circ \balpha_K) \|_2^2 ~+~ \frac{\lambda}{K} \sum_{k=1}^K \| \balpha_k \|_2^2 \label{eq:hz.hoff} 
	\\ 
	(\bhbeta_1, \dots, \bhbeta_K) &= \argmin_{\bbeta_1, \dots, \bbeta_K} \| \by_N - \bY_0' (\bbeta_1 \circ \dots \circ \bbeta_K) \|_2^2 ~+~ \frac{\lambda}{K} \sum_{k=1}^K \| \bbeta_k \|_2^2, \label{eq:vt.hoff} 
\end{align} 
where $\bhalpha = \bhalpha_1 \circ \dots \circ \bhalpha_K$ and $\bhbeta = \bhbeta_1 \circ \dots \circ \bhbeta_K$. 
Below, we leverage the results of \cite{hoff}, which provides an alternating ridge regression algorithm to solve for \eqref{eq:hz.hoff}--\eqref{eq:vt.hoff}. 

Consider \HZ~regression. 
Let us solve for $\balpha_k$ for $k \in [K]$ by fixing $\balpha_{k'}$ for $k' \neq k$. 
By the optimality conditions, 
\begin{align} \label{eq:hoff_opt.1} 
	\nabla_{\balpha_k} \left\{ (\balpha_1 \circ \dots \circ \balpha_K)' \bY_0' \bY_0  (\balpha_1 \circ \dots \circ \balpha_K)  - 2 (\balpha_1 \circ \dots \circ \balpha_K)' \bY_0' \by_T+ \frac{\lambda}{K} \balpha_k' \balpha_k \right\}= 0. 
\end{align} 
In order to solve for \eqref{eq:hoff_opt.1}, observe that
\begin{align}
	& (\balpha_1 \circ \dots \circ \balpha_K)' \bY_0' \bY_0  (\balpha_1 \circ \dots \circ \balpha_K) = \balpha_k' (\bY_0' \bY_0 \circ \balpha_{\sim k} \balpha_{\sim k}') \balpha_k
	\\ 
	& (\balpha_1 \circ \dots \circ \balpha_K)' \bY_0' \by_T= \balpha_k' (\balpha_{\sim k} \circ \bY_0' \by_T),
\end{align} 
where $\balpha_{\sim k} = \balpha_1 \circ \dots \circ \balpha_{k-1} \circ \balpha_{k+1} \circ \dots \circ \balpha_K$. 
This allows us to rewrite \eqref{eq:hoff_opt.1} as
\begin{align}
	\nabla_{\balpha_k} \left\{ \balpha_k' \left(\bY_0' \bY_0 \circ \balpha_{\sim k} \balpha_{\sim k}' + \frac{\lambda}{K} \bI \right) \balpha_k - 2 \balpha_k' (\balpha_{\sim k} \circ \bY_0' \by_T)  \right\} = 0. 
\end{align} 
This is quadratic in $\balpha_k$ for fixed $\balpha_{\sim k}$. 
Thus, the unique minimizer at convergence is 
\begin{align} \label{eq:hpp.2} 
	\bhalpha_k &= \left(\bY_0' \bY_0 \circ \bhalpha_{\sim k} \bhalpha_{\sim k}' + \frac{\lambda}{K} \bI \right)^{-1} (\bhalpha_{\sim k} \circ \bY_0' \by_T),
\end{align} 
where $\bhalpha_{\sim k} = \bhalpha_1 \circ \dots \circ \bhalpha_{k-1} \circ \bhalpha_{k+1} \circ \dots \circ \bhalpha_K$. 
Leveraging properties of the Hadamard product noted in \cite{styan}, we rewrite 
\begin{align}
	\bY_0' \bY_0 \circ \bhalpha_{\sim k} \bhalpha_{\sim k}' &= \bD(\bhalpha_{\sim k})  \bY_0' \bY_0 \bD(\bhalpha_{\sim k})
	\\ 
	\bY_0' \by_T\circ \bhalpha_{\sim k} &= \bD(\bhalpha_{\sim k}) \bY_0' \by_T,
\end{align} 
where $\bD(\bhalpha_{\sim k})$ is the diagonal matrix formed from $\bhalpha_{\sim k}$. 
Leveraging these equalities, we simplify \eqref{eq:hpp.2} as 
\begin{align}
	\bhalpha_k &= 
	\Big( \bD(\bhalpha_{\sim k})  \bY_0' \bY_0 \bD(\bhalpha_{\sim k}) + \frac{\lambda}{K} \bI \Big)^{-1} \bD(\bhalpha_{\sim k}) \bY_0' \by_T. 
\end{align} 
%
%
We now turn to \VT~regression. 
Following the arguments above, for every $k \in [K]$,
\begin{align}
	\bhbeta_k &= 
	\Big( \bD(\bhbeta_{\sim k})  \bY_0 \bY_0' \bD(\bhbeta_{\sim k}) + \frac{\lambda}{K} \bI \Big)^{-1} \bD(\bhbeta_{\sim k}) \bY_0 \by_N, 
\end{align} 
where $\bhbeta_{\sim k} = \bhbeta_1 \circ \dots \circ \bhbeta_{k-1} \circ \bhbeta_{k+1} \circ \dots \circ \bhbeta_K$ 
and $\bD(\bhbeta_{\sim k})$ is the diagonal matrix formed from $\bhbeta_{\sim k}$. 
This completes the proof. 
\end{proof}

	\section{Proofs for Inference} \label{sec:proofs_inference} 

\subsection{Proof of Theorem~\ref{thm:inference}} \label{sec:proofs_inference} 
To establish Theorem~\ref{thm:inference}, we first state a few useful results.
\begin{lemma} [Theorem 2.7.1 of \cite{lehmann}] \label{thm:lehmann}  
Let $X_i$ for $i=1, \dots, n$ be independently distributed with means $\Ex[X_i] = \zeta_i$ and variances $\sigma^2_i$, and with finite third moments. 
Let $\bar{X} = (1/n) \sum_{i=1}^n X_i$.
Then 
\begin{align}
	\frac{\bar{X} - \Ex[\bar{X}]}{\Var(\bar{X})^{1/2}} \xrightarrow{d} \mathcal{N}(0,1), 
\end{align} 
provided
\begin{align}
	\left(  \sum_{i=1}^n \Ex \left[ | X_i - \zeta_i |^3 \right] \right)^2 = o\left( \Big( \sum_{i=1}^n \sigma_i^2 \Big)^3 \right). 
\end{align} 
\end{lemma}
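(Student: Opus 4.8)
The plan is to prove this as the Lyapunov central limit theorem via the method of characteristic functions combined with L\'evy's continuity theorem. First I would center and standardize: set $s_n^2 = \sum_{i=1}^n \sigma_i^2 = \Var\!\big(\sum_i X_i\big)$ and define the normalized summands $Y_i = (X_i - \zeta_i)/s_n$, so that $\Ex[Y_i]=0$, $\sum_i \Ex[Y_i^2] = 1$, and the standardized statistic in the lemma equals $Z_n = \sum_{i=1}^n Y_i$. Indeed, $\bar{X} - \Ex[\bar X] = (1/n)\sum_i (X_i-\zeta_i)$ while $\Var(\bar X)^{1/2} = s_n/n$, so their ratio is exactly $Z_n$. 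By independence its characteristic function factorizes as $\phi_{Z_n}(t) = \prod_{i=1}^n \phi_{Y_i}(t)$, and the goal reduces to showing $\phi_{Z_n}(t) \to e^{-t^2/2}$ pointwise in $t$, which by L\'evy's continuity theorem yields $Z_n \xrightarrow{d} \mathcal{N}(0,1)$.

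Next I would control each factor through a third-order Taylor bound. Using the elementary inequality $|e^{ix} - (1 + ix - x^2/2)| \le |x|^3/6$ with $x = tY_i$ and taking expectations (legitimate since third moments are finite), I obtain for each $i$
\begin{align}
\Big| \phi_{Y_i}(t) - \Big(1 - \tfrac{t^2}{2}\tfrac{\sigma_i^2}{s_n^2}\Big) \Big| \le \frac{|t|^3}{6}\,\Ex[|Y_i|^3] = \frac{|t|^3}{6 s_n^3}\,\Ex[|X_i - \zeta_i|^3].
\end{align}
Summing over $i$, the aggregate remainder is bounded by $\tfrac{|t|^3}{6 s_n^3}\sum_i \Ex[|X_i-\zeta_i|^3]$, which tends to $0$ precisely because the hypothesis $(\sum_i \Ex[|X_i-\zeta_i|^3])^2 = o(s_n^6)$ is equivalent (monotonicity of the square root) to the Lyapunov condition $s_n^{-3}\sum_i \Ex[|X_i-\zeta_i|^3]\to 0$. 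I would also record the negligibility of individual variances: by the power-mean inequality $\sigma_i^3 = (\Ex[(X_i-\zeta_i)^2])^{3/2} \le \Ex[|X_i-\zeta_i|^3]$, hence $\max_i \sigma_i^3 \le \sum_i \Ex[|X_i-\zeta_i|^3] = o(s_n^3)$, giving $\max_i \sigma_i^2/s_n^2 \to 0$.

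With these two consequences in hand, I would pass from the product to the exponential. Writing $a_i = \phi_{Y_i}(t)$, the identity $\sum_i (a_i - 1) = -\tfrac{t^2}{2}\sum_i \tfrac{\sigma_i^2}{s_n^2} + o(1) = -\tfrac{t^2}{2} + o(1)$ together with $\max_i |a_i - 1| \le \tfrac{t^2}{2}\max_i \tfrac{\sigma_i^2}{s_n^2} + o(1) \to 0$ feeds into the standard lemma that $\prod_i a_i \to e^c$ whenever $\sum_i(a_i-1)\to c$ and $\max_i|a_i-1|\to 0$ for complex $a_i$. This yields $\phi_{Z_n}(t) \to e^{-t^2/2}$, the characteristic function of $\mathcal{N}(0,1)$, and L\'evy's continuity theorem completes the argument.

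The main obstacle is not any single estimate but the bookkeeping that ties the stated hypothesis to the two ingredients the product lemma requires: translating $(\sum_i \Ex[|X_i-\zeta_i|^3])^2 = o((\sum_i \sigma_i^2)^3)$ into both the vanishing of the aggregate Taylor remainder and the uniform smallness $\max_i \sigma_i^2/s_n^2 \to 0$. Once those are isolated the remaining steps are routine. Alternatively, one could verify that Lyapunov's condition implies Lindeberg's condition and cite the Lindeberg--Feller theorem, but the direct characteristic-function route above is self-contained and matches the attribution to \cite{lehmann}.
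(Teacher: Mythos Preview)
Your argument is correct and is the standard characteristic-function proof of the Lyapunov central limit theorem: reduce to the standardized sum, Taylor-expand each factor of the characteristic function to third order, use the Lyapunov ratio to kill both the aggregate remainder and $\max_i \sigma_i^2/s_n^2$, and apply the product-to-exponential lemma together with L\'evy's continuity theorem. There is no gap.

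However, there is nothing to compare against: the paper does not prove this lemma. It is stated as a citation of Theorem~2.7.1 in \cite{lehmann} and used as a black box in the proof of Theorem~\ref{thm:inference}. So your proposal goes well beyond what the paper does---you supply a complete self-contained proof where the paper simply imports the result from a textbook.
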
 

\begin{lemma} \label{lemma:trace}
Consider a random vector $\bx$ and random matrix $\bA$. 
Let $\Ex[\bx | \bA] = \bzero$ and $\Cov(\bx | \bA) = \bSigma$.  
Then $\Ex[ \bx' \bA \bx | \bA] = \tr(\bA \bSigma)$. 
\end{lemma}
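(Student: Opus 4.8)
The plan is to exploit the elementary identity that a scalar equals its own trace, combined with the cyclic invariance of the trace operator. Since $\bx'\bA\bx$ is a scalar, I would first write $\bx'\bA\bx = \tr(\bx'\bA\bx) = \tr(\bA\bx\bx')$, where the second equality is the cyclic property of the trace.

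Next, I would take the conditional expectation given $\bA$ and push it inside the trace. This is legitimate because the trace is a finite linear combination of the matrix entries and conditional expectation is linear, so $\Ex[\bx'\bA\bx \mid \bA] = \tr\big(\Ex[\bA\bx\bx' \mid \bA]\big)$. Since $\bA$ is measurable with respect to the $\sigma$-algebra being conditioned on, it factors out of the conditional expectation, yielding $\tr\big(\bA\, \Ex[\bx\bx' \mid \bA]\big)$.

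Finally, I would identify $\Ex[\bx\bx' \mid \bA]$ with $\bSigma$. By definition, $\Cov(\bx \mid \bA) = \Ex[\bx\bx' \mid \bA] - \Ex[\bx \mid \bA]\,\Ex[\bx \mid \bA]'$, and since $\Ex[\bx \mid \bA] = \bzero$ by hypothesis, this collapses to $\Ex[\bx\bx' \mid \bA] = \Cov(\bx \mid \bA) = \bSigma$. Substituting gives $\Ex[\bx'\bA\bx \mid \bA] = \tr(\bA\bSigma)$, as claimed.

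There is essentially no obstacle here; the only point requiring a modicum of care is the interchange of expectation and trace together with pulling $\bA$ outside the conditional expectation, both of which are justified by linearity and measurability (and finiteness of the relevant second moments, implicit in the assumption that $\bSigma$ exists).
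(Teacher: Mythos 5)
Your argument is correct and is exactly the standard trace-trick derivation ($\bx'\bA\bx = \tr(\bA\bx\bx')$, pull the conditional expectation inside the trace, use measurability of $\bA$, and identify $\Ex[\bx\bx'\mid\bA]=\bSigma$ via the zero conditional mean), which is the proof the paper implicitly relies on since it states Lemma~\ref{lemma:trace} as a known fact without a separate proof. No gaps; the measurability and second-moment caveats you note are the only points of care and you handle them appropriately.
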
 

\begin{proof}
%
%
(i) [\HZ~model] Let Assumptions~\ref{assump:hz.outcome}--\ref{assump:hz.errors} hold. 
By \eqref{eq:lyapunov}, Lemma~\ref{thm:lehmann} yields   
\begin{align}
	\frac{ \hY_{NT}(0) - \Ex[\hY_{NT}(0) |\by_N, \bY_0] }{\Var(\hY_{NT}(0) | \by_N, \bY_0)^{1/2}} \xrightarrow{d} \mathcal{N}(0,1). 
\end{align} 
To evaluate $\Ex[\hY_{NT}(0) | \by_N, \bY_0]$, we first observe that 
\begin{align}
	\Ex[\hY_{NT}(0) | \by_N, \bY_0] 
	&= \Ex[ \langle \by_N, \bhalpha \rangle | \by_N, \bY_0] 
	\\&= \Ex[ \langle \by_N, \bY_0^\dagger \by_T \rangle | \by_N, \bY_0] 
	\\&= \by'_N \bY_0^\dagger \Ex[ \by_T | \by_N, \bY_0]
	\\&= \by'_N \bY_0^\dagger \bY_0 \balpha^*
	\\&= \by'_N \bH^v \balpha^*. \label{eq:hz.exp.0} 
\end{align}
%
%
Moving to the variance term, we note that 
\begin{align}
	\Var(\hY_{NT}(0) |\by_N, \bY_0) &= \by'_N \Cov(\bhalpha | \by_N, \bY_0) \by_N.
	\label{eq:inf.2.0} 
\end{align} 
Towards evaluating the above, we note that 
\begin{align}
	\Cov(\bhalpha | \by_N, \bY_0) 
	&= \Cov( \bY_0^\dagger \by_T | \by_N, \bY_0)
	\\&=  \bY_0^\dagger \Cov(\by_T | \by_N, \bY_0) (\bY'_0)^\dagger
	\\&= \bY_0^\dagger \Cov(\bvarepsilon_T | \by_N, \bY_0) (\bY'_0)^\dagger
	\\&= \bY_0^\dagger \bSigma^\hz_T (\bY'_0)^\dagger. \label{eq:inf.2.1} 
\end{align} 
Plugging \eqref{eq:inf.2.1} into \eqref{eq:inf.2.0}, we obtain 
\begin{align}
	\Var(\hY_{NT}(0) |\by_N, \bY_0) = \by_N' \bY_0^\dagger \bSigma^\hz_T (\bY'_0)^\dagger \by_N 
	= \bhbeta' \bSigma^\hz_T \bhbeta, 
\label{eq:inf.2}
\end{align} 
where we recall that $\bhbeta = (\bY_0')^\dagger \by_N$. 
Putting it all together, we conclude 
\begin{align}
	\frac{ \hY_{NT}(0) -  \langle \by_N, \bH^v \balpha^* \rangle }{(\bhbeta' \bSigma^\hz_T \bhbeta)^{1/2}} \xrightarrow{d} \mathcal{N}(0,1). 
\end{align} 
%

\smallskip 
(ii) [\VT~model] Let Assumptions~\ref{assump:vt.outcome}--\ref{assump:vt.errors} hold. 
Following the arguments above, we have 
\begin{align}
	\frac{ \hY_{NT}(0) - \langle \by_T, \bH^u \bbeta^* \rangle }{(\bhalpha' \bSigma^\vt_N \bhalpha)^{1/2}} \xrightarrow{d} \mathcal{N}(0,1). 
\end{align} 
%

\smallskip 
(iii) [Mixed model] Let Assumptions~\ref{assump:mixed.outcome}--\ref{assump:mixed.errors} hold. 
We will find it useful to write  
\begin{align}
	\hY_{NT}(0) &= \by'_N \bY_0^\dagger \by_T 
	= \sum_{i \le N_0} \sum_{t \le T_0} (\bY_0^\dagger)_{it} Y_{iT} Y_{Nt}. \label{eq:genvar.1} 
\end{align} 
By Assumption~\ref{assump:mixed.errors}, \eqref{eq:genvar.1} is a sum of independent random variables with 
\begin{align}
	\Ex[Y_{iT} Y_{Nt} | \bY_0] &= \Ex[Y_{iT} | \bY_0] \Ex[Y_{Nt} | \bY_0]
	\\ 
	\Var(Y_{iT} Y_{Nt} | \bY_0) &= \Ex[Y_{iT} | \bY_0]^2 \sigma^2_{Nt} 
	+ \Ex[Y_{Nt} | \bY_0]^2 \sigma^2_{iT} + \sigma^2_{iT} \sigma^2_{Nt}.  
\end{align} 
Lemma~\ref{thm:lehmann} then establishes that   
\begin{align}
	\frac{ \hY_{NT}(0) - \Ex[ \hY_{NT}(0) | \bY_0]}{ \Var(\hY_{NT}(0) | \bY_0)^{1/2}} \xrightarrow{d} \mathcal{N}(0,1). 
\end{align} 
Our aim is to evaluate $\Ex[\hY_{NT}(0) | \bY_0]$ and $\Var(\hY_{NT}(0) | \bY_0)$. 
Towards the former, we use Assumptions~\ref{assump:mixed.outcome}--\ref{assump:mixed.errors} with the law of total expectation to obtain  
\begin{align}
	\Ex[\hY_{NT}(0) | \bY_0] 
	&= \Ex \left[ \Ex[ \langle \by_N, \bY_0^\dagger \by_T \rangle | \bvarepsilon_N, \bY_0] | \bY_0 \right]
	\\ &= \Ex \left[ \Ex[ \by_N' \bY_0^\dagger (\bY_0 \balpha^* + \bvarepsilon_T) | \bvarepsilon_N,  \bY_0] | \bY_0 \right] 
	\\ &= \Ex \left[ (\bY'_0 \bbeta^* + \bvarepsilon_N)' \bY_0^\dagger \bY_0 \balpha^* | \bY_0 \right] 
	\\ &= \langle \bbeta^*, \bY_0 \balpha^* \rangle. 
	\label{eq:inf.4} 
\end{align}  
Note that we have used the fact that $\by_N$ is deterministic given $(\bvarepsilon_N, \bY_0)$. 
Similarly, by the law of total variance, 
\begin{align}
	\Var(\hY_{NT}(0) | \bY_0) &= \Ex[ \Var(\hY_{NT}(0) | \bvarepsilon_N, \bY_0) | \bY_0 ] + \Var( \Ex[\hY_{NT}(0) | \bvarepsilon_N, \bY_0] | \bY_0 ).  \label{eq:total_var} 
\end{align} 
Following the derivation of \eqref{eq:inf.2}, we have  
\begin{align}
	&\Ex[ \Var(\hY_{NT}(0) | \bvarepsilon_N, \bY_0) | \bY_0] 
	\\ &\qquad \qquad = \Ex[ \by'_N \bY_0^\dagger \bSigma^\mix_T (\bY'_0)^\dagger \by_N | \bY_0] 
	\\ &\qquad \qquad =  (\bY'_0 \bbeta^*)' \bA (\bY'_0 \bbeta^*) + \Ex[ \bvarepsilon'_N \bA \bvarepsilon_N | \bY_0] 
	+ 2 \Ex[ \bvarepsilon'_N  \bY'_0 \bbeta^* | \bY_0], \label{eq:inf.5} 
\end{align} 
where $\bA = \bY^\dagger_0 \bSigma^\mix_T (\bY'_0)^\dagger$. 
Notice that Assumption~\ref{assump:mixed.errors} gives $\Ex[\bvarepsilon'_N  \bY'_0 \bbeta^* | \bY_0] = 0$. 
Since $\bA$ is deterministic given $\bY_0$, Lemma~\ref{lemma:trace} yields   
\begin{align}
	\Ex[ \bvarepsilon'_N \bA \bvarepsilon_N | \bY_0] &= \tr( \bA \bSigma^\mix_N). \label{eq:tr.2} 
\end{align} 
Following the arguments that led to the derivation of \eqref{eq:hz.exp.0}, we have 
\begin{align}
	\Var( \Ex[\hY_{NT}(0) | \bvarepsilon_N, \bY_0] | \bY_0 ) &= \Var( \by'_N \bH^v \balpha^* | \bY_0) 
	= (\bH^v \balpha^*)' \bSigma^\mix_N (\bH^v \balpha^*). \label{eq:inf.6}
\end{align} 
Plugging \eqref{eq:inf.5}, \eqref{eq:tr.2}, and \eqref{eq:inf.6} into \eqref{eq:total_var}, we arrive at  
\begin{align} 
	&\Var(\hY_{NT}(0) | \bY_0) 
	\\
	&\qquad = (\bH^v \balpha^*)' \bSigma^\mix_N (\bH^v \balpha^*)  
	+ (\bH^u \bbeta^*)' \bSigma^\mix_T (\bH^u \bbeta^*)
	+ \tr( \bY^\dagger_0 \bSigma^\mix_T (\bY'_0)^\dagger \bSigma^\mix_N). 
\end{align} 
This completes the proof. 
\end{proof} 

\subsection{Proofs for Model-Based Confidence Intervals} 
We first state a useful lemma to prove Lemmas~\ref{lemma:inference.1}--\ref{lemma:inference.3}. 
\begin{lemma} \label{lemma:mixed.vest} 
\emph{[Mixed model]} Let Assumptions~\ref{assump:mixed.outcome}--\ref{assump:mixed.errors} hold. 
Then, 
\begin{align}
	\Ex[ \hv^\emix_0 | \bY_0] 
	&=  
	(\bH^u \bbeta^*)' \Ex[ \bhSigma_T | \bY_0] (\bH^u \bbeta^*) 
	+ \tr( \bY_0^\dagger \Ex[ \bhSigma_T | \bY_0] (\bY'_0)^\dagger \bSigma^\emix_N)
	\\ &\quad 
	+ (\bH^v \balpha^*)' \Ex[ \bhSigma_N | \bY_0] (\bH^v \balpha^* )
	+ \tr( \bY_0^\dagger \bSigma^\emix_T  (\bY'_0)^\dagger \Ex[\bhSigma_N | \bY_0])
	\\ &\quad - \tr( \bY_0^\dagger \Ex[ \bhSigma_T | \bY_0] (\bY'_0)^\dagger \Ex[\bhSigma_N | \bY_0]). 
\end{align} 
\end{lemma}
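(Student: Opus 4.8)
The plan is to start from the definition $\hv^\emix_0 = \hv_0^{\ehz} + \hv_0^{\evt} - \tr(\bY_0^\dagger \bhSigma_T (\bY'_0)^\dagger \bhSigma_N)$ given in \eqref{eq:final_var_est}, where $\hv_0^{\ehz} = \bhbeta'\bhSigma_T\bhbeta$, $\hv_0^{\evt} = \bhalpha'\bhSigma_N\bhalpha$, and $\bhalpha = \bY_0^\dagger\by_T$, $\bhbeta = (\bY'_0)^\dagger\by_N$. Taking $\Ex[\,\cdot\mid\bY_0]$ and using linearity reduces the claim to evaluating $\Ex[\hv_0^{\ehz}\mid\bY_0]$, $\Ex[\hv_0^{\evt}\mid\bY_0]$, and $\Ex[\tr(\bY_0^\dagger\bhSigma_T(\bY'_0)^\dagger\bhSigma_N)\mid\bY_0]$. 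The single structural fact that drives all three is that, under Assumption~\ref{assump:mixed.errors}, $\bvarepsilon_T$ and $\bvarepsilon_N$ are independent conditional on $\bY_0$; since $\bh$-errors decompose as $\bH^u_\perp\by_T = \bH^u_\perp\bvarepsilon_T$ and $\bH^v_\perp\by_N = \bH^v_\perp\bvarepsilon_N$, the pair $(\bhSigma_T,\bhalpha)$ is a measurable function of $(\bvarepsilon_T,\bY_0)$ while $(\bhSigma_N,\bhbeta)$ is a measurable function of $(\bvarepsilon_N,\bY_0)$, so conditional on $\bY_0$ these two groups are independent. In particular $\Ex[\bhSigma_T\mid\bvarepsilon_N,\bY_0]=\Ex[\bhSigma_T\mid\bY_0]$ and symmetrically for $\bhSigma_N$.

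For the $\hv_0^{\ehz}$ term I would first substitute $\by_N = \bY'_0\bbeta^* + \bvarepsilon_N$ and use $(\bY'_0)^\dagger\bY'_0 = \bH^u$ to write $\bhbeta = \bH^u\bbeta^* + (\bY'_0)^\dagger\bvarepsilon_N$. Conditioning on $(\bvarepsilon_N,\bY_0)$ first, $\bhbeta$ is fixed, so $\Ex[\bhbeta'\bhSigma_T\bhbeta\mid\bY_0] = \Ex[\bhbeta'\,\Ex[\bhSigma_T\mid\bY_0]\,\bhbeta\mid\bY_0]$ by the tower property and conditional independence. Expanding this quadratic form in $\bvarepsilon_N$ and using $\Ex[\bvarepsilon_N\mid\bY_0]=\bzero$ kills the two cross terms and leaves $(\bH^u\bbeta^*)'\Ex[\bhSigma_T\mid\bY_0](\bH^u\bbeta^*)$ plus $\Ex[\bvarepsilon_N'\bC\bvarepsilon_N\mid\bY_0]$ with $\bC = ((\bY'_0)^\dagger)'\Ex[\bhSigma_T\mid\bY_0](\bY'_0)^\dagger = \bY_0^\dagger\Ex[\bhSigma_T\mid\bY_0](\bY'_0)^\dagger$, using $(\bY'_0)^\dagger = (\bY_0^\dagger)'$. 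Applying Lemma~\ref{lemma:trace} with $\Cov(\bvarepsilon_N\mid\bY_0) = \bSigma^\emix_N$ turns the residual term into $\tr(\bY_0^\dagger\Ex[\bhSigma_T\mid\bY_0](\bY'_0)^\dagger\bSigma^\emix_N)$, which gives the first two terms of the identity. The mirror-image computation—writing $\bhalpha = \bH^v\balpha^* + \bY_0^\dagger\bvarepsilon_T$ via $\bY_0^\dagger\bY_0 = \bH^v$, using $\Cov(\bvarepsilon_T\mid\bY_0)=\bSigma^\emix_T$, and reshuffling with the cyclic property of the trace—yields $\Ex[\hv_0^{\evt}\mid\bY_0] = (\bH^v\balpha^*)'\Ex[\bhSigma_N\mid\bY_0](\bH^v\balpha^*) + \tr(\bY_0^\dagger\bSigma^\emix_T(\bY'_0)^\dagger\Ex[\bhSigma_N\mid\bY_0])$, the third and fourth terms.

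Finally, for $\Ex[\tr(\bY_0^\dagger\bhSigma_T(\bY'_0)^\dagger\bhSigma_N)\mid\bY_0]$ I would expand the trace as a bilinear form in the entries of $\bhSigma_T$ and $\bhSigma_N$ and invoke the conditional independence entrywise, so that $\Ex[(\bhSigma_T)_{ij}(\bhSigma_N)_{kl}\mid\bY_0] = \Ex[(\bhSigma_T)_{ij}\mid\bY_0]\Ex[(\bhSigma_N)_{kl}\mid\bY_0]$ and the expectation factorizes into $\tr(\bY_0^\dagger\Ex[\bhSigma_T\mid\bY_0](\bY'_0)^\dagger\Ex[\bhSigma_N\mid\bY_0])$; after the leading minus sign this is exactly the last term. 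Collecting the three contributions reproduces the stated formula. The computation itself is routine; the only place that needs real care—the main obstacle—is justifying the two factorizations from Assumption~\ref{assump:mixed.errors} (that conditioning additionally on $\bvarepsilon_N$ does not change $\Ex[\bhSigma_T\mid\bY_0]$, and the entrywise independence of $\bhSigma_T$ and $\bhSigma_N$), together with keeping the pseudoinverse transposes $(\bY'_0)^\dagger = (\bY_0^\dagger)'$ and the cyclic reshuffling of the trace aligned so that each piece matches the target expression rather than a transposed or permuted variant of it.
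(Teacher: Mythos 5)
Your proposal is correct and follows essentially the same route as the paper's proof: decompose by linearity, condition on $(\bvarepsilon_N,\bY_0)$ (resp.\ $(\bvarepsilon_T,\bY_0)$) via the tower property so that $\Ex[\bhSigma_T\mid\bY_0]$ (resp.\ $\Ex[\bhSigma_N\mid\bY_0]$) can be pulled out, substitute $\by_N=\bY'_0\bbeta^*+\bvarepsilon_N$ and apply Lemma~\ref{lemma:trace}, and factorize the trace term using the conditional independence of $\bvarepsilon_T$ and $\bvarepsilon_N$. Your explicit justification that $(\bhSigma_T,\bhalpha)$ and $(\bhSigma_N,\bhbeta)$ are measurable in $(\bvarepsilon_T,\bY_0)$ and $(\bvarepsilon_N,\bY_0)$ respectively is a slightly more careful rendering of what the paper leaves implicit, but the argument is the same.
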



\subsubsection{Proof of Lemma~\ref{lemma:inference.1}} 

\begin{proof}
(i) [\HZ~model] Let Assumptions~\ref{assump:hz.outcome}--\ref{assump:hz.errors} hold. 
Taking note that $\bH^u_\perp \bY_0 = \bzero$, 
\begin{align} 
	\| \bH^u_\perp \by_T \|_2^2 
	&= \by'_T \bH^u_\perp \by_T 
	\\ &= (\bY_0 \balpha + \bvarepsilon_T)' \bH^u_\perp (\bY_0 \balpha + \bvarepsilon_T) 
	\\ &= \bvarepsilon'_T \bH^u_\perp \bvarepsilon_T. 
\end{align} 
Applying Lemma~\ref{lemma:trace} then gives 
\begin{align}
	\Ex[\bvarepsilon'_T \bH^u_\perp \bvarepsilon_T | \by_N, \bY_0]
	= \tr(\bH^u_\perp) (\sigma^\hz_T)^2
	= (N_0 - R) (\sigma^\hz_T)^2, \label{eq:hz.trace_var} 
\end{align} 
where the final equality follows because the trace of a projection matrix equals its rank. 
Taken altogether, we have $\Ex[\bhSigma_T^\homo | \by_N, \bY_0] = \bSigma^\hz_T$.  
Therefore, 
\begin{align}
	\Ex[ \hv_0^{\hz, \homo} | \by_N, \bY_0] 
	&= \bhbeta' \Ex[ \bhSigma^\homo_T | \by_N, \bY_0] \bhbeta 
	= v_0^\hz. 
\end{align} 

(ii) [\VT~model] Let Assumptions~\ref{assump:vt.outcome}--\ref{assump:vt.errors} hold. 
Following the arguments above, we conclude that $\Ex[\bhSigma_N^\homo | \by_T, \bY_0] = \bSigma^\vt_N$ and $\Ex[\hv_0^{\vt, \homo} | \by_T, \bY_0] = v_0^\vt$. 

\medskip 
(iii) [Mixed model] Let Assumptions~\ref{assump:mixed.outcome}--\ref{assump:mixed.errors} hold. 
Following the arguments that led to \eqref{eq:hz.trace_var}, we obtain 
$\Ex[\bhSigma^\homo_T | \bY_0] = \bSigma^\mix_T$ 
and
$\Ex[\bhSigma^\homo_N | \bY_0] = \bSigma^\mix_N$.  
Applying Lemma~\ref{lemma:mixed.vest} then gives $\Ex[ \hv^{\mix, \homo}_0 | \bY_0] = v^\mix_0$. 
The proof is complete. 
\end{proof} 

\subsection{Proof of Lemma~\ref{lemma:inference.2}} 
\label{sec:proof_jackknife}

\begin{proof}
Before we establish the biases of $(\bhSigma^\jack_T, \bhSigma^\jack_N)$, we first justify their forms. 
As noted in Section~\ref{sec:ci}, jackknife is a popular approach to estimate the covariances of $(\bhalpha, \bhbeta)$. 
Below, we follow the standard techniques to derive the jackknife estimate of these objects, which will then be used to derive $(\bhSigma^\jack_T, \bhSigma^\jack_N)$.
Without loss of generality, we begin with $\bhalpha$. 
Notably, while standard derivations consider $\bY_0$ with full column rank, we consider a general matrix $\bY_0$ that may be rank deficient. 
This difference is subtle so the following proof is by no means novel. 
We provide it simply for completeness. 

To describe the jackknife, we define $\bhalpha_{\sim i}$ as the minimum $\ell_2$-norm solution to \eqref{eq:hz.rr.1}, where $\lambda_1 = \lambda_2 = 0$, without the $i$th observation, i.e., 
\begin{align}
	\bhalpha_{\sim i} &= (\bY'_{0, \sim i} \bY_{0, \sim i})^\dagger \bY'_{0, \sim i} \by_{T, \sim i}, \label{eq:pseudo} 
\end{align} 
where $\bY_{0, \sim i}$ and $\by_{T, \sim i}$ correspond to $\bY_0$ and $\by_T$ without the $i$th observation. 
We define the pseudo-estimator as $\btalpha_i = T_0 \bhalpha - (T_0-1) \bhalpha_{\sim i}$.
With these quantities defined, we write the jackknife variance estimator as 
\begin{align}
	\bhV^\jack = \frac{1}{(T_0-1)^2} \sum_{i \le N_0} (\btalpha_i - \bhalpha) (\btalpha_i - \bhalpha)'. \label{eq:jack_var} 
\end{align} 
To evaluate this quantity, we will rewrite $\bhalpha_{\sim i}$ in a more convenient form. 
In particular, 
\begin{align}
	&\bY'_{0, \sim i} \bY_{0, \sim i} = \bY'_0 \bY'_0 - \by_i \by'_i 
	\\ 
	&\bY'_{0, \sim i} \by_{T, \sim i} = \bY'_0 \by_T - \by_i Y_{iT}, 
\end{align} 
where $\by_i = [Y_{it}: t \le T_0]$ is the $i$th row of $\bY_0$. 
We do not assume that $\bY'_0 \bY_0$ is nonsingular. 
As such, we use a generalized form of the Sherman-Morrison formula \citep{cline, meyer} to obtain 
\begin{align}
	(\bY'_{0, \sim i} \bY_{0, \sim i})^\dagger = 
	(\bY'_0 \bY_0)^\dagger + (1 - H^u_{ii})^{-1} (\bY'_0 \bY_0)^\dagger \by_i \by'_i (\bY'_0 \bY_0)^\dagger. \label{eq:sherman_morrison} 
\end{align} 
Recall  $\bhalpha = (\bY'_0 \bY_0)^\dagger \bY'_0 \by_T$ and note $Y_{iT} - \by'_i \bhalpha$ is the $i$th element of $\bhvarepsilon_T = \bH^u_\perp \by_T$. 
Using these facts, we plug \eqref{eq:sherman_morrison} into \eqref{eq:pseudo} to yield  
\begin{align}
	\bhalpha_{\sim i} 
	&= \left[ (\bY'_0 \bY_0)^\dagger + (1 - H^u_{ii})^{-1} (\bY'_0 \bY_0)^\dagger \by_i \by'_i (\bY'_0 \bY_0)^\dagger \right] ( \bY'_0 \by_T - \by_i Y_{iT}) 
	\\ &= \bhalpha 
	- (\bY'_0 \bY_0)^\dagger \by_i Y_{iT} 
	+ (1 - H^u_{ii})^{-1} (\bY'_0 \bY_0)^\dagger \by_i \by'_i \bhalpha 
	- H^u_{ii} (1 - H^u_{ii})^{-1} (\bY'_0 \bY_0)^\dagger \by_i Y_{iT} 
	\\ &= \bhalpha - (1-H^u_{ii})^{-1} (\bY'_0 \bY_0)^\dagger \by_i \hvarepsilon_{iT}. \label{eq:jack.1} 
\end{align}  
Inserting \eqref{eq:jack.1} into our pseudo-estimate, we have 
\begin{align}
	\btalpha_i &= T_0 \bhalpha - (T_0-1) \left( \bhalpha - (1-H^u_{ii})^{-1} (\bY'_0 \bY_0)^\dagger \by_i \hvarepsilon_{iT} \right) 
	\\ &= \bhalpha + (T_0-1)(1-H^u_{ii})^{-1} (\bY'_0 \bY_0)^\dagger \by_i \hvarepsilon_{iT}. \label{eq:jack.2} 
\end{align} 
Inserting \eqref{eq:jack.2} into \eqref{eq:jack_var}, we have 
\begin{align}
	\bhV^\jack &= (\bY'_0 \bY_0)^\dagger  \left( \sum_{i \le N_0} \frac{\hvarepsilon_{iT}^2}{(1-H^u_{ii})^2} \by_i \by'_i \right) (\bY'_0 \bY_0)^\dagger 
	\\ &=  (\bY'_0 \bY_0)^\dagger \bY'_0 \bOmega \bY_0 (\bY'_0 \bY_0)^\dagger, 
\end{align} 	
where $\bOmega$ is a diagonal matrix with $\Omega_{ii} = \hvarepsilon_{iT}^2 (1-H^u_{ii})^{-2}$. 
Equivalently, $\bOmega = \text{diag}( [ \bH^u_\perp \circ \bH^u_\perp \circ \bI ]^\dagger [\bhvarepsilon_T \circ \bhvarepsilon_T])$. 
It then follows that 
\begin{align}
	\by'_N \bhV^\jack \by_N &= 
	 \bhbeta' \bOmega \bhbeta. 
\end{align} 
To arrive at \eqref{eq:hz.var.2}, 
we define $\bhSigma^\jack_T = \bOmega$. 
This corresponds to the EHW estimator with the jackknife correction. 
We derive \eqref{eq:vt.var.2} for $\bhbeta$ by applying the same arguments above. 
Now, we will evaluate the biases of $(\bhSigma^\jack_T, \bhSigma^\jack_N)$. 

(i) [\HZ~model] Let Assumptions~\ref{assump:hz.outcome}--\ref{assump:hz.errors} hold. 
We define $(\sigma^\hz_{iT})^2 = \Var(\varepsilon_{iT} | \by_N, \bY_0)$ for $i = 1, \dots, N_0$. 
Observe that 
\begin{align}
	\Ex[ ( \bH^u_\perp \circ \bH^u_\perp \circ \bI )^\dagger (\bhvarepsilon_T \circ \bhvarepsilon_T) | \by_N, \bY_0] 
	&= ( \bH^u_\perp \circ \bH^u_\perp \circ \bI )^\dagger  \Ex[ \bhvarepsilon_T \circ \bhvarepsilon_T | \by_N, \bY_0].
	\label{eq:var.1} 
\end{align} 
To evaluate \eqref{eq:var.1}, we follow the derivations of \eqref{eq:hz.exp.0} and \eqref{eq:inf.2.1} to obtain 
\begin{align}  
	\Ex[\bhvarepsilon_T | \by_N, \bY_0] &= \bH^u_\perp \bY_0 \balpha^* = \bzero \label{eq:var.2}
	\\ 
	\Cov(\bhvarepsilon_T | \by_N, \bY_0) &= \bH^u_\perp \bSigma^\hz_T \bH^u_\perp.  \label{eq:var.3}
\end{align} 
Recall that $\Ex[X^2] = \Var(X) + \Ex[X]^2$ for any random variable $X$. 
Thus, combining \eqref{eq:var.2} with \eqref{eq:var.3} gives 
\begin{align}
	\Ex[ \bhvarepsilon_T \circ \bhvarepsilon_T | \by_N, \bY_0] &= ( \bH^u_\perp \bSigma^\hz_T \bH^u_\perp \circ \bI ) \bone. \label{eq:var.4} 
\end{align} 
Let $\bhgamma = \Ex[ \bhvarepsilon_T \circ \bhvarepsilon_T | \by_N, \bY_0]$. 
By \eqref{eq:var.4}, the $\ell$th entry of $\bhgamma$ can be written as  
\begin{align}
	\hgamma_\ell = \sum_{j \neq \ell} (H^u_{j\ell})^2  (\sigma^\hz_{jT})^2 +  (1 - H^u_{\ell\ell})^2 (\sigma^\hz_{\ell T})^2,  
\end{align} 
where $H^u_{j\ell}$ is the $(j,\ell)$th entry of $\bH^u$. 
In turn, this allows us to rewrite \eqref{eq:var.4} as 
\begin{align}
	\bhgamma &= ( \bH^u_\perp \circ \bH^u_\perp ) \bSigma^\hz_T \bone. \label{eq:var.5} 
\end{align} 
Next, let $\bhzeta = (\bH^u_\perp \circ \bH^u_\perp \circ \bI )^{-1} \bhgamma$. 
Notice that the $\ell$th entry of $\bhzeta$ is given by 
\begin{align}
	\hzeta_\ell &= (\sigma^\hz_{\ell T})^2 + \sum_{j \neq \ell} \frac{(H^u_{\ell j})^2}{(1-H^u_{\ell \ell})^2} (\sigma^\hz_{jT})^2. 
\end{align} 
Therefore, $\text{diag}( \bhzeta ) = \bSigma^\hz_T + \bDelta^\hz$, where $\Delta^\hz_{\ell \ell} = \sum_{j \neq \ell} (\sigma^\hz_{jT})^2 (H^u_{\ell j})^2 (1 - H^u_{\ell \ell})^{-2}$ for $\ell=1, \dots, N_0$.  
Notice if $\max_\ell H^u_{\ell \ell} < 1$, then $(\bH^u_\perp \circ \bH^u_\perp \circ \bI)$ is nonsingular, i.e., the pseudo-inverse is precisely the inverse. 
In this situation, plugging the above into \eqref{eq:var.1} gives 
\begin{align} 
	\Ex[\bhSigma^\jack_T | \by_N, \bY_0]  
	&= \text{diag}\left((\bH^u_\perp \circ \bH^u_\perp \circ \bI )^{-1} \Ex[ \bhvarepsilon_T \circ \bhvarepsilon_T | \by_N, \bY_0] \right) 
	\\ &= \text{diag}\left((\bH^u_\perp \circ \bH^u_\perp \circ \bI )^{-1} \bhgamma \right)
	\\ &= \text{diag}(\bhzeta) 
	\\ &= \bSigma^\hz_T + \bDelta^\hz. \label{eq:hz.var_jack.0} 
\end{align} 
From this, we conclude that 
\begin{align}
	\Ex[\hv_0^{\hz, \jack} | \by_N, \bY_0]
	&= \bhbeta' \Ex[\bhSigma_T^\jack | \by_N, \bY_0] \bhbeta 
	\\ &= \bhbeta' (\bSigma^\hz_T + \bDelta^\hz) \bhbeta
	\\ &= v_0^\hz + \bhbeta' \bDelta^\hz \bhbeta, 
\end{align} 
where we note that $\bhbeta' \bDelta^\hz \bhbeta \ge 0$. 

\smallskip 
(ii) [\VT~model] Let Assumptions~\ref{assump:vt.outcome}--\ref{assump:vt.errors} hold.
Following the arguments above, we conclude 
$\Ex[\bhSigma^\jack_N |  \by_T, \bY_0] = \bSigma^\vt_N + \bGamma^\vt$, where 
$\Gamma^\vt_{\ell \ell} = \sum_{j \neq \ell} (\sigma^\vt_{Nj})^2 (H^v_{\ell j})^2(1-H^v_{\ell \ell})^{-2}$ for $\ell=1, \dots, T_0$. 
Thus, 
$\Ex[\hv_0^{\vt, \jack} | \by_T, \bY_0] = v^\vt_0 + \bhalpha' \bGamma^\vt \bhalpha$, 
where we note that $\bhalpha' \bGamma^\vt \bhalpha \ge 0$. 

\smallskip 
(ii) [Mixed~model] Let Assumptions~\ref{assump:mixed.outcome}--\ref{assump:mixed.errors} hold. 
We define $(\sigma^\mix_{iT})^2 = \Var(\varepsilon_{iT} | \bY_0)$ for $i = 1, \dots, N_0$ and
$(\sigma^\mix_{Nt})^2 = \Var(\varepsilon_{Nt} | \bY_0)$ for $t = 1, \dots, T_0$. 
Following the arguments that led to \eqref{eq:hz.var_jack.0}, we obtain 
$\Ex[\bhSigma^\jack_T |  \bY_0] = \bSigma^\mix_T + \bDelta^\mix$, 
where $\Delta^\mix_{\ell \ell} = \sum_{j \neq \ell} (\sigma^\mix_{jT})^2 (H^u_{\ell j})^2 (1 - H^u_{\ell \ell})^{-2}$ for $\ell=1, \dots, N_0$. 
Similarly, we obtain 
$\Ex[\bhSigma^\jack_N |  \bY_0] = \bSigma^\mix_N + \bGamma^\mix$, where 
$\Gamma^\mix_{\ell \ell} = \sum_{j \neq \ell} (\sigma^\mix_{Nj})^2 (H^v_{\ell j})^2(1-H^v_{\ell \ell})^{-2}$ for $\ell=1, \dots, T_0$. 
Applying Lemma~\ref{lemma:mixed.vest} then gives 
\begin{align} 
	&\Ex[\hv^{\mix, \jack}_0 | \bY_0] 
	\\ &= v^\mix_0 
	+ (\bH^u \bbeta^*)' \bDelta^\mix (\bH^u \bbeta^*) 
	+ (\bH^v \balpha^*)' \bGamma^\mix (\bH^v \balpha^*)
	+  \tr( \bY_0^\dagger \bDelta^\mix (\bY'_0)^\dagger \bGamma^\mix ).
\end{align}  
The proof is complete. 
\end{proof}

\subsection{Proof of Lemma~\ref{lemma:inference.3}} 

\begin{proof} 
We adopt the strategy of \cite{variance} to prove our desired result. 

\smallskip 
(ii) [\HZ~model] Let Assumptions~\ref{assump:hz.outcome}--\ref{assump:hz.errors} hold.
As in the proof of Lemma~\ref{lemma:inference.2}, we define $\bhvarepsilon_T = \bH^u_\perp \by_T$. 
Observe 
\begin{align}
	\Ex[ ( \bH^u_\perp \circ \bH^u_\perp )^{-1} (\bhvarepsilon_T \circ \bhvarepsilon_T) | \by_N, \bY_0] 
	&= ( \bH^u_\perp \circ \bH^u_\perp )^{-1}  \Ex[ \bhvarepsilon_T \circ \bhvarepsilon_T | \by_N, \bY_0].  \label{eq:var.11} 
\end{align} 
To evaluate \eqref{eq:var.11}, we plug in \eqref{eq:var.5} to obtain 
\begin{align}
	\Ex[ ( \bH^u_\perp \circ \bH^u_\perp )^{-1} (\bhvarepsilon_T \circ \bhvarepsilon_T) |\by_N, \bY_0] 
	&= ( \bH^u_\perp \circ \bH^u_\perp )^{-1}  ( \bH^u_\perp \circ \bH^u_\perp ) \bSigma^\hz_T \bone 
	= \bSigma^\hz_T \bone. \label{eq:var.12} 
\end{align} 
Plugging \eqref{eq:var.12} into \eqref{eq:var.11} yields  
\begin{align}
	\Ex[\bhSigma^\hrk_T | \by_N, \bY_0] 
	&= \text{diag} \left( (\bH^u_\perp \circ \bH^u_\perp )^{-1}  \Ex[ \bhvarepsilon_T \circ \bhvarepsilon_T | \by_N, \bY_0] \right) 
	= \bSigma^\hz_T. \label{eq:hz.var_hrk.0} 
\end{align} 
It then follows that $\Ex[ \hv_0^{\hz, \hrk} | \by_N, \bY_0] = v_0^\hz$. 

\smallskip 
(ii) [\VT~model] Let Assumptions~\ref{assump:vt.outcome}--\ref{assump:vt.errors} hold.
Following the same arguments as above, we conclude
$\Ex[\bhSigma^\hrk_N | \by_T, \bY_0] = \bSigma^\vt_N$ 
and 
$\Ex[ \hv_0^{\vt, \hrk} | \by_T, \bY_0] = v^\vt_0$. 

\smallskip 
(ii) [Mixed model] Let Assumptions~\ref{assump:mixed.outcome}--\ref{assump:mixed.errors} hold.
Following the arguments that led to \eqref{eq:hz.var_hrk.0}, we obtain 
$\Ex[\bhSigma^\hrk_T | \bY_0] = \bSigma^\mix_T$
and 
$\Ex[\bhSigma^\hrk_N | \bY_0] = \bSigma^\mix_N$. 
Applying Lemma~\ref{lemma:mixed.vest} then gives $\Ex[ \hv^{\mix, \hrk}_0 | \bY_0] = v^\mix_0$. 
The proof is complete. 
\end{proof} 

\subsection{Proof of Lemma~\ref{lemma:mixed.vest}} 

\begin{proof}
By linearity of expectations, 
\begin{align}
	\Ex[\hv^\mix_0 | \bY_0] &= \Ex[ \hv_0^\hz | \bY_0]
	+ \Ex[ \hv_0^\vt  | \bY_0]
	- \Ex[ \tr(\bY_0^\dagger \bhSigma_T (\bY'_0)^\dagger \bhSigma_N) | \bY_0]. \label{eq:finalvar.0} 
\end{align} 
We evaluate each term in \eqref{eq:finalvar.0}. 

Beginning with the first term, note that the randomness in $\bhSigma_T$ stems from $\bvarepsilon_T$ and $\bhbeta$ is deterministic given $(\bvarepsilon_N, \bY_0)$. 
As such, Assumptions~\ref{assump:mixed.outcome}--\ref{assump:mixed.errors} with Lemma~\ref{lemma:trace} gives  
\begin{align}
	\Ex[ \hv_0^\hz | \bY_0] 
	&= \Ex[ \bhbeta' \bhSigma_T \bhbeta | \bY_0] 
	\\ &= \Ex\left[ \Ex[ \bhbeta' \bhSigma_T \bhbeta | \bvarepsilon_N, \bY_0] | \bY_0 \right] 
	\\ &= \Ex\left[ \by'_N \bY^\dagger_0 \Ex[ \bhSigma_T | \bY_0] (\bY'_0)^\dagger \by_N | \bY_0 \right] 
	\\ &= \Ex\left[ (\bY'_0 \bbeta^* + \bvarepsilon_N) \bY^\dagger_0 \Ex[ \bhSigma_T | \bY_0] (\bY'_0)^\dagger (\bY'_0 \bbeta^* + \bvarepsilon_N) | \bY_0 \right] 
	\\ &=  (\bH^u \bbeta^*)' \Ex[ \bhSigma_T | \bY_0] (\bH^u \bbeta^*) 
	+ \tr( \bY_0^\dagger \Ex[ \bhSigma_T | \bY_0] (\bY'_0)^\dagger \bSigma^\mix_N). \label{eq:hz_var.final} 
\end{align} 
%
By an analogous argument, we derive 
\begin{align}
	\Ex[ \hv_0^\vt | \bY_0] 
	=  (\bH^v \balpha^*)' \Ex[ \bhSigma_N | \bY_0] (\bH^v \balpha^* )
	+ \tr( \bY_0^\dagger \bSigma^\mix_T  (\bY'_0)^\dagger \Ex[\bhSigma_N | \bY_0]). 
	\label{eq:vt_var.final}
\end{align} 
Finally, we use the linearity of the trace operator with Assumption~\ref{assump:mixed.errors} to obtain 
\begin{align}
	\Ex[ \tr(\bY_0^\dagger \bhSigma_T (\bY'_0)^\dagger \bhSigma_N) | \bY_0]
	&= \Ex\left[ \Ex[ \tr(\bY_0^\dagger \bhSigma_T (\bY'_0)^\dagger \bhSigma_N) | \bvarepsilon_N, \bY_0] | \bY_0 \right] 
	\\ &= \Ex\left[ \tr(\bY_0^\dagger \Ex[\bhSigma_T | \bY_0] (\bY'_0)^\dagger \bhSigma_N) | \bY_0 \right]
	\\ &= \tr(\bY_0^\dagger \Ex[\bhSigma_T | \bY_0] (\bY'_0)^\dagger \Ex[\bhSigma_N | \bY_0 ]). 
	\label{eq:trace_var.final}
\end{align} 
%
Putting everything together completes the proof. 
\end{proof}

\end{appendix} 

\end{document}